\renewcommand{\paragraph}[1]{\vspace{2mm}\noindent\textbf{#1}}
\newcommand{\diag}[1]{\text{diag}(#1)}
\newcommand{\trace}[1]{\text{tr}(#1)}
\newcommand{\spanning}[1]{\text{span}(#1)}
\newcommand{\image}[1]{\text{im}(#1)}
\newcommand{\dimension}[1]{\text{dim}(#1)}
\newcommand{\norm}[1]{\left\lVert#1\right\rVert}
\newcommand{\defequal}{\overset{\Delta}{=}}
\newcommand{\1}{{1}}
\DeclareMathOperator*{\argmin}{arg\,min}
\newcommand{\E}[1]{\mathbf{E}\hspace{-1px}\left[#1\right]}
\newcommand{\Rbb}{\mathbb{R}}
\renewcommand{\deg}{d} 
\newcommand{\sintheta}[2]{\sin{\hspace{0.0px}\Theta\hspace{-0.0px}\big( #1, #2 \big)}}
\renewcommand{\c}[1]{{#1}_{c}} 
\newcommand{\p}[1]{\widetilde{#1}}
\newcommand{\C}{\mathcal{C}}  
\newcommand{\F}{\mathcal{F}}  
\renewcommand{\P}{\mathcal{P}}  
\newcommand{\V}{\mathcal{V}}  
\renewcommand{\E}{\mathcal{E}}  
\renewcommand{\S}{\mathcal{S}}  
\newcommand{\kmeans}[3]{\mathcal{F}_{#1}\hspace{-0.5px}(#2,#3)}
\newtheoremstyle{style}
{6pt} 
{4pt} 
{\itshape} 
{} 
{\bfseries} 
{.} 
{.5em} 
{} 
\theoremstyle{style}
\newtheorem{theorem}{Theorem}[section]
\newtheorem{proposition}{Proposition}[section]
\newtheorem{definition}[theorem]{Definition}
\newtheorem{lemma}[theorem]{Lemma}
\newtheorem{property}[theorem]{Property}
\newtheorem{corollary}[theorem]{Corollary}
\newif\ifhideproofs
\newcommand\framename{Scheme}
\newcounter{framecnt}
\newcommand{\TitleFrame}[2]{%
    \fboxrule=\FrameRule
    \fboxsep=\FrameSep
    \vbox{\nobreak \vskip 0.2\FrameSep
        \rlap{\strut#1}\nobreak\nointerlineskip
        \vskip 0.05\FrameSep
        \noindent\fbox{#2}}
    }
\newenvironment{titledframe}[2][\FrameFirst@Lab\ (cont.)]{%
    \refstepcounter{framecnt}%
    \def\FrameFirst@Lab{\textbf{\framename\ \theframecnt:\ #2}}%
    \def\FrameCont@Lab{\textbf{#1}}%
    \def\FrameCommand##1{%
        \TitleFrame{\FrameFirst@Lab}{##1}}%
    \def\FirstFrameCommand##1{%
        \TitleFrame{\FrameFirst@Lab}{##1}}%
    \def\MidFrameCommand##1{%
        \TitleFrame{\FrameCont@Lab}{##1}}%
    \def\LastFrameCommand##1{%
        \TitleFrame{\FrameCont@Lab}{##1}}%
    \MakeFramed{\hsize\textwidth
    \advance\hsize -2\FrameRule
    \advance\hsize -2\FrameSep
    \FrameRestore}}%
   {\endMakeFramed}
\begin{document} 

\title{Graph reduction with spectral and cut guarantees}

\author{Andreas Loukas\\
       \'{E}cole Polytechnique F\'{e}d\'{e}rale Lausanne, \\
       Switzerland
}

\date{} 
\maketitle

\begin{abstract}
Can one reduce the size of a graph without significantly altering its basic properties? The graph reduction problem is hereby approached from the perspective of \emph{restricted spectral approximation}, a modification of the spectral similarity measure used for graph sparsification. This choice is motivated by the observation that restricted approximation carries strong spectral and cut guarantees, and that it implies approximation results for unsupervised learning problems relying on spectral embeddings. 

The paper then focuses on {coarsening}---the most common type of graph reduction. 
Sufficient conditions are derived for a small graph to approximate a larger one in the sense of restricted similarity. These findings give rise to nearly-linear algorithms that, compared to both standard and advanced graph reduction methods, find coarse graphs of improved quality, often by a large margin, without sacrificing speed.
\end{abstract}

\section{Introduction}

As graphs grow in size, it becomes pertinent to look for generic ways of simplifying their structure while preserving key properties. 
Simplified graph representations find profound use in the design of approximation algorithms, can facilitate storage and retrieval, and ultimately ease graph data analysis by separating overall trends from details.

There are two main ways to simplify graphs. 
First, one may reduce the number of edges, a technique commonly referred to as \emph{graph sparsification}.  
In a series of works, it has been shown that it is possible to find sparse graphs that approximate all pairwise distances~\citep{peleg1989graph}, every cut~\citep{karger1999random}, or every eigenvalue~\citep{spielman2011spectral}---respectively referred to as {spanners}, {cut sparsifiers} and {spectral sparsifiers}.
Spectral sparsification techniques in particular can yield computational benefits whenever the number of edges is the main bottleneck~\citep{Batson:2013:SSG:2492007.2492029}. Indeed, they form a fundamental component of nearly-linear time algorithms for linear systems involving symmetric diagonally dominant matrices~\citep{5671167,spielman2011graph}, and have found application to machine learning problems involving graph-structured data~\citep{calandriello2018improved}. 

Alternatively, one may seek to reduce directly the size of the graph, i.e., the number of its vertices $N$, by some form of vertex selection or re-combination scheme followed by re-wiring. 
This idea can be traced back to the mutligrid literature, that targets the acceleration of finite-element methods using cycles of multi-level coarsening, lifting and refinement. 
After being generalized to graphs, reduction methods have become pervasive in computer science and form a key element of modern graph processing pipelines, especially with regards to graph partitioning~\citep{hendrickson1995multi,karypis1998fast,kushnir2006fast,dhillon2007weighted,wang2014partition} and graph visualization~\citep{koren2002fast,hu2005efficient,walshaw2006multilevel}. 
In machine learning, reduction methods are used to create multi-scale representations of graph-structured data~\citep{lafon2006diffusion,gavish2010multiscale,shuman2016multiscale} and as a layer of graph convolutional neural networks~\citep{bruna2014spectral,defferrard2016convolutional,7974879,simonovsky2017dynamic,ardizzone2018analyzing}. In addition, being shown to solve linear systems in (empirically) linear time~\citep{koutis2011combinatorial,livne2012lean} as well as to approximate the Fiedler vector~\citep{urschel2014cascadic,gandhi2016improvement}, reduction methods have been considered as a way of accelerating graph-regularized problems~\citep{hirani2015graph,colley2017algebraic}. 
Some of their main benefits are the ability to deal with sparse graphs --graphs with at most $O(N\log{N})$ edges-- and to accelerate algorithms whose complexity depends on the number of vertices as well as edges.

Yet, in contrast to graph sparsification, there has been only circumstantial theory supporting graph reduction~\citep{moitra2011vertex,dorfler2013kron,loukas2018spectrally}. The lack of a concrete understanding of how different reduction choices affect fundamental graph properties is an issue: the significant majority of reduction algorithms in modern graph processing and machine learning pipelines have been designed based on intuition and possess no rigorous justification or provable guarantees.

\paragraph{A new perspective.}
My starting point in this work is \emph{spectral similarity}---a measure that has been proven useful in sparsification for determining how well a graph approximates another one. To render spectral similarity applicable to graphs of different sizes, I generalize it and restrict it over a subspace of size that is at most equal to the size of the reduced graph. I refer to the resulting definition as \emph{restricted spectral approximation}\footnote{Though similarly named, the definition of \emph{restricted spectral similarity} previously proposed by~\citep{loukas2018spectrally} concerns a set of vectors (rather than subspaces) and is significantly weaker than the one examined here.} (or restricted approximation for short). 
Despite being a statement about subspaces, restricted similarity has important consequences. 
It is shown that when the subspace in question is a principal eigenspace (this is a data agnostic choice where one wants to preserve the global graph structure), the eigenvalues and eigenspaces of the reduced graph approximate those of the original large graph. It is then a corollary that (\emph{i}) if the large graph has a good cut so does the smaller one; and (\emph{ii}) that unsupervised learning algorithms that utilize spectral embeddings, such as spectral clustering~\citep{von2007tutorial} and Laplacian eigenmaps~\citep{belkin2003laplacian}, can also work well when run on the smaller graph and their solution is lifted.

The analysis then focuses on \emph{graph coarsening}---a popular type of reduction where, in each level, reduced vertices are formed by contracting disjoint sets of connected vertices (each such set is called a \emph{contraction set}).
I derive sufficient conditions for a small coarse graph to approximate a larger graph in the sense of restricted spectral approximation. Crucially, this result holds for any number of levels and 
is independent of how the subspace is chosen. Though the derived bound is global, a decoupling argument renders it locally separable over levels and contraction sets, facilitating computation. The final bound can be interpreted as measuring the \emph{local variation} over each contraction set, as it involves the maximum variation of vectors supported on each induced subgraph.

These findings give rise to greedy nearly-linear time algorithms for graph coarsening, that I refer to as \emph{local variation algorithms}. %
Each such algorithm starts from a predefined family of candidate contraction sets. Even though any connected set of vertices may form a valid candidate set, I opt for small well-connected sets, formed for example by pairs of adjacent vertices or neighborhoods.
The algorithm then greedily\footnote{Even after decoupling, the problem of candidate set selection is not only NP-hard but also cannot be approximated to a constant factor in polynomial time (by reduction to the maximum-weight independent set problem). For the specific case of edge-based families, where one candidate set is constructed for each pair of adjacent vertices, the greedy iterative contraction can be substituted by more sophisticated procedures accompanied by improved guarantees.} contracts those sets whose local variation is the smallest. 
Depending on how the candidate family is constructed, the proposed algorithms obtain different solutions, trading off computational complexity for reduction.

\paragraph{Theoretical and practical implications.} Despite not providing a definitive answer on how much one may gain (in terms of reduction) for a given error, the analysis improves and generalizes upon previous works in a number of ways: 

\begin{itemize}
\item Instead of directly focusing on specific constructions, a general graph reduction scheme is studied featuring coarsening as a special case. As a consequence, the implications of restricted similarity are proven in a fairly general setting where specifics of the reduction (such as the type of graph representation and the reduction matrices involved) are abstracted. 

\item Contrary to previous results on the analysis of coarsening~\citep{loukas2018spectrally}, the analysis holds for multiple levels of reduction. Given that the majority of coarsening methods reduce the number of vertices by a constant factor at each level, a multi-level approach is necessary to achieve significant reduction. Along that line, the analysis also brings an intuitive insight: rather than taking the common approach of approximating at each level the graph produced by the previous level, one should strive to preserve the properties of the original graph at every level.

\item The proposed local variation algorithms are not heuristically designed, but greedily optimize (an upper bound of) the restricted spectral approximation objective. Despite the breadth of the literature that utilizes some form of graph reduction and coarsening, the overwhelming majority of known methods are heuristics---see for instance~\citep{safro2015advanced}. A notable exception is Kron reduction~\citep{dorfler2013kron}, an elegant method that aims to preserve the effective resistance distance. Compared to Kron reduction, the graph coarsening methods proposed here are accompanied by significantly stronger spectral guarantees (i.e., beyond interlacing), do not sacrifice the sparsity of the graph, and can ultimately be more scalable as they do not rely on the Schur complement of the Laplacian matrix. 
\end{itemize}

To demonstrate the practical benefits of local variation methods, the analysis is complemented with numerical results on representative graphs ranging from scale-free graphs to meshes and road networks. Compared to both standard~\citep{karypis1998fast} and advanced reduction methods~\citep{ron2011relaxation,livne2012lean,shuman2016multiscale}, the proposed methods yield small graphs of improved spectral quality, often by a large margin, without being much slower than naive heavy-edge matching. A case in point: when examining how close are the principal eigenvalues of the coarse and original graph for a reduction of 70\%, local variation methods attain on average 2.6$\times$ smaller error; this gain becomes 3.9$\times$ if one does not include Kron reduction in the comparison.

\section{Graph reduction and coarsening}
\label{sec:model}

The following section introduces graph reduction.
The exposition starts by considering a general reduction scheme. It is then shown how graph coarsening arises naturally if one additionally imposes requirements w.r.t. the interpretability of reduced variables.

\subsection{Graph reduction}

Consider a positive semidefinite (PSD) matrix $L \in \Rbb^{N\times N}$ whose sparsity structure captures the connectivity structure of a connected weighted symmetric graph $G = (\V, \E, W)$ of $N = |\V|$ vertices and $M = |\E|$ edges. 
In other words, $L(i,j) \neq 0$ only if $e_{ij}$ is a valid edge. 
Moreover, let $x$ be an arbitrary vector of size $N$.

I study the following generic reduction scheme: 

\begin{titledframe}{Graph reduction}\label{frm:matrix_reduction}
Commence by setting $  L_{0}=L$ and $x_{0}=x $
and proceed according to the following two recursive equations:
\begin{align}
    L_{\ell} = P_\ell^{\mp} L_{\ell-1} P_\ell^+ \quad \text{and} \quad x_\ell = P_{\ell} \, x_{\ell-1}, \notag 
\end{align}
where $P_\ell \in \Rbb^{N_\ell \times N_{\ell-1}}$ are matrices with more columns than rows, $\ell = 1, 2, \ldots, c$ is the level of the reduction, symbol ${\mp}$ denotes the transposed pseudoinverse, and $N_\ell$ is the dimensionality at level $\ell$ such that $N_0 = N$ and $N_{c} = n \ll N$.

\vspace{1.5mm}
Vector $x_c$ is lifted back to $\Rbb^N$ by recursion $\p{x}_{\ell-1} = P_{\ell}^+ \p{x}_{\ell}$, where $\p{x}_{c} = x_{c}.$
\end{titledframe}

Graph reduction thus involves a sequence of $c+1$ graphs  
\begin{align}
    G = G_0 = (\V_0, \E_0, W_0) \quad G_1 = (\V_1, \E_1, W_1) \quad \cdots \quad G_{c} = (\V_{c}, \E_{c}, W_{c})
\end{align}
of decreasing size
$N = N_0  >  N_{1} > \cdots > N_{c} = n$, where the sparsity structure of $L_{\ell}$ matches that of graph $G_{\ell}$, and each vertex of $G_{\ell}$ represents one of more vertices of $G_{\ell-1}$.

The {multi-level} design allows us to achieve high \emph{dimensionality reduction ratio} $$r = 1 - \frac{n}{N},$$ even when at each level the {dimensionality reduction ratio} $r_\ell = 1 - \frac{N_\ell}{N_{\ell-1}}$ is small. For instance, supposing that  $r_\ell \geq \varrho$ for each $\ell$, then $c = O( \log(n/N)/ \log(1-\varrho)) $ levels suffice to reduce the dimension to $n$. 

One may express the reduced quantities in a more compact form:  
\begin{align} 
    x_{c} &=  P x, \quad L_{c} = P^{\mp} L P^+  \quad \text{and} \quad \p{x} = \Pi x,
    %
    %
\end{align}
where $P = P_{c} \cdots P_{1}$, $P^+ = P_1^+ \cdots P_{c}^+$ and $\Pi = P^+ P$. For convenience, I drop zero indices and refer to a lifted vector as $\p{x} (= \p{x}_0)$.

The rational of this scheme is that vector $\p{x}$ should be the best approximation of $x$ given $P$ in an $\ell_2$-sense, which is a consequence of the following property:
\begin{property}
$\Pi$ is a projection matrix.  
\label{property:Pi_projection}
\end{property}
On the other hand, matrix $L$ is reduced such that $x_c^\top L_c x_c = \p{x}^\top L \p{x}$. 

Though introduced here for the reduction of sparse PSD matrices representing the similarity structure of a graph, Scheme~\ref{frm:matrix_reduction} can also be applied to any PSD matrix $L$. In fact, this and similar reduction schemes belong to the class of Nystr\"om methods and, to the extend of my knowledge, they were first studied in the context of approximate low-rank matrix approximation~\citep{halko2011finding,wang2013improving}. Despite the common starting point, interpreting $L$ and $L_c$ as sparse similarity matrices, as it is done here, incorporates a graph-theoretic twist to reduction that distinguishes from previous methods\footnote{To achieve low-rank approximation, matrix $P$ is usually built by sampling columns of $L$.}: the constructions that we will study are eventually more scalable and interpretable as they maintain the graph structure of $L$ after reduction. Obtaining guarantees is also significantly more challenging in this setting, as the involved problems end up being combinatorial in nature.

\subsection{Properties of reduced graphs} 

Even in this general context where $P$ is an arbitrary $n \times N$ matrix, certain handy properties can be proven about the relation between $L_c$ and $L$. 

To begin with, it is simple to see that the set of positive semidefinite matrices is closed under reduction.
\begin{property}
If $L$ is PSD, then so is $L_c$. 
\end{property}
The proof is elementary: if $L$ is PSD then there exists matrix $S$ such that $L = S^\top S$, implying that $L_c = P^\mp L P^+$ can also be written as $L_c = S_c^\top S_c$ if one sets $S_c = S P^+$.

I further consider the spectrum of the two matrices. Sort the eigenvalues of $L$ as $\lambda_1 \leq \lambda_2 \leq \ldots \leq \lambda_N$  and denote by $\p{\lambda}_k$ the $k$-th largest eigenvalue of $L_{c}$ and $\p{u}_k$ the associated eigenvector. 

It turns out that the eigenvalues $\p{\lambda}$ and $\lambda$ are interlaced. 

\begin{theorem}
For any $P$ with full-row rank and $k = 1, \ldots, n$, we have 
$$ \gamma_1 \, \lambda_k  \leq \tilde{\lambda}_k \leq \gamma_2 \,  \lambda_{k+N-n}$$ 
with $\gamma_1 =\lambda_1( (P P^\top)^{-1})$ and $\gamma_2 = \lambda_n((P P^\top)^{-1})$, respectively the smallest and largest eigenvalue of $(P P^\top)^{-1}$.
\label{theorem:interlacing}
\end{theorem}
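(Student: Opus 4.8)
The plan is to factor the reduction through an \emph{orthonormal} compression of $L$ and then apply two classical spectral-inequality results: the Poincar\'e separation theorem (Cauchy interlacing for compressions onto a subspace) and Ostrowski's congruence theorem (equivalently, a change of variables inside Courant--Fischer).

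First I would make $L_c$ explicit. Because $P$ has full row rank, the $n\times n$ matrix $B := PP^\top$ is symmetric positive definite; moreover $P^+ = P^\top B^{-1}$ and $P^\mp = (P^+)^\top = B^{-1}P$, so that
\[
L_c \;=\; P^\mp L P^+ \;=\; B^{-1} P L P^\top B^{-1} \;=\; B^{-1/2}\Big( B^{-1/2}P\,L\,P^\top B^{-1/2}\Big)B^{-1/2}.
\]
Set $Q := B^{-1/2}P \in \Rbb^{n\times N}$ and $\widehat{L} := Q L Q^\top$, so that $L_c = B^{-1/2}\widehat{L}\,B^{-1/2}$. The crucial observation is that $Q$ has orthonormal rows, $QQ^\top = B^{-1/2}(PP^\top)B^{-1/2} = I_n$; hence $\widehat{L}$ is exactly the compression of $L$ onto the $n$-dimensional subspace $\spanning{Q^\top}$, i.e.\ a principal $n\times n$ submatrix of the orthogonal conjugate $\bar Q L \bar Q^\top$ of $L$, where $\bar Q$ completes $Q$ to an $N\times N$ orthogonal matrix.

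Second, I would interlace the spectrum of $\widehat L$ with that of $L$ via the Poincar\'e separation theorem: with eigenvalues sorted in ascending order,
\[
\lambda_k \;\le\; \lambda_k(\widehat L) \;\le\; \lambda_{k+N-n}, \qquad k = 1,\dots,n.
\]
Third, I would pass from $\widehat L$ to $L_c$ through the congruence $L_c = B^{-1/2}\widehat L B^{-1/2}$. Writing the Rayleigh quotient as $x^\top L_c x / x^\top x = (y^\top \widehat L y / y^\top y)\cdot(y^\top y / y^\top B y)$ with $y := B^{-1/2}x$, using $y^\top y/y^\top B y \in [\lambda_1(B^{-1}), \lambda_n(B^{-1})] = [\gamma_1,\gamma_2]$ together with $\widehat L \succeq 0$, and using that the invertible map $x\mapsto B^{-1/2}x$ carries the family of $k$-dimensional subspaces of $\Rbb^n$ onto itself, the Courant--Fischer max--min and min--max characterizations of $\tilde\lambda_k = \lambda_k(L_c)$ give $\gamma_1\,\lambda_k(\widehat L) \le \tilde\lambda_k \le \gamma_2\,\lambda_k(\widehat L)$; this is exactly Ostrowski's theorem. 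Chaining the last two displays and using $\gamma_1,\gamma_2>0$ yields $\gamma_1\lambda_k \le \tilde\lambda_k \le \gamma_2\lambda_{k+N-n}$, as claimed.

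I do not expect a genuine obstacle: the argument collapses to a few lines once the factorization $L_c = B^{-1/2}(QLQ^\top)B^{-1/2}$ with $QQ^\top = I_n$ is spotted. The only points needing care are keeping the ascending/descending eigenvalue orderings of $L$ and $L_c$ consistent throughout (the statement's ``$k$-th largest'' should be read consistently with the ascending convention used for $\lambda$), and checking that the two cited classical theorems apply verbatim to the rectangular $Q$ — which they do, precisely because $QQ^\top = I_n$.
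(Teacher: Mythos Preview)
Your proof is correct and takes a genuinely different route from the paper. You factor $L_c = B^{-1/2}(QLQ^\top)B^{-1/2}$ with $B=PP^\top$ and $Q=B^{-1/2}P$ having orthonormal rows, then invoke Poincar\'e separation for $QLQ^\top$ followed by Ostrowski's congruence theorem for the outer $B^{-1/2}$ factors. The paper instead works entirely inside Courant--Fischer applied to $L_c$: it writes $x_c^\top L_c x_c / x_c^\top x_c = x^\top \Pi L \Pi x / x^\top P^\top P x$ and proves a small lemma showing $\lambda_1(PP^\top)\, x^\top \Pi x \le x^\top P^\top P x \le \lambda_n(PP^\top)\, x^\top \Pi x$, which lets it compare the constrained problem (over $\image{\Pi}$) to the unconstrained one directly; the upper bound is obtained by rerunning the argument on $-L$. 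Your approach is more modular and leans on named classical results, which makes it shorter once those theorems are granted. The paper's approach is more self-contained and, importantly, its intermediate lemma relating $P^\top P$ to $\Pi$ is reused verbatim in the proof of the sharper eigenvalue bound (Theorem~\ref{theorem:eigenvalues}), so it earns its keep elsewhere in the paper. Both arguments are essentially the same Courant--Fischer computation viewed through different lenses; neither yields a sharper constant than the other.
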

The above result is a generalization of the Cauchy interlacing theorem for the case that $P P^\top \neq I$. 
It also resembles the interlacing inequalities known for the normalized Laplacian (where the re-normalization is obtained by construction).
\citet{chen2004interlacing} showed in Theorem 2.7 of their paper that after contracting $N-n$ edges 
$
\lambda_{k - N + n} \leq \lambda_k \leq \lambda_{k + N - n}
$
for $k = 1, 2, \ldots, n$ and with $\lambda_{\ell} = 0$ when $\ell \leq 0$, resembling the upper bound above.  The lower bound is akin to that given in \citep[Lemma 1.15]{chung1997spectral}, again for the normalized Laplacian. Also notably, the inequalities are similar to those known for Kron reduction~\citep[Lemma 3.6]{dorfler2013kron}.

Theorem~\ref{theorem:interlacing} is particularly pessimistic as it has to hold for {every} possible $P$ and $L$. Much stronger results will be obtained later on by restricting the attention to constructions that satisfy additional properties (see Theorem~\ref{theorem:eigenvalues}).

One can also say something about the eigenvectors of $L_c$.
\begin{property}
For every vector for which $x = \Pi x$, one has
\begin{align}
    \c{x}^\top \c{L} \c{x} = x^\top \Pi L \Pi x = x^\top L x \quad \text{and} \quad \p{x} = \Pi x = x. \notag
\end{align} 
\end{property}
In other words, reduction maintains the action of $L$ of every vector that lies in the range of $\Pi$. Naturally, after lifting the eigenvectors of $\c{L}$ are included in this class. 

\subsection{Coarsening as a type of graph reduction} 
\label{subsec:graph_coarsening}

Coarsening is a type of graph reduction abiding to a set of constraints that render the graph transformation interpretable. 
More precisely, in coarsening one selects for each level $\ell$ a surjective (i.e., many-to-one) map $\varphi_{\ell}: \V_{\ell-1} \rightarrow \V_{\ell}$ between the original vertex set $\V_{\ell-1}$ and the smaller vertex set $\V_{\ell}$. 
I refer to the set of vertices $\V_{\ell-1}^{(r)} \subseteq \V_{\ell-1}$ mapped onto the same vertex $v_r'$ of $\V_{\ell}$ as a \textit{contraction set}: $$\V_{\ell-1}^{(r)} = \{v \in \V_{\ell-1} : \varphi_{\ell}(v) = v'_r \}$$ 
For a graphical depiction of contraction sets, see Figure~\ref{fig:example}.
I also constrain $\varphi_{\ell}$ slightly by requiring that the subgraph of $G_{\ell-1}$ induced by each contraction set $\V_{\ell-1}^{(r)}$ is connected. 

It is easy to deduce that {contraction sets} induce a partitioning of $\V_{\ell-1}$ into $N_{\ell}$ subgraphs, each corresponding to a single vertex of $\V_{\ell}$.
Every reduced variable thus corresponds to a small set of adjacent vertices in the original graph and coarsening basically amounts to a scaling operation. An appropriately constructed coarse graph aims to capture the global problem structure, whereas neglected details can be recovered in a local refinement phase.

Coarsening can be placed in the context of Scheme~\ref{frm:matrix_reduction} by restricting each $P_{\ell}$ to lie in the family of coarsening matrices, defined next:
\begin{definition}[Coarsening matrix]
Matrix $P_{\ell} \in \Rbb^{N_{\ell} \times N_{\ell-1}}$ is a coarsening matrix w.r.t. graph $G_{\ell-1}$ if and only if it satisfies the following two conditions:
\begin{itemize}
    \item[a.] It is a \textit{surjective mapping} of the vertex set, meaning that if $P_{\ell}(r,i) \neq 0$ then $P_{\ell}(r',i) = 0$ for every $r' \neq r$. 
    \item[b.] It is \textit{locality preserving}, equivalently, the subgraph of $G_{\ell-1}$ induced by the non-zero entries of $P_{\ell}(r,:)$ is connected for each $r$.
\end{itemize}
\label{def:coarsening_matrix}
\end{definition}

An interesting consequence of this definition is that, in contrast to graph reduction, with coarsening matrices the expensive pseudo-inverse computation can be substituted by simple transposition and re-scaling:
\begin{proposition}[Easy inversion]
The pseudo-inverse of a coarsening matrix $P_{\ell}$ is given by $P_\ell^+ =  P_\ell^\top D_{\ell}^{-2}$, where $D_{\ell}$ is the diagonal matrix with $D_\ell(r,r) = \| P_\ell(r,:)\|_2$. 
\label{proposition:pseudo-inverse}
\end{proposition}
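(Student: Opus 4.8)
The plan is to exploit the combinatorial structure forced by Definition~\ref{def:coarsening_matrix}(a). Because $P_\ell$ encodes a surjective map $\varphi_\ell$, condition~(a) says that each column of $P_\ell$ contains at most one nonzero entry; equivalently, the supports of two distinct rows $P_\ell(r,:)$ and $P_\ell(r',:)$ are disjoint. The first step is therefore to compute the Gram matrix $P_\ell P_\ell^\top$ entrywise: its $(r,r')$ entry is $\sum_i P_\ell(r,i)\,P_\ell(r',i)$, which vanishes whenever $r\neq r'$ by disjointness of the supports, and equals $\sum_i P_\ell(r,i)^2 = \|P_\ell(r,:)\|_2^2 = D_\ell(r,r)^2$ when $r=r'$. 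Hence $P_\ell P_\ell^\top = D_\ell^2$.

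The second step is to observe that $D_\ell$ is invertible. Surjectivity of $\varphi_\ell$ guarantees that every $v'_r\in\V_\ell$ has a nonempty preimage, so each row $P_\ell(r,:)$ has at least one nonzero entry and $D_\ell(r,r)=\|P_\ell(r,:)\|_2>0$. Thus $P_\ell P_\ell^\top = D_\ell^2$ is a diagonal matrix with strictly positive diagonal, in particular invertible, which is equivalent to saying that $P_\ell$ has full row rank.

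The final step is to invoke the standard closed form for the Moore--Penrose pseudoinverse of a full-row-rank matrix, namely $P_\ell^+ = P_\ell^\top (P_\ell P_\ell^\top)^{-1}$, and substitute $P_\ell P_\ell^\top = D_\ell^2$ to obtain $P_\ell^+ = P_\ell^\top D_\ell^{-2}$. If one prefers not to cite this formula, the identity can be checked directly by verifying the four Penrose conditions for $X := P_\ell^\top D_\ell^{-2}$: we have $P_\ell X = D_\ell^{-2} P_\ell P_\ell^\top = I$, so $P_\ell X$ is symmetric and $P_\ell X P_\ell = P_\ell$, $X P_\ell X = X$ hold trivially, while $X P_\ell = P_\ell^\top D_\ell^{-2} P_\ell$ is symmetric because it is of Gram type.

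There is essentially no hard step here; the one thing to be careful about is to make explicit \emph{why} $P_\ell P_\ell^\top$ is diagonal, since this is precisely the point at which the surjective-mapping condition~(a) is used. It is also worth remarking that the locality-preserving condition~(b) plays no role in this particular proposition.
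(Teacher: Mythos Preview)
Your argument is correct. The key observation---that distinct rows of $P_\ell$ have disjoint support, hence $P_\ell P_\ell^\top = D_\ell^2$---is exactly right, and from there the full-row-rank formula $P_\ell^+ = P_\ell^\top(P_\ell P_\ell^\top)^{-1}$ finishes the job. (A tiny slip: in your Penrose check you wrote $P_\ell X = D_\ell^{-2} P_\ell P_\ell^\top$ rather than $P_\ell P_\ell^\top D_\ell^{-2}$, but the matrices are diagonal and commute, so nothing is harmed.)

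The paper takes a somewhat different route. Rather than computing $P_\ell P_\ell^\top$, it works with the candidate projection $\Pi = P_\ell^\top D_\ell^{-2} P_\ell$ directly: after sorting rows by support, the disjoint-support property makes $\Pi$ block-diagonal, and each block is shown to be a rank-one projector, so $\Pi^2 = \Pi$ with rank $n$. One then concludes $\Pi = P_\ell^+ P_\ell$ and, using full row rank, $P_\ell^\top D_\ell^{-2} = P_\ell^+$. Your approach is arguably more direct: computing the Gram matrix $P_\ell P_\ell^\top$ and citing the standard right-inverse formula avoids the block-diagonal bookkeeping and the slightly implicit step of going from ``$\Pi$ is a rank-$n$ projection'' to ``$\Pi = P_\ell^+ P_\ell$''. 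The paper's route, on the other hand, makes the structure of $\Pi$ explicit, which is useful elsewhere in the analysis. Both hinge on the same combinatorial fact, and your remark that condition~(b) is irrelevant here is accurate.
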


Proposition~\ref{proposition:pseudo-inverse} carries two consequences. 
First, coarsening can be done in linear time. Each coarsening level (both in the forward and backward directions) entails multiplication by a sparse matrix. Furthermore, both $P_{\ell}$ and $P_{\ell}^+$ have only $N_{\ell-1}$ non-zero entries meaning that $O(N)$ and $O(M)$ operations suffice to coarsen respectively a vector and a matrix $L$ whose sparsity structure reflects the graph adjacency. 
In addition, the number of graph edges also decreases at each level. Denoting by $\mu_\ell$ the average number of edges of the graphs induced by contraction sets $\mathcal{V}^{(r)}_{\ell-1}$ for every $r$, 
then a quick calculation reveals that the coarsest graph has $m = M - \sum_{\ell=1}^c N_\ell \mu_\ell$ edges.   
If, for instance, at each level all nodes are perfectly contracted into pairs then $\mu_{\ell} = 2$ and $N_{\ell} = N / 2^{\ell}$, meaning that $m = M - 2 N (1 -2^{-c})$.

\subsection{Laplacian consistent coarsening}
\label{subsec:consistent_coarsening}

A further restriction that can be imposed is that coarsening is consistent w.r.t. the Laplacian form. Let $L$ be the combinatorial Laplacian of $G$ defined as
\begin{align}
L(i,j) = 
\begin{cases}
\deg_i & \mbox{if}\ i = j \\
-w_{ij} & \mbox{if}\ e_{ij} \in \E \\
0 & \mbox{otherwise},
\end{cases}
\notag 
\end{align}
where $w_{ij}$ is the weight associated with edge $e_{ij}$ and $\deg_i$ the weighted degree of $v_i$. The following lemma can then be proven:
 
\begin{proposition}[Consistency]
    Let $P$ be a coarsening matrix w.r.t. a graph with combinatorial Laplacian $L$. Matrix $\c{L} = P^\mp L P^+$ is a combinatorial Laplacian if and only if the non-zero entries of $P^+$ are equally valued.
    \label{proposition:proper_laplacian}
\end{proposition}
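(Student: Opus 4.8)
The plan is to reduce the claim to two concrete conditions on $\c{L}$ and to handle each direction through the quadratic form of the positive semidefinite Laplacian. Recall that a symmetric matrix $A\in\Rbb^{n\times n}$ is the combinatorial Laplacian of a weighted graph on $n$ vertices if and only if $A\mathbf{1}_n=\mathbf{0}$ and $A(r,q)\le 0$ for all $r\neq q$; the diagonal and positive semidefiniteness then come for free, with edge weights $-A(r,q)$. Since $\c{L}=(P^+)^\top L P^+$ is automatically symmetric and, by the earlier closure property, positive semidefinite, only these two conditions are at issue. To make the hypothesis usable I would first invoke Proposition~\ref{proposition:pseudo-inverse}: writing $P^+=P^\top D^{-2}$, column $i$ of $P$ (hence row $i$ of $P^+$) has exactly one non-zero entry, located in the row $r(i)$ indexing the contraction set $S_{r(i)}$ containing $i$ and equal to $P(r(i),i)/\norm{P(r(i),:)}_2^2$. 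Consequently the non-zero entries of $P^+$ are all equal to a common scalar $\alpha$ (necessarily $\alpha\neq 0$, as $P^+$ has full column rank) if and only if $P^+\mathbf{1}_n=\alpha\mathbf{1}_N$.

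For the ``if'' direction, assume $P^+\mathbf{1}_n=\alpha\mathbf{1}_N$. Then $\c{L}\mathbf{1}_n=\alpha\,(P^+)^\top L\mathbf{1}_N=\mathbf{0}$, because the combinatorial Laplacian of the connected graph $G$ annihilates the constant vector. For the off-diagonal sign condition I would note that $P(r,i)=\alpha\,\norm{P(r,:)}_2^2$ for every $i\in S_r$, so all non-zero entries of $P$ inherit the sign of $\alpha$; hence, for $r\neq q$, using $S_r\cap S_q=\emptyset$ and $L(i,j)=-w_{ij}\le 0$ off the diagonal,
\begin{align}
\c{L}(r,q)
&=\sum_{i\in S_r}\sum_{j\in S_q}P^+(i,r)\,L(i,j)\,P^+(j,q)\notag\\
&=\alpha^2\sum_{i\in S_r}\sum_{j\in S_q}L(i,j)=-\alpha^2\sum_{i\in S_r}\sum_{j\in S_q}w_{ij}\le 0. \notag
\end{align}
So $\c{L}$ is the combinatorial Laplacian of the coarse graph on $\V_c$ whose weight between $v_r'$ and $v_q'$ is $\alpha^2\sum_{i\in S_r,\,j\in S_q}w_{ij}$.

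For the ``only if'' direction, assume $\c{L}$ is a combinatorial Laplacian, so $\c{L}\mathbf{1}_n=\mathbf{0}$. Put $y=P^+\mathbf{1}_n$. Then $0=\mathbf{1}_n^\top\c{L}\mathbf{1}_n=(P^+\mathbf{1}_n)^\top L\,(P^+\mathbf{1}_n)=y^\top L y$, and since $L$ is positive semidefinite with null space $\spanning{\mathbf{1}_N}$ (as $G$ is connected), this forces $Ly=\mathbf{0}$ and therefore $y=\alpha\mathbf{1}_N$ for some $\alpha\neq 0$. By the equivalence established above, all non-zero entries of $P^+$ then equal $\alpha$, which is the desired conclusion.

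The step I expect to be the real obstacle is precisely this last upgrade: the identity $\c{L}\mathbf{1}_n=\mathbf{0}$ by itself only says that $Ly$ lies in $\ker P$, an $(N-n)$-dimensional subspace, and hence does not pin down $y$; it is the passage through the scalar quantity $y^\top Ly=0$, combined with positive semidefiniteness of $L$ and connectivity of $G$, that collapses this to $Ly=\mathbf{0}$ and thus to $y$ being constant. A secondary point requiring care is the sign bookkeeping in the ``if'' direction, where $\c{L}(r,q)\le 0$ relies on all non-zero entries of $P$ sharing the sign of $\alpha$, a fact read off from $P^+=P^\top D^{-2}$.
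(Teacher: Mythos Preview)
Your proof is correct. The necessity direction is essentially the same argument as the paper's: both pass from $\c{L}\mathbf{1}_n=0$ to the scalar identity $\mathbf{1}_n^\top \c{L}\mathbf{1}_n=(P^+\mathbf{1}_n)^\top L(P^+\mathbf{1}_n)=0$, and then use positive semidefiniteness together with connectivity of $G$ to force $P^+\mathbf{1}_n\in\spanning{\mathbf{1}_N}$; combined with the row structure of $P^+$ from Proposition~\ref{proposition:pseudo-inverse}, this yields equal non-zero entries.

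The sufficiency direction, however, differs from the paper. The paper writes $L=S^\top S$ and argues that $\c{S}=SA^\top$ (with $A=P^\mp$) is a bona fide incidence matrix by examining, edge by edge, what happens to a row $\delta_i-\delta_j$ under $A$: it either remains a difference of two diracs or becomes zero when both endpoints are contracted to the same vertex. You instead verify directly the two characterizing conditions of a combinatorial Laplacian---row sums vanish and off-diagonal entries are nonpositive---via the explicit formula $\c{L}(r,q)=\alpha^2\sum_{i\in S_r,\,j\in S_q}L(i,j)$. Your route is shorter and more elementary; the paper's route has the incidental benefit of making the edge structure of the coarse graph explicit. One small remark: your detour through ``all non-zero entries of $P$ share the sign of $\alpha$'' is unnecessary, since your computation of $\c{L}(r,q)$ already uses only $P^+(i,r)=P^+(j,q)=\alpha$ directly and never invokes the entries of $P$ itself.
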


It is a corollary of Propositions~\ref{proposition:pseudo-inverse} and~\ref{proposition:proper_laplacian} that in consistent coarsening, for any $v'_r \in \V_{\ell}$ and $v_i \in \V_{\ell-1}$ matrices $P_{\ell} \in \Rbb^{N_{\ell} \times N_{\ell-1}}$  and $P^+_{\ell} \in \Rbb^{N_{\ell-1} \times N_{\ell}}$ should be given by:
\begin{align}
    P_{\ell}(r,i) = 
    \begin{cases} 
   \frac{1}{|\V_{\ell-1}^{(r)}|} & \text{if } v_i \in \V_{\ell-1}^{(r)} \\
   0       & \text{otherwise} 
  \end{cases}
  \quad \text{and} \quad
[P_{\ell}^+](i,r) = 
    \begin{cases} 
   1 \hspace{6mm} & \text{if } v_i \in \V_{\ell-1}^{(r)} \\
   0       & \text{otherwise}, 
  \end{cases}
  \notag
\end{align}
where the contraction sets $\V_{\ell-1}^{(1)}, \ldots, \V_{\ell-1}^{(N_{\ell})}$ were defined in Section~\ref{subsec:graph_coarsening}. 

\begin{figure}[t]
	\centering
	\subfloat[Graph $G$]{
		\includegraphics[width=0.4\columnwidth,trim={0.0cm 0.38cm 0.0cm 0.5cm},clip]{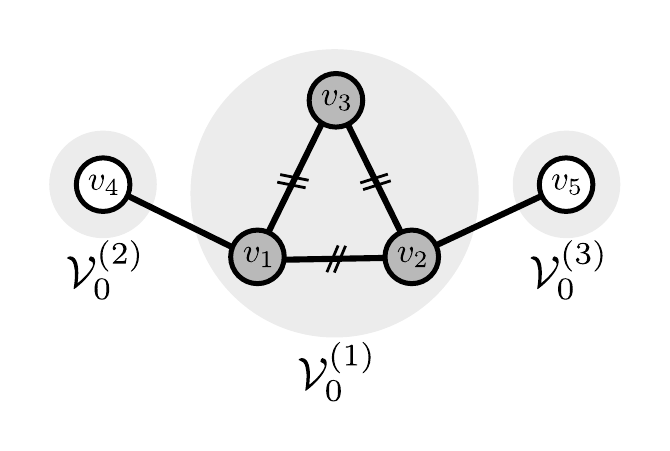}\label{fig:example1}
	}
	\hfill
	\subfloat[Coarse graph $\c{G}$]{
		\includegraphics[width=0.4\columnwidth,trim={0.0cm 0.10cm 0.0cm 0.6cm},clip]{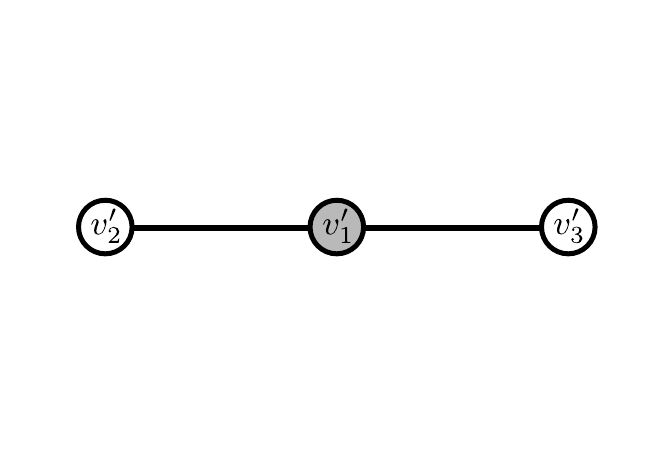}\label{fig:example2}
	}
	\vspace{2mm}
	\caption{Toy coarsening example. Grey discs denote contraction sets. The first three vertices of $G$ forming contraction set $\mathcal{V}_0^{1}$ are contracted onto vertex $v_1'$. All other vertices remain unaffected. \label{fig:example}}
\end{figure}

The toy graph shown in Figure~\ref{fig:example1} illustrates an example where the gray vertices $\mathcal{V}_0^{(1)} = \{ v_1, v_2, v_3\}$ of $G$ are coarsened into vertex $v_1'$, as shown in Figure~\ref{fig:example2}. The main matrices I have defined are
\begin{align*}
P_1 = 
\begin{bmatrix} 
\sfrac{1}{3} & \sfrac{1}{3} & \sfrac{1}{3} & 0 & 0 \\ 
0 & 0 & 0  & 1 & 0 \\
0 & 0 & 0 & 0 & 1 
\end{bmatrix}
\ 
\ 
P_1^+ = 
\begin{bmatrix} 
1 & 0 & 0  \\ 
1 & 0 & 0  \\
1 & 0 & 0  \\ 
0 & 1 & 0  \\ 
0 & 0 & 1  
\end{bmatrix}
\ \ 
\Pi = P_1^+ P_1 = 
\begin{bmatrix} 
\sfrac{1}{3} & \sfrac{1}{3} & \sfrac{1}{3} & 0 & 0  \\ 
\sfrac{1}{3} & \sfrac{1}{3} & \sfrac{1}{3} & 0 & 0  \\
\sfrac{1}{3} & \sfrac{1}{3} & \sfrac{1}{3} & 0 & 0  \\ 
0 & 0 & 0 & 1 & 0  \\ 
0 & 0 & 0 & 0 & 1  
\end{bmatrix}
\end{align*}
and coarsening results in
\begin{align}
\c{L} = P_1^\mp L P_1^+ =
\begin{bmatrix} 
2 & -1 & -1 \\ 
-1  & 1 & 0 \\
-1  & 0 & 1 
\end{bmatrix} 
\quad 
\c{x} = P_1 x = 
\begin{bmatrix} 
(x(1) + x(2) + x(3))/3 \\ 
x(4) \\
x(5)  
\end{bmatrix}.
\notag 
\end{align}
Finally, when lifted $\c{x}$ becomes
\begin{align*}
\p{x} = P_1^+ \c{x} = 
\begin{bmatrix} 
\sfrac{\left(x(1) + x(2) + x(3)\right)}{3} \\ 
\sfrac{\left(x(1) + x(2) + x(3)\right)}{3} \\ 
\sfrac{\left(x(1) + x(2) + x(3)\right)}{3} \\ 
x(4) \\
x(5)  
\end{bmatrix}.
\notag 
\end{align*}

Since vertices $v_4$ and $v_5$ are not affected, the respective contraction sets $\mathcal{V}_0^{(2)}$ and $\mathcal{V}_0^{(3)}$ are singleton sets.

\subsection{Properties of Laplacian consistent coarsening} 

Due to its particular construction, Laplacian consistent coarsening is accompanied by a number of interesting properties. We lay out three in the following:

\emph{Cuts.} To begin with, weights of edges in $G_c$ correspond to weights of cuts in $G$. 

\begin{property}
For any level $\ell$, the weight $W_\ell(r,q)$ between vertices $v'_r , v'_q \in \V_{\ell}$ is equal to 
$$W_\ell(r,q) = \sum_{v_i \in \mathcal{S}_\ell^{(r)}} \sum_{v_j \in \mathcal{S}_\ell^{(q)} } w_{ij},$$
where $\S_\ell^{(r)} = \{ v_i \in \V : \varphi_\ell \circ \cdots \circ \varphi_1(v_i) = v'_r \} \subset \V$ contains all vertices of $G$ contracted onto $v'_r \in \V_\ell
$. 
\label{claim:cuts}
\end{property}

In the toy example, there exists a single edge of unit weight connecting vertices in $\mathcal{V}_0^{(1)}$ and $\mathcal{V}_0^{(2)}$, and as such the weight between $v_1'$ and $v_2'$ is equal to one.

\emph{Eigenvalue interlacing.} For a single level of Laplacian consistent coarsening, matrix $P P^\top = P_1 P_1^\top$ is given by $\diag{1/|\V_{0}^{(1)}|, \ldots, 1/|\V_{0}^{(N_1)}|}$, implying that the multiplicative constants in Theorem~\ref{theorem:interlacing} are:
$$\gamma_1 = \min_{v_i \in \V} |\V_{0}^{ \varphi_1(v_i)}| \geq 1 \quad \text{and} \quad \gamma_2 = \max_{v_i \in \V} |\V_{0}^{\varphi_1(v_i)}|.$$ 
Above, $v'_r = \varphi_1(v_i) \in \V_1$ is the vertex to which $v_i$ is mapped to and the set $\V_{0}^{\varphi_1(v_i)}$ contains all vertices also contracted to $v'_r$. Thus in the toy example, $\lambda_k \leq \tilde{\lambda}_k \leq 3 \lambda_{k + 2}$ for every $k \leq 3$.
If multiple levels are utilized these terms become dependent on the sequence of contractions. 
To obtain a general bound let $\varphi_{1}^{\ell}(v_i)= \varphi_{\ell} \circ \cdots \circ \varphi_{1} (v_i) \in \V_{\ell}$ be the vertex onto which $v_i \in \V$ is contracted to in the $\ell$-th level. 
\begin{property}
If $L_c$ is obtained from $L$ by Laplacian consistent coarsening, then
$$ \gamma_1 \geq \min_{v_i \in \V } \, \prod_{\ell=1}^c |\V_{\ell-1}^{\varphi_{1}^{\ell}(v_i)}| \geq 1 \quad \text{and} \quad  \gamma_2 \leq \max_{v_i \in \V } \, \prod_{\ell=1}^c |\V_{\ell-1}^{\varphi_{1}^{\ell}(v_i)}|, $$
with the set $\V_{\ell-1}^{\varphi_{1}^{\ell}(v_i)} $ containing all vertices of $\V_{\ell-1}$ that are contracted onto $\varphi_{1}^{\ell}(v_i)$.
\end{property}
Though not included the proof follows from the diagonal form of $P_{\ell} \cdots P_{1} P_1^\top \cdots P_{\ell}^\top$ and the special row structure of each $P_{\ell}$ for every $\ell$. The dependency of $\tilde{\lambda}_k$ on the size of contraction sets can be removed either by enforcing at each level that all contraction sets have identical size and dividing the graph weights by that size, or by re-normalizing each $P_{\ell}$ such that $P_{\ell}^\top = P_{\ell}^+$. The latter approach was used by~\cite{loukas2018spectrally} but is not adopted here as it results in $L_c$ losing its Laplacian form.

\emph{Nullspace.} Finally, as is desirable, the structure of the nullspace of $L$ is preserved both by coarsening and lifting:
\begin{property}
If $P$ is a (multi-level) Laplacian consistent coarsening matrix, then
\begin{align}
    P \1_{N} = \1_{n} \quad \text{and} \quad P^+ \1_{n} = \1_{N}, \notag
\end{align}
where the subscript indicates the dimensionality of the constant vector. 
\end{property}
Thus, we can casually ignore vectors parallel to the constant vector in our analysis.

\section{Restricted notions of approximation}

This section aims to formalize how should a graph be reduced such that the structure of the reduced and original problems should be as close as possible. Inspired by work in graph sparsification, I introduce a measure of approximation that is tailored to graph reduction. The new definition implies strong guarantees about the distance of the original and coarsened spectrum and gives conditions such that the cut structure of a graph is preserved by coarsening.

\subsection{Restricted spectral approximation}
\label{subsec:subspace_similarity}

One way to define how close a PSD matrix $L$ is to its reduced counterpart 
is to establish an isometry guarantee w.r.t. the following induced semi-norms:
\begin{align}
    \norm{x}_L = \sqrt{x^\top L x} \quad \text{and} \quad \norm{\c{x}}_{\c{L}} = \sqrt{\c{x}^\top \c{L} \c{x}} \notag
\end{align}

Ideally, one would hope that there exists $\epsilon > 0$ such that
\begin{align}
    (1 - \epsilon) \, \norm{x}_L \leq \norm{\c{x}}_{\c{L}} \leq (1 + \epsilon) \, \norm{x}_L
\label{eq:spectral_similarity2}
\end{align}
for all $ x \in \Rbb^N$. 

If the equation holds, matrices $\c{L}$ and $L$ are called $\epsilon$-similar.
The objective of constructing sparse spectrally similar graphs is the main idea of spectral graph sparsifiers, a popular method for accelerating the solution of linear systems involving the Laplacian. In addition, spectral similarity carries a number of interesting consequences that are of great help in the construction of approximation algorithms: the eigenvalues and eigenvectors of two similar graphs are close and, moreover, all vertex partitions have similar cut size.  

In contrast to graph sparsification however, since here the dimension of the space changes it is impossible to satisfy~\eqref{eq:spectral_similarity2} for {every} $x \in \Rbb^N$ unless one trivially sets $\epsilon = 1$ (this follows by a simple rank argument). 
To carry out a meaningful analysis, one needs to consider a subspace of dimension $k \leq n$ and aim to approximate the behavior of $L$ solely within it. 

I define the following generalization of spectral similarity:

\begin{definition}[Restricted spectral approximation]
	Let $\mathbf{R}$ be a $k$-dimensional subspace of $\mathbb{R}^N$. 
Matrices $L_c$ and $L$ are $(\mathbf{R}, \epsilon)$-similar if there exists an $\epsilon\geq 0$ such that
	\begin{align}
	\norm{x - \tilde{x}}_{L} \leq \epsilon \norm{x}_L, \notag \quad \text{for all} \quad x \in \mathbf{R}, 
	\end{align}
	where $\tilde{x}= P^+ P x$.
\end{definition}
In addition to the restriction on $\mathbf{R}$, the above definition differs from~\eqref{eq:spectral_similarity2} in the way error is measured. In fact, it asserts a property that is slightly stronger than an approximate isometry w.r.t. a semi-norm within $\mathbf{R}$. The strengthening of the notion of approximation deviates from the restricted spectral similarity property proposed by~\cite{loukas2018spectrally} and is a key ingredient in obtaining multi-level bounds. Nevertheless, one may recover a restricted spectral similarity-type guarantee as a direct consequence: 
\begin{corollary}
	If $L_c$ and $L$ are $(\mathbf{R},\epsilon)$-similar, then  
	$$
	(1 - \epsilon) \, \norm{x}_L \leq \norm{\c{x}}_{\c{L}} \leq (1 + \epsilon) \, \norm{x}_L,  \notag \quad \text{for all} \quad x \in \mathbf{R}.
	$$
	\label{corollary:restricted_isometry}
\end{corollary}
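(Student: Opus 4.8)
The plan is to reduce the claimed two-sided bound to a single application of the triangle inequality for the semi-norm $\norm{\cdot}_L$. First I would record the two identities already available in the excerpt: since $x \in \mathbf{R} \subseteq \Rbb^N$, the lifted vector is $\p{x} = \Pi x = P^+ P x = P^+ \c{x}$ with $\c{x} = Px$; and by the reduction identity $\c{x}^\top \c{L} \c{x} = \p{x}^\top L \p{x}$ stated right after Property~\ref{property:Pi_projection}, we get $\norm{\c{x}}_{\c{L}} = \norm{\p{x}}_L$. Hence it suffices to show $(1-\epsilon)\norm{x}_L \leq \norm{\p{x}}_L \leq (1+\epsilon)\norm{x}_L$ for all $x \in \mathbf{R}$.

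Next, because $L$ is PSD, $\norm{\cdot}_L$ is a genuine semi-norm and therefore obeys the triangle inequality. Writing $\p{x} = x - (x - \p{x})$ and applying it both ways gives
\begin{align}
\norm{\p{x}}_L &\leq \norm{x}_L + \norm{x - \p{x}}_L, \notag \\
\norm{\p{x}}_L &\geq \norm{x}_L - \norm{x - \p{x}}_L. \notag
\end{align}
Now I would invoke the hypothesis of $(\mathbf{R},\epsilon)$-similarity, namely $\norm{x - \p{x}}_L \leq \epsilon \norm{x}_L$ for every $x \in \mathbf{R}$, to replace $\norm{x - \p{x}}_L$ by $\epsilon\norm{x}_L$ in both displays. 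This yields $\norm{\p{x}}_L \leq (1+\epsilon)\norm{x}_L$ and $\norm{\p{x}}_L \geq (1-\epsilon)\norm{x}_L$, and combining with $\norm{\c{x}}_{\c{L}} = \norm{\p{x}}_L$ completes the proof.

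There is essentially no hard step here: the corollary is deliberately a soft consequence of the definition, and the only things to be careful about are (i) that the error in the definition is measured with $\p{x} = P^+Px$ (the same quantity appearing in $\norm{\c x}_{\c L}$ after using the reduction identity), and (ii) that $\norm{\cdot}_L$ being only a semi-norm (not a norm) is harmless, since the triangle inequality still holds for semi-norms. If one wanted to be fully explicit about (i), one could also note $\norm{\c{x}}_{\c{L}}^2 = \c{x}^\top P^\mp L P^+ \c{x} = (P^+\c{x})^\top L (P^+\c{x}) = \norm{\p{x}}_L^2$, which is just the displayed identity spelled out.
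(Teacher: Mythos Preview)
Your proof is correct and essentially identical to the paper's: the paper writes $\norm{x}_L=\norm{Sx}_2$ and $\norm{\c{x}}_{\c{L}}=\norm{SP^+Px}_2$ via $L=S^\top S$, then applies the reverse triangle inequality $|\norm{Sx}_2-\norm{S\p{x}}_2|\leq\norm{Sx-S\p{x}}_2=\norm{x-\p{x}}_L\leq\epsilon\norm{x}_L$, which is exactly your argument with the $\norm{\cdot}_L$ semi-norm spelled out through $S$.
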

\begin{proof}
Let $S$ be defined such that $L = S^\top  S$. By the triangle inequality:
\begin{align}
	|\norm{x}_L - \norm{\c{x}}_{\c{L}}| = |\norm{Sx} - \norm{S P^+ P x}_2| \leq \| Sx - S P^+ P x \|_2 = \| x - \tilde{x} \|_L \leq \epsilon \norm{x}_L, \notag 
\end{align}
which is equivalent to the claimed relation.
\end{proof}
%


Clearly, if $L_c$ and $L$ are $(\mathbf{R}, \epsilon)$-similar then they are also $({\mathbf{R}}', \epsilon')$-similar, where ${\mathbf{R}}'$ is any subspace of $\mathbf{R}$ and $\epsilon'\geq \epsilon$. As such, results about large subspaces and small $\epsilon$ are the most desirable.

It will be shown in Sections~\ref{subsec:spectrum} and~\ref{subsec:cuts} that the above definition implies restricted versions of the spectral and cut guarantees provided by spectral similarity. For instance, instead of attempting to approximate the entire spectrum as done by spectral graph sparsifiers, here one can focus on a subset of the spectrum with particular significance.

\subsection{Implications for the graph spectrum}
\label{subsec:spectrum}

One of the key benefits of restricted spectral approximation is that it implies a relation between the spectra of matrices $L$ and $\c{L}$ that goes beyond interlacing (see Theorem~\ref{theorem:interlacing}). 

To this effect, consider the smallest $k$ eigenvalues and respective eigenvectors and define the following matrices:
\begin{align}
     U_k \in \Rbb^{N\times k} = [u_1, u_2, \ldots, u_k] \quad \text{and} \quad \Lambda_k = \diag{\lambda_1, \lambda_2, \ldots, \lambda_k} \notag
\end{align}
As I will show next, ensuring that $\epsilon$ in Proposition~\ref{proposition:restricted_similarity} is small when $\mathbf{R} = \mathbf{U}_k \defequal \spanning{U_k}$ suffices to guarantee that the first $k$ eigenvalues and eigenvectors of $L$ and $L_c$ are aligned.

The first result concerns eigenvalues.

\begin{theorem}[Eigenvalue approximation]
If $L_c$ and $L$ are $(\mathbf{U}_k,\epsilon_k)$-similar, then 
\begin{align}
\gamma_1 \, \lambda_k \leq \p{\lambda}_k 
&\leq \gamma_2 \, \frac{ \left(1 + \epsilon_k\right)^2}{1 - \epsilon_k^2 (\lambda_k/\lambda_2) } \, \lambda_k \notag,
\end{align}
whenever $\epsilon_k^2 < \lambda_2/\lambda_k$.
\label{theorem:eigenvalues}
\end{theorem}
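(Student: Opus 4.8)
The plan is to use the variational (Courant–Fischer) characterization of eigenvalues on both $L$ and $\c{L}$, leveraging the $(\mathbf{U}_k,\epsilon_k)$-similarity to control the Rayleigh quotient of $\c{L}$ on a cleverly chosen $k$-dimensional test subspace. For the lower bound $\gamma_1\lambda_k\le\p{\lambda}_k$, I expect nothing new is needed: this is exactly the left inequality of Theorem~\ref{theorem:interlacing}, which holds for any full-row-rank $P$, so I would simply invoke it. The real work is the upper bound.

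For the upper bound, I would proceed as follows. By Courant–Fischer, $\p{\lambda}_k = \min_{\dimension{\c{\mathbf{S}}}=k}\max_{\0\neq\c{x}\in\c{\mathbf{S}}} \frac{\c{x}^\top\c{L}\c{x}}{\c{x}^\top\c{x}}$. The natural test subspace is $\c{\mathbf{S}} = P\,\mathbf{U}_k$, i.e. the image under $P$ of the span of the first $k$ eigenvectors of $L$; one should first check this is genuinely $k$-dimensional (it is, since $P$ has full row rank and $\mathbf{U}_k$ intersects $\nullspace{P}$ trivially — or more carefully, because restricted approximation with $\epsilon_k<1$ forces $P$ to be injective on $\mathbf{U}_k$). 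Every $\c{x}\in\c{\mathbf{S}}$ can be written $\c{x}=Px$ with $x\in\mathbf{U}_k$, and then $\p{x}=\Pi x$. Using $\c{x}^\top\c{L}\c{x}=\p{x}^\top L\p{x}=\norm{\p{x}}_L^2$ (the defining identity of the reduction, Property before Section~2.2), I would bound the numerator: writing $\p{x}=x-(x-\p{x})$ and using the triangle inequality together with $\norm{x-\p{x}}_L\le\epsilon_k\norm{x}_L$ gives $\norm{\p{x}}_L\le(1+\epsilon_k)\norm{x}_L\le(1+\epsilon_k)\sqrt{\lambda_k}\norm{x}_2$, since $x\in\mathbf{U}_k$ implies $\norm{x}_L^2\le\lambda_k\norm{x}_2^2$. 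For the denominator I need a lower bound on $\c{x}^\top\c{x}=\norm{Px}_2^2$ in terms of $\norm{x}_2^2$; this is where $\gamma_2=\lambda_n((PP^\top)^{-1})$ enters, via $\norm{Px}_2^2\ge\lambda_1(PP^\top)\norm{x}_2^2 = \gamma_2^{-1}\norm{x}_2^2$ — wait, I must be careful with which eigenvalue, but morally $\norm{Px}_2^2\ge \gamma_2^{-1}\norm{x}_2^2$ after accounting for $x\perp\nullspace{P}$. Hmm, actually $x$ need not be orthogonal to $\nullspace P$; instead I would lower-bound $\norm{Px}_2$ using $\norm{\p x}_2 = \norm{P^+Px}_2$ and the fact that $P^+ = P^\top(PP^\top)^{-1}$, giving $\norm{Px}_2 \ge \sqrt{\gamma_2^{-1}}\,\norm{\p x}_2$ — no. Let me restructure: I would instead work directly with $\c x = Px$ and use $\norm{\c x}_2^2 = x^\top P^\top P x$ together with $P^+P = \Pi$ and $P^\top(PP^\top)^{-1}P = \Pi$... the cleanest route is probably to note $\c x^\top \c x \ge \lambda_{\min}^+(PP^\top) \norm{\Pi_{\mathrm{row}} x}^2$ and that $\norm{\p x}_2 \le \sqrt{\gamma_2}\,\norm{\c x}_2$, hence $\c x^\top \c x \ge \gamma_2^{-1}\norm{\p x}_2^2$.

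Putting the pieces together yields a bound of the shape $\p{\lambda}_k \le \gamma_2(1+\epsilon_k)^2\lambda_k \cdot \frac{\norm{x}_2^2}{\norm{\p x}_2^2}$, so it remains to lower-bound $\norm{\p x}_2^2$ relative to $\norm{x}_2^2$ for $x\in\mathbf{U}_k$. Here is the one genuinely delicate step, and I expect it to be the main obstacle: I would argue $\norm{x-\p x}_2$ is small by relating the $\ell_2$ norm to the $L$-norm using the spectral gap. Concretely, $x-\p x = (I-\Pi)x$ lies in $\nullspace\Pi^{\perp}$-complement... and one expects $\norm{(I-\Pi)x}_L \ge \sqrt{\lambda_2}\,\norm{(I-\Pi)x}_2$ provided $(I-\Pi)x$ has no component along $\1$ (which holds since $\Pi\1=\1$ for consistent coarsening, and $x\perp\1$ WLOG as $\lambda_1$'s eigenvector is $\1$). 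Combined with $\norm{(I-\Pi)x}_L=\norm{x-\p x}_L\le\epsilon_k\norm{x}_L\le\epsilon_k\sqrt{\lambda_k}\norm{x}_2$, this gives $\norm{x-\p x}_2\le\epsilon_k\sqrt{\lambda_k/\lambda_2}\,\norm{x}_2$, hence $\norm{\p x}_2\ge(1-\epsilon_k\sqrt{\lambda_k/\lambda_2})\norm{x}_2$ whenever $\epsilon_k^2<\lambda_2/\lambda_k$. Substituting, $\frac{\norm{x}_2^2}{\norm{\p x}_2^2}\le\frac{1}{(1-\epsilon_k\sqrt{\lambda_k/\lambda_2})^2}$, and multiplying numerator and denominator appropriately — using $(1-a)^2 \le \dots$ — I should recover the stated $\frac{(1+\epsilon_k)^2}{1-\epsilon_k^2(\lambda_k/\lambda_2)}$ after the identity $(1-\epsilon_k\sqrt{\lambda_k/\lambda_2})(1+\epsilon_k\sqrt{\lambda_k/\lambda_2}) = 1-\epsilon_k^2(\lambda_k/\lambda_2)$; getting the cross term to cancel cleanly (rather than ending with the cruder $(1-\epsilon_k\sqrt{\lambda_k/\lambda_2})^{-2}$) is the bookkeeping subtlety to watch, and likely requires taking the test subspace/test vector to be the worst-case $x$ and arguing the min over $k$-subspaces more carefully, or a slightly different grouping of terms.
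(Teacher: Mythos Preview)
Your approach is essentially identical to the paper's: the lower bound is the left inequality of Theorem~\ref{theorem:interlacing}; for the upper bound, Courant--Fischer with test subspace $P\mathbf{U}_k$, the numerator bounded via $\norm{\Pi x}_L \leq (1+\epsilon_k)\norm{x}_L \leq (1+\epsilon_k)\sqrt{\lambda_k}\,\norm{x}_2$, the factor $\gamma_2$ extracted from $\norm{P x}_2^2 \geq \gamma_2^{-1}\norm{\Pi x}_2^2$ (this is the content of the paper's Lemma~\ref{lemma:projection}, and your route via $\norm{P^+ x_c}_2 \leq \sqrt{\gamma_2}\,\norm{x_c}_2$ is equivalent), and finally $\norm{\Pi^\bot x}_2$ controlled through $\norm{\Pi^\bot x}_L \geq \sqrt{\lambda_2}\,\norm{\Pi^\bot x}_2$, valid because $\Pi^\bot x \perp u_1$.

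The single point where you fall short of the stated bound is exactly the one you flagged. You use the triangle inequality $\norm{\Pi x}_2 \geq \norm{x}_2 - \norm{\Pi^\bot x}_2$, which yields the cruder denominator $(1 - \epsilon_k\sqrt{\lambda_k/\lambda_2})^2$. The fix is simpler than you suspect: $\Pi = P^+ P$ is not merely a projection but an \emph{orthogonal} projection (idempotent and symmetric---symmetry of $A^+A$ is one of the Moore--Penrose defining conditions), so the Pythagorean identity $\norm{\Pi x}_2^2 = \norm{x}_2^2 - \norm{\Pi^\bot x}_2^2$ holds exactly. Plugging in $\norm{\Pi^\bot x}_2^2 \leq \epsilon_k^2 (\lambda_k/\lambda_2)\norm{x}_2^2$ gives $\norm{\Pi x}_2^2 \geq (1 - \epsilon_k^2\lambda_k/\lambda_2)\,\norm{x}_2^2$ directly---no clever cancellation or regrouping is needed. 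This is precisely how the paper closes the argument.
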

Crucially, the bound depends on $\lambda_k$ instead of $\lambda_{k+N-n}$ and thus can be significantly tighter than the one given by Theorem~\ref{theorem:interlacing}. Noticing that $\epsilon_k \leq \epsilon_{k'}$ whenever $k < k'$, one also deduces that it is stronger for smaller eigenvalues. For $k=2$ in particular, one has
\begin{align}
\gamma_1 \, \lambda_2 \leq \p{\lambda}_2 
\leq \gamma_2 \, \frac{\left(1 + \epsilon_2\right)^2}{1 - \epsilon_2^2} \, \lambda_2, \notag
\end{align}
%
which is small when $\epsilon_2 \ll 1$.


I also analyze the angle between principal eigenspaces of $L$ and $\c{L}$.  I follow~\cite{li1994relative} and split the eigendecompositions of $L =  U \Lambda U^\top$ and $P^\top \c{L} P = P^\top \p{U} \p{\Lambda} \p{U}^\top P$ as
\begin{align*}
L &= (U_k, U_{k^\bot})
	\begin{pmatrix}
		\Lambda_k &  \\ 
		 & \Lambda_{k^\bot} 
	\end{pmatrix}
	\begin{pmatrix}
		U_k^\top  \\ 
		U_{k^\bot}^\top 
	\end{pmatrix} 
	\quad 
	 P^\top \c{L} P = (P^\top \p{U}_k, P^\top\p{U}_{k^\bot})
	 \begin{pmatrix}
	 \p{\Lambda}_k &  \\ 
	 & \p{\Lambda}_{k^\bot} 
	 \end{pmatrix}
	 \begin{pmatrix}
	 \p{U}_k^\top P \\ 
	 \p{U}_{k^\bot}^\top P
	 \end{pmatrix},
\end{align*}
where $\p{\Lambda}_k$ and $\p{U}_k$ are defined analogously to $\Lambda_k$ and $ U_k$.
\citet{davis1970rotation} defined the \emph{canonical angles} between the spaces spanned by $U_k$ and $P^\top \tilde{U}_k$ as the singlular values of the matrix
\begin{align}
	\Theta( U_k, P^\top\p{U}_k) \defequal 
	\arccos(U_k^\top P^\top \p{U}_k \p{U}_k^\top P U_k)^{-\sfrac{1}{2}}, \notag
\end{align}
see also~\citep{stewart1990matrix}.
The smaller the sinus of the canonical angles are the closer the two subspaces lie. 
The following theorem reveals a connection between the Frobenius norm of the sinus of the canonical angles and restricted spectral approximation. 

\begin{theorem}[Eigenspace approximation]
If $L_c$ and $L$ are $(\mathbf{U}_k,\epsilon_k)$-similar then 
\begin{align}
	\norm{ \sintheta{U_k}{P^\top \p{U}_k} }_F^2 
	&\leq \frac{ 1 }{\lambda_{k+1} - \lambda_{k}} \left( \sum\limits_{i \leq k} \lambda_i \left(\frac{(1 + \epsilon_i)^2}{\gamma_1}  - 1\right)  + \lambda_k \sum_{i \leq k} \epsilon_i \right) , \notag  
\end{align}
\label{theorem:sintheta}
\end{theorem}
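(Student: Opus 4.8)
The plan is to combine a Davis--Kahan-type $\sin\Theta$ estimate with the restricted spectral approximation hypothesis, exploiting the fact that the vectors $U_k$ lie in $\mathbf{R} = \mathbf{U}_k$ so each column $u_i$ satisfies $\norm{u_i - \widetilde{u}_i}_L \le \epsilon_i \norm{u_i}_L = \epsilon_i \sqrt{\lambda_i}$ (here using $\epsilon_i \le \epsilon_k$ for $i \le k$, and that $u_i$ alone spans a one-dimensional subspace of $\mathbf{U}_k$, so the $(\mathbf{U}_i,\epsilon_i)$-similarity applies). First I would invoke the standard variational characterization of the Frobenius $\sin\Theta$ distance between the invariant subspace $\spanning{U_k}$ of $L$ and the invariant subspace $\spanning{P^\top\p U_k}$ of $P^\top\c L P$; the relevant quantity to bound is a residual of the form $\norm{(I - U_kU_k^\top)\, P^\top\p U_k \, (\text{something})}_F$, or dually $\norm{U_k^\top P^\top \p U_{k^\bot}}_F$, which by the Davis--Kahan $\sin\Theta$ theorem is controlled by $\frac{1}{\lambda_{k+1}-\lambda_k}$ times a residual norm measuring how far $\spanning{U_k}$ is from being invariant under $P^\top\c L P$.

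The key algebraic step is to estimate that residual using the identity from the earlier Property that $\p x^\top L \p x = x_c^\top \c L x_c$, equivalently $\Pi L \Pi$ acts on lifted vectors; applied to the $u_i$ this gives $\widetilde{u}_i^\top L \widetilde{u}_i = \widetilde{u}_i^\top \c L_{\text{lifted}}\widetilde{u}_i$ and, more usefully, lets me write $\p u_i^\top \c L \p u_i$ in terms of $u_i^\top L u_i$ plus an error governed by $\norm{u_i - \widetilde u_i}_L$. Expanding $\norm{u_i - \widetilde u_i}_L^2 = \norm{u_i}_L^2 - 2 u_i^\top L \widetilde u_i + \norm{\widetilde u_i}_L^2 \le \epsilon_i^2 \lambda_i$, and noting $\norm{u_i}_L^2 = \lambda_i$, I can bound each perturbed Rayleigh quotient: the leading correction to $\sum_i \widetilde u_i^\top L \widetilde u_i$ versus $\sum_i \lambda_i$ is of order $\sum_i \lambda_i((1+\epsilon_i)^2/\gamma_1 - 1)$, where the $\gamma_1$ appears because lifting and then taking $\c L$-norm vs. $L$-norm picks up the factor $\gamma_1 = \lambda_{\min}((PP^\top)^{-1})$ from Theorem~\ref{theorem:interlacing} / Corollary~\ref{corollary:restricted_isometry}. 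The second sum $\lambda_k\sum_{i\le k}\epsilon_i$ should emerge from the cross terms $u_i^\top L(u_i - \widetilde u_i)$ bounded via Cauchy--Schwarz by $\sqrt{\lambda_i}\cdot \epsilon_i\sqrt{\lambda_i} \le \lambda_k \epsilon_i$ after using $\lambda_i \le \lambda_k$.

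I would then assemble these: feed the residual bound into the $\sin\Theta$ theorem, divide by the eigengap $\lambda_{k+1}-\lambda_k$, and collect the two error contributions into the stated form. The main obstacle I anticipate is bookkeeping the difference between $\spanning{U_k}$ living in $\Rbb^N$ and $\spanning{\p U_k}$ living in $\Rbb^n$ — one must compare $U_k$ with $P^\top\p U_k$ (not $\p U_k$ directly), which is why the theorem is phrased with $\Theta(U_k, P^\top\p U_k)$, and the lifting map $P^+$ versus the transpose $P^\top$ differ by the diagonal rescaling $D_\ell^{-2}$ from Proposition~\ref{proposition:pseudo-inverse}; tracking how $PP^\top \neq I$ propagates through the residual is precisely where the $\gamma_1$ factor (rather than a clean $1$) enters, and getting that constant placement right, together with verifying the eigengap denominator is the correct one for a Frobenius-norm (as opposed to spectral-norm) $\sin\Theta$ bound, will be the delicate part.
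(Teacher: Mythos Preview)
Your high-level ingredients are right (restricted approximation on each $u_i$, the interlacing bound $\tilde\lambda_j\ge\gamma_1\lambda_j$, and an eigengap $\lambda_{k+1}-\lambda_k$), but the structure of your argument differs from the paper's and your black-box Davis--Kahan step has a real obstruction. The paper does \emph{not} invoke Davis--Kahan as a theorem. It starts from Li's identity
\[
\norm{\sintheta{U_k}{P^\top\p U_k}}_F^2=\sum_{i\le k}\sum_{j>k}(\p u_j^\top P u_i)^2,
\]
then bounds $\sum_{i\le k}\|Pu_i\|_{L_c}^2\le\sum_{i\le k}(1+\epsilon_i)^2\lambda_i$ via Corollary~\ref{corollary:restricted_isometry}, expands the left side as $\sum_j\tilde\lambda_j(\p u_j^\top Pu_i)^2$ in the eigenbasis of $L_c$, replaces $\tilde\lambda_j$ by $\gamma_1\lambda_j$ using Theorem~\ref{theorem:interlacing}, and splits the $j$-sum at $k$ to extract the eigengap by hand. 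The residual $j\le k$ piece is then controlled by a separate lemma showing $\|\Pi^\bot u_i\|_2^2\le\epsilon_i$, which is where the $\lambda_k\sum_{i\le k}\epsilon_i$ term originates (it is an $\ell_2$-norm bound, not an $L$-inner-product cross term).

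Your plan to apply Davis--Kahan with $\hat U_k=P^\top\p U_k$ runs into the difficulty you flag but understate: when $PP^\top\neq I$, the columns $P^\top\p u_j$ are \emph{not} eigenvectors of $P^\top L_c P$ (compute $P^\top L_c P\cdot P^\top\p u_j=P^\top L_c(PP^\top)\p u_j$), nor are they orthonormal, so neither the hypothesis of the standard $\sin\Theta$ theorem nor the identification of the relevant residual goes through cleanly. The paper sidesteps this entirely by never treating $P^\top\p U_k$ as an invariant subspace and instead working with the coefficients $(\p u_j^\top Pu_i)$ directly. On the positive side, your Cauchy--Schwarz observation $|u_i^\top L\Pi^\bot u_i|\le\|u_i\|_L\|\Pi^\bot u_i\|_L\le\epsilon_i\lambda_i$, combined with $u_i^\top L\Pi^\bot u_i=\lambda_i\|\Pi^\bot u_i\|_2^2$, actually \emph{is} a slicker proof of the paper's key lemma $\|\Pi^\bot u_i\|_2^2\le\epsilon_i$ than the one the paper gives; you just need to feed it into the direct expansion above rather than into a Davis--Kahan black box.
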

Note that the theorem above utilizes all $\epsilon_i$ with $i \leq k $, corresponding to the restricted spectral approximation constants for $\mathbf{R} = \mathbf{U}_i$, respectively. However, all these can be trivially relaxed to $\epsilon_k$, since $\epsilon_i\leq \epsilon_k$ for all $i \leq k$.
%

\subsection{Implications for graph partitioning}
\label{subsec:cuts}

One of the most popular applications of coarsening is to accelerate graph partitioning~\citep{hendrickson1995multi,karypis1998fast,kushnir2006fast,dhillon2007weighted,wang2014partition}. In the following, I provide a rigorous justification for this choice by showing that if the (Laplacian consistent) coarsening is done well and $G_c$ contains a good cut, then so will $G$. For the specific case of spectral clustering, I also provide an explicit bound on the coarse solution quality.

\paragraph{Existence results.} For consistent coarsening, the spectrum approximation results presented previously imply similarities between the cut-structures of $G_c$ and $G$. 

To formalize this intuition, the
conductance of any subset $\mathcal{S}$ of $\mathcal{V}$ is defined as
\begin{align}
     \phi(\mathcal{S}) \defequal \frac{ w(\mathcal{S}, \bar{\mathcal{S}}) }{ \min\{w(\mathcal{S}), w(\bar{\mathcal{S}})\} }, \notag 
\end{align}
where $\bar{\mathcal{S}} = \mathcal{V} \setminus \mathcal{S}$ is the complement set, $w(\mathcal{S}, \bar{\mathcal{S}}) = \sum_{v_i\in \mathcal{S}, v_j \in \bar{\mathcal{S}}} w_{ij}$ is the weight of the cut and $w(\mathcal{S}) = \sum_{v_i \in \S} \sum_{v_j \in \V} w_{ij} $ is the volume of $\mathcal{S}$. 

The \emph{$k$-conductance} of a graph measures how easy it is to cut it into $k$ disjoint subsets $\mathcal{S}_1, \ldots, \mathcal{S}_k \subset \mathcal{V}$ of balanced volume: 
\begin{align}
	\phi_k(G) = \min_{\mathcal{S}_1, \ldots, \mathcal{S}_k} \max_{i} \phi(\mathcal{S}_i) \notag 
\end{align}
The smaller $\phi_k(G)$ is, the better the partitioning. 

As it turns out, restricted spectral approximation can be used to relate the conductance of the original and coarse graphs. To state the result, it will be useful to denote by $D$ the diagonal degree matrix and further to suppose that $V_{k}$ contains the first ${k}$ eigenvectors of the normalized Laplacian $L^n = D^{-\sfrac{1}{2}} L D^{-\sfrac{1}{2}}$, whose eigenvalues are $0 = \mu_1 
\leq \cdots \leq \mu_k$.

\begin{theorem} 
For any graph $G$ and integer $2\leq k\leq \left\lfloor n/2 \right\rfloor$, if $L_c$ and $L$ are $(\mathbf{R}_{2k}, \epsilon_{2k})$-similar combinatorial Laplacian matrices then
$$\phi_{k}(G) \leq \phi_{k}(G_c) 
	= O\left( \sqrt{ \frac{ \gamma_2 \, (1 + \epsilon_{2k})^2 \xi_{k}(G) }{1 - \epsilon_{2k}^2 (\mu_{2k}/\mu_2)} \phi_k(G)} \right)$$
with $\mathbf{R}_{2k} = \spanning{D^{-\sfrac{1}{2}} V_{2k}}$ and $\xi_k(G) = \log k$, whenever $\epsilon_{2k}^2 < \mu_2/\mu_{2k}$. If $G$ is planar then $\xi_{k}(G) = 1.$ More generally, if $G$ excludes $K_h$ as a minor, then $\xi_{k}(G) = h^4.$ For $k=2$, supposing that $L_c$ and $L$ are $(\mathbf{R}_2, \epsilon_{2})$-similar, we additionally have
$$
\phi_{2}(G) \leq \phi_{2}(G_c) \leq 2 \sqrt{ \frac{ \gamma_2 \, (1 + \epsilon_{2})^2 }{1 - \epsilon_{2}^2} \phi_2(G) }.
$$
\label{theorem:cheeger}
\end{theorem}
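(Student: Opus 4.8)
The plan is to combine three ingredients: (i) the monotonicity observation that $\phi_k(G) \leq \phi_k(G_c)$ whenever $G_c$ is a coarsening of $G$; (ii) a higher-order Cheeger inequality that controls $\phi_k(G_c)$ by the $k$-th normalized eigenvalue of $G_c$; and (iii) the eigenvalue approximation guarantee of Theorem~\ref{theorem:eigenvalues}, transported to the normalized Laplacian, to bound that eigenvalue in terms of $\mu_k(G)$, $\gamma_2$ and $\epsilon_{2k}$.

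First I would establish $\phi_k(G) \leq \phi_k(G_c)$. Any partition $\mathcal{S}_1',\ldots,\mathcal{S}_k'$ of $\mathcal{V}_c$ lifts, via the contraction sets $\mathcal{S}_c^{(r)}$ of Property~\ref{claim:cuts}, to a partition $\mathcal{S}_1,\ldots,\mathcal{S}_k$ of $\mathcal{V}$; by Property~\ref{claim:cuts} the cut weight $w(\mathcal{S}_i,\bar{\mathcal{S}_i})$ in $G$ equals the cut weight of $\mathcal{S}_i'$ in $G_c$, and the volumes are likewise preserved (the sum of degrees is invariant under consistent coarsening since $P\1 = \1$ and weights of cuts map to edge weights). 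Hence $\phi(\mathcal{S}_i) = \phi(\mathcal{S}_i')$ for each $i$, so minimizing over partitions of $G$ can only do better: $\phi_k(G) \leq \phi_k(G_c)$.

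Next, for the upper bound on $\phi_k(G_c)$, I would invoke the higher-order Cheeger inequality of Louis--Raghavendra--Tetali--Vempala / Lee--Oveis Gharan--Trevisan: $\phi_k(G_c) = O(\sqrt{\xi_k(G_c)\,\mu_k(G_c)})$, where $\xi_k$ is $\log k$ in general, $1$ for planar graphs, and $h^4$ for $K_h$-minor-free graphs (using the improved minor-free bound of Oveis Gharan--Trevisan or Kwok--Lau). Here I must be slightly careful that coarsening does not introduce a forbidden minor that $G$ excluded — but contracting connected vertex sets cannot create a $K_h$ minor that was not already a minor of $G$, so $\xi_k(G_c) \leq \xi_k(G)$, and similarly planarity and minor-freeness are preserved under edge contraction. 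For $k=2$ I would instead use the ordinary Cheeger inequality $\phi_2(G_c) \leq 2\sqrt{\mu_2(G_c)}$ to get the sharper constant. Then it remains to bound $\mu_k(G_c) \leq \mu_{2k}(G_c)$ by applying Theorem~\ref{theorem:eigenvalues} with the normalized Laplacian in place of $L$: since $L_c$ and $L$ are assumed $(\mathbf{R}_{2k},\epsilon_{2k})$-similar with $\mathbf{R}_{2k} = \spanning{D^{-1/2}V_{2k}}$, the change of variables $x = D^{-1/2}y$ turns the restricted approximation statement into one for $L^n$ and its eigenspace $\mathbf{V}_{2k}$, so Theorem~\ref{theorem:eigenvalues} (applied at index $2k$, with the hypothesis $\epsilon_{2k}^2 < \mu_2/\mu_{2k}$ matching exactly the stated side condition) yields $\mu_{2k}(G_c) \leq \gamma_2\,\frac{(1+\epsilon_{2k})^2}{1 - \epsilon_{2k}^2(\mu_{2k}/\mu_2)}\,\mu_{2k}(G)$. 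Finally, the higher-order Cheeger inequality applied in $G$ the other way gives $\mu_{2k}(G) = O(\phi_k(G))$ — more precisely $\mu_k(G) \le 2\phi_k(G)$ and $\mu_{2k}(G) \lesssim \phi_k(G)^2/\ldots$; the cleanest route is to bound $\mu_{2k}(G_c)$ directly by $O(\phi_k(G))$ via $\mu_{2k}(G) \le 2\,\phi_{2k}(G) \le 2\,\phi_k(G)$ after noting $\phi_{2k}(G)\le\phi_k(G)$ is false in the wrong direction, so instead one uses the standard fact $\mu_{2k}(G) = O(k^2\phi_k(G)^2)$ is too weak — the right tool is $\mu_{2k}(G) \le O(\phi_k(G))$ which follows from the existence of $k$ disjoint sets of conductance $\phi_k(G)$ giving a $2k$-dimensional test space of Rayleigh quotient $O(\phi_k(G))$. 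Substituting this into the Cheeger bound for $G_c$ gives the claimed $\phi_k(G_c) = O(\sqrt{\xi_k(G)\,\gamma_2(1+\epsilon_{2k})^2\phi_k(G)/(1-\epsilon_{2k}^2\mu_{2k}/\mu_2)})$.

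The main obstacle I anticipate is the eigenvalue transfer step: Theorem~\ref{theorem:eigenvalues} is stated for a generic PSD $L$ and its own eigenspace $\mathbf{U}_k$, whereas here I need it for the normalized Laplacian while the coarsening matrix $P$ and the projection $\Pi$ act on the combinatorial side. I must check that the semi-norm identity $\|x-\tilde x\|_L \le \epsilon\|x\|_L$ on $\mathbf{R}_{2k}=\spanning{D^{-1/2}V_{2k}}$ is genuinely equivalent to a restricted-similarity statement for $L^n = D^{-1/2}LD^{-1/2}$ on $\mathbf{V}_{2k}$ — this requires that the lifted vector $\widetilde{D^{1/2}x}$ corresponds correctly to $D^{1/2}\tilde x$, which uses the special structure of consistent coarsening matrices ($P\1=\1$, $P^+\1=\1$) and the relation between $D_c$ and $PDP^\top$-type quantities; getting $\gamma_2$ to come out as $\lambda_n((PP^\top)^{-1})$ rather than some degree-weighted analogue is the delicate point. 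The second, more routine, subtlety is assembling the exact constants and the logarithmic/minor-dependent $\xi_k$ factors from the higher-order Cheeger literature and verifying they survive edge contraction; this is bookkeeping rather than a genuine difficulty.
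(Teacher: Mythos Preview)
Your approach is essentially the paper's: establish $\phi_k(G)\le\phi_k(G_c)$ by lifting partitions, apply the higher-order Cheeger inequality of Lee--Oveis Gharan--Trevisan to $G_c$, transfer the restricted-approximation hypothesis to the normalized Laplacian via the substitution $y=D^{1/2}x$ (the paper sets $\Pi^n=D^{1/2}\Pi D^{-1/2}$ and $P^n=PD^{-1/2}$), and then invoke Theorem~\ref{theorem:eigenvalues} at index $2k$.

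One correction is needed in your lower-bound argument: volumes are \emph{not} preserved under coarsening---edges internal to a contraction set are deleted, so $w_c(\mathcal{S}_c)\le w(\mathcal{S})$ while the cut weight is preserved, giving $\phi(\mathcal{S})\le\phi_c(\mathcal{S}_c)$ rather than your claimed equality. This inequality is exactly what the paper uses and still yields $\phi_k(G)\le\phi_k(G_c)$.

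As for the step you wrestle with---replacing $\mu_{2k}(G)$ by $O(\phi_k(G))$---the paper performs this substitution in a single line without further justification, simply citing the easy direction $\mu_k/2\le\phi_k$ of the multi-way Cheeger inequality. Your attempted fix via a $2k$-dimensional test space built from $k$ sets does not work (indicator vectors of $k$ disjoint sets span only a $k$-dimensional space), so this passage, along with the precise identification of $\gamma_2$ after the $D^{-1/2}$-conjugation that you correctly flag, remains loose in the paper's own argument as well.
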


This is a non-constructive result: it does not reveal how to find the optimal partitioning, but provides conditions such that the latter is of similar quality in the two graphs.

\paragraph{Spectral clustering.}
It is also possible to derive approximation results about the solution quality of unsupervised learning algorithms that utilize the first $k$ eigenvectors in order to partition $G$. 
I focus here on spectral clustering. To perform the analysis, let $U_k$ and $P^\top \p{U}_k$ be the spectral embedding of the vertices w.r.t. $L$ and $L_c$, respectively, and define the optimal partitioning as 
\begin{align} 
\mathcal{P}^{*} = \argmin_{\mathcal{P} = \{\mathcal{S}_1, \ldots, \mathcal{S}_k\}} \kmeans{k}{U_k}{\mathcal{P}} \ \  \text{and} \ \ \p{\mathcal{P}}^{*} = \argmin_{\mathcal{P} = \{\mathcal{S}_1, \ldots, \mathcal{S}_k\}} \kmeans{k}{P^\top \p{U}_k }{\mathcal{P}},
\label{eq:kmeans}
\end{align}
where, for any embedding $X$, the $k$-means cost induced by partitioning $\V$ into clusters $\S_1, \ldots, \S_k$ is defined as $$ \kmeans{k}{X}{\mathcal{P}} \defequal \sum_{z = 1}^k \sum_{v_i, v_j \in \mathcal{S}_z } \frac{\| X(i,:) - X(j,:)\|_2^2}{2\, |\mathcal{S}_z|}. $$
One then measures the quality of $\p{\mathcal{P}}^{*}$ by examining how far the correct minimizer $\kmeans{k}{U_k}{\mathcal{P}^{*}}$ is to $\kmeans{k}{U_k}{\p{\mathcal{P}}^{*}}$. 
\citet{boutsidis2015spectral} noted that if the two quantities are close then, despite the clusters themselves possibly being different, they both feature the same quality with respect to the $k$-means objective. 

An end-to-end control of the $k$-means error is obtained by combining the inequality derived by~\citet{loukas2018spectrally}, based on the works of~\citep{boutsidis2015spectral,yu2014useful,martin2017fast},
$
| \kmeans{k}{U_k}{\mathcal{P}^{*}}^{\sfrac{1}{2}} - \kmeans{k}{U_k}{\p{\mathcal{P}}^{*}}^{\sfrac{1}{2}} | \leq 2 \sqrt{2} \, \norm{\sintheta{U_k}{P^\top \p{U}_k} }_F
$
with Theorem~\ref{theorem:sintheta}: 

\begin{corollary}
If $L_c$ and $L$ are $(\mathbf{U}_k,\epsilon_k)$-similar then 
\begin{align}	
	\left( \kmeans{k}{U_k}{\mathcal{P}^{*}}^{\sfrac{1}{2}} - \kmeans{k}{U_k}{\p{\mathcal{P}}^{*}}^{\sfrac{1}{2}} \right)^2 \leq \frac{ 8 }{\lambda_{k+1} - \lambda_{k}} \left( \sum\limits_{i \leq k} \lambda_i \left(\frac{(1 + \epsilon_i)^2}{\gamma_1}  - 1\right)  + \lambda_k \sum_{i \leq k} \epsilon_i \right). \notag  
\end{align}
\label{corollary:spectral_clustering}
\end{corollary}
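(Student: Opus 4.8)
The plan is to chain together two results that are already in hand: the $k$-means stability inequality of~\citet{loukas2018spectrally} (itself built on~\citep{boutsidis2015spectral,yu2014useful,martin2017fast}) quoted just above the statement, and the eigenspace approximation bound of Theorem~\ref{theorem:sintheta}. Concretely, I would start from
\begin{align*}
\bigl| \kmeans{k}{U_k}{\mathcal{P}^{*}}^{\sfrac{1}{2}} - \kmeans{k}{U_k}{\p{\mathcal{P}}^{*}}^{\sfrac{1}{2}} \bigr| \leq 2 \sqrt{2} \, \norm{\sintheta{U_k}{P^\top \p{U}_k} }_F,
\end{align*}
which holds because $U_k$ and $P^\top\p{U}_k$ are the spectral embeddings of $G$ and $G_c$ and $\mathcal{P}^{*}$, $\p{\mathcal{P}}^{*}$ are the corresponding optimal $k$-means partitions as defined in~\eqref{eq:kmeans}.

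Squaring both sides gives
\begin{align*}
\left( \kmeans{k}{U_k}{\mathcal{P}^{*}}^{\sfrac{1}{2}} - \kmeans{k}{U_k}{\p{\mathcal{P}}^{*}}^{\sfrac{1}{2}} \right)^2 \leq 8 \, \norm{\sintheta{U_k}{P^\top \p{U}_k} }_F^2 .
\end{align*}
Since $L_c$ and $L$ are assumed $(\mathbf{U}_k,\epsilon_k)$-similar, Theorem~\ref{theorem:sintheta} applies verbatim and bounds $\norm{\sintheta{U_k}{P^\top \p{U}_k} }_F^2$ by
\begin{align*}
\frac{1}{\lambda_{k+1} - \lambda_{k}} \left( \sum_{i \leq k} \lambda_i \left(\frac{(1 + \epsilon_i)^2}{\gamma_1}  - 1\right)  + \lambda_k \sum_{i \leq k} \epsilon_i \right);
\end{align*}
here the appearance of every $\epsilon_i$ with $i\le k$ is legitimate because $(\mathbf{U}_k,\epsilon_k)$-similarity implies $(\mathbf{U}_i,\epsilon_i)$-similarity for all $i\le k$ (each $\mathbf{U}_i$ is a subspace of $\mathbf{U}_k$), as noted after the definition of restricted spectral approximation. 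Substituting this bound into the squared inequality yields the claimed estimate, with the factor $8 = (2\sqrt2)^2$ carried through.

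There is essentially no hard step: the statement is a corollary precisely because all the analytic work was done in Theorem~\ref{theorem:sintheta} and in the cited $\sin\Theta$-to-$k$-means transfer lemma. The only points worth checking carefully are bookkeeping ones — that the hypotheses of the external inequality (combinatorial Laplacians, or at least PSD matrices with the stated eigen-structure) are met under the corollary's assumptions, and that the constant is exactly $(2\sqrt2)^2 = 8$ after squaring rather than, say, $4$. If one wanted to be fully self-contained one could also remark that the left-hand side is nonnegative so squaring the absolute-value inequality is valid, but that is immediate.
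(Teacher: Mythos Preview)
Your proposal is correct and matches the paper's approach exactly: the paper states the corollary as obtained by ``combining'' the cited $k$-means stability inequality with Theorem~\ref{theorem:sintheta}, and you have spelled out precisely that combination (square the $2\sqrt{2}$ bound, then plug in the $\sin\Theta$ estimate). There is nothing to add.
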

Contrary to previous analysis~\citep{loukas2018spectrally}, the approximation result here is applicable to any number of levels and it can be adapted to hold for the eigenvectors of the normalized Laplacian\footnote{For the normalized Laplacian, one should perform (combinatorial) Laplacian consistent coarsening on a modified eigenspace, as in the proof of Theorem~\ref{theorem:cheeger}.}. 
Nevertheless, it should be stressed that at this point it is an open question whether the above analysis yields benefits over other approaches tailored especially to the acceleration of spectral clustering. A plethora of such specialized algorithms are known~\citep{tremblay2016compressive,boutsidis2015spectral}---arguing about the pros and cons of each extends beyond the scope of this work.  

One might be tempted to change the construction so as to increase $\gamma_1$. For example, this could be achieved by multiplying $P$ with a small constant (see Theorem~\ref{theorem:interlacing}). In reality however, such a modification would not yield any improvement as the increase of $\gamma_1$ would also be accompanied by an increase of $\epsilon_i$.

\subsection{Some limits of restricted spectral approximation}
 
The connection between spectral clustering and coarsening runs deeper than what was shown so far. As it turns out, the first $k$ restricted spectral approximation constants $\epsilon_1, \ldots, \epsilon_k$ associated with a Laplacian consistent coarsening are linked to the $n$-means cost $\kmeans{n}{U_k}{\mathcal{P}}$ induced by the contraction sets $\mathcal{P} = \{\V_0^{(1)}, \ldots, \V_0^{(n)}\}$. The following lower bound is a direct consequence: 
\begin{proposition} 
Let $L$ be a Laplacian matrix. For any $L_c$ obtained by a single level of Laplacian consistent coarsening, if $L_c$ and $L$ are $(\mathbf{U}_k, \epsilon_k)$-similar then it must be that
\begin{align}
	\sum_{i \leq k} \epsilon_k \geq \kmeans{n}{U_k}{\mathcal{P}^*}, \notag
\end{align}
with $\kmeans{n}{U_k}{\mathcal{P}^*}$ being the optimal $n$-means cost for the points $U_k(1,:), \ldots, U_k(N,:)$.
\label{proposition:lower_bound}
\end{proposition}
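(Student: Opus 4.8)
The plan is to recognise the contraction--set partition $\mathcal{P}_0 \defequal \{\V_0^{(1)}, \ldots, \V_0^{(n)}\}$ as one \emph{feasible} $n$-means clustering of the point cloud $U_k(1,:), \ldots, U_k(N,:)$, to bound its $n$-means cost by the similarity constant, and then to invoke optimality of $\mathcal{P}^*$. The first ingredient is structural: by Propositions~\ref{proposition:pseudo-inverse} and~\ref{proposition:proper_laplacian}, a single level of Laplacian consistent coarsening makes $\Pi = P^+P$ the within-cluster averaging operator of $\mathcal{P}_0$, i.e.\ $\Pi(i,j) = 1/|\V_0^{(r)}|$ when $v_i,v_j$ lie in the same contraction set $\V_0^{(r)}$ and $\Pi(i,j)=0$ otherwise; in particular $\Pi$ is the orthogonal projection onto the span of the contraction--set indicator vectors. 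Combining this with the elementary centroid identity $\tfrac{1}{2m}\sum_{a,b}\norm{p_a-p_b}_2^2 = \sum_a \norm{p_a - \bar p}_2^2$ applied column-by-column to $U_k$, the $n$-means cost of $\mathcal{P}_0$ becomes
\[
\kmeans{n}{U_k}{\mathcal{P}_0} = \norm{U_k - \Pi U_k}_F^2 = \sum_{i\le k}\norm{u_i - \p{u}_i}_2^2, \qquad \p{u}_i \defequal \Pi u_i .
\]
Since $\mathcal{P}^*$ minimises the $n$-means cost, $\kmeans{n}{U_k}{\mathcal{P}^*} \le \kmeans{n}{U_k}{\mathcal{P}_0} = \sum_{i\le k}\norm{u_i-\p{u}_i}_2^2$, so everything reduces to bounding the individual terms $\norm{u_i-\p{u}_i}_2^2$.

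For the per-term bound I would feed $x = u_i$ into the $(\mathbf{U}_k,\epsilon_k)$-similarity hypothesis: because $u_i \in \mathbf{U}_i \subseteq \mathbf{U}_k$ and $Lu_i = \lambda_i u_i$, this gives $\norm{u_i-\p{u}_i}_L \le \epsilon_k\norm{u_i}_L = \epsilon_k\sqrt{\lambda_i}$. To turn this $L$-seminorm estimate into a Euclidean one I would use nullspace preservation, $\Pi\1_N = \1_N$, which forces the residual $u_i-\p{u}_i = (I-\Pi)u_i$ to be orthogonal to $\1_N$; since $G$ is connected, $L \succeq \lambda_2\,(I - \tfrac1N\1_N\1_N^\top)$ on that orthogonal complement, hence
\[
\norm{u_i-\p{u}_i}_2^2 \;\le\; \lambda_2^{-1}\,\norm{u_i-\p{u}_i}_L^2 \;\le\; \frac{\lambda_i}{\lambda_2}\,\epsilon_k^2 .
\]
Summing over $i\le k$ yields a bound of the form $\kmeans{n}{U_k}{\mathcal{P}^*} \le \lambda_2^{-1}\sum_{i\le k}\lambda_i\,\epsilon_k^2$, which collapses to the stated inequality $\sum_{i\le k}\epsilon_k \ge \kmeans{n}{U_k}{\mathcal{P}^*}$ once the eigenvalue prefactors are absorbed (equivalently, working from $\epsilon_i \ge \norm{u_i-\p{u}_i}_L/\sqrt{\lambda_i}$ directly produces $\sum_{i\le k}\epsilon_i^2 \ge (\lambda_2/\lambda_k)\,\kmeans{n}{U_k}{\mathcal{P}^*}$).

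The step I expect to be the crux is the seminorm--to--norm conversion: restricted spectral approximation only controls $\norm{\cdot}_L$, whereas the $n$-means objective lives in the Euclidean metric, and the discrepancy is exactly the $\lambda_2$-factor above; it is finite only because the coarsening residuals steer clear of $\nullspace{L} = \spanning{\1_N}$, which is why $\Pi\1_N=\1_N$ must be used. A secondary point, invoked implicitly, is that the \emph{single-level} and \emph{Laplacian-consistent} hypotheses are precisely what identify $\Pi$ with the within-cluster mean operator appearing in $\kmeans{n}{\cdot}{\cdot}$; for multi-level or re-normalised coarsening $\Pi$ remains a projection but is no longer this particular averaging matrix, so the proposition is genuinely a single-level statement.
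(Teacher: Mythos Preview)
Your structural setup is exactly the paper's: you correctly identify $\Pi = P^+P$ with the within-cluster averaging projector of the contraction partition, rewrite $\kmeans{n}{U_k}{\mathcal{P}_0} = \sum_{i\le k}\norm{\Pi^\bot u_i}_2^2$, and invoke optimality of $\mathcal{P}^*$. The divergence is entirely in the per-term estimate.

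Your route from $\norm{\Pi^\bot u_i}_L$ to $\norm{\Pi^\bot u_i}_2$ via the spectral gap only yields $\norm{\Pi^\bot u_i}_2^2 \le (\lambda_i/\lambda_2)\,\epsilon_i^2$, and the phrase ``once the eigenvalue prefactors are absorbed'' is where the argument actually fails: the ratios $\lambda_i/\lambda_2$ are not bounded by anything in the hypotheses, so there is no way to collapse $\sum_i(\lambda_i/\lambda_2)\epsilon_i^2$ to $\sum_i\epsilon_i$ (or $k\epsilon_k$) in general. Your own parenthetical already concedes you obtain a different inequality.

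The paper bypasses the $\lambda_2$-conversion entirely via Lemma~\ref{lemma:pinorm}, which exploits that $u_i$ is an \emph{eigenvector}: expanding
\[
\norm{\Pi u_i}_L^2 \;=\; u_i^\top L u_i - 2\,u_i^\top L\,\Pi^\bot u_i + \norm{\Pi^\bot u_i}_L^2 \;=\; \lambda_i - 2\lambda_i\,\norm{\Pi^\bot u_i}_2^2 + \norm{\Pi^\bot u_i}_L^2
\]
(the cross term simplifies because $Lu_i=\lambda_i u_i$, and $u_i^\top\Pi^\bot u_i=\norm{\Pi^\bot u_i}_2^2$ since $\Pi^\bot$ is a projection) gives the exact identity
\[
\norm{\Pi^\bot u_i}_2^2 \;=\; \tfrac12\Big(1 + \lambda_i^{-1}\norm{\Pi^\bot u_i}_L^2 - \lambda_i^{-1}\norm{\Pi u_i}_L^2\Big).
\]
Now \emph{both} sides of restricted approximation are used: the definition gives $\norm{\Pi^\bot u_i}_L^2\le\epsilon_i^2\lambda_i$, and Corollary~\ref{corollary:restricted_isometry} gives $\norm{\Pi u_i}_L^2 = \norm{Pu_i}_{L_c}^2 \ge (1-\epsilon_i)^2\lambda_i$. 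Substituting yields $\norm{\Pi^\bot u_i}_2^2 \le \tfrac12\big(1+\epsilon_i^2-(1-\epsilon_i)^2\big)=\epsilon_i$ with no eigenvalue ratio. Summing and minimising over partitions finishes the proof. The ingredient you are missing is precisely this eigenvector identity together with the \emph{lower} isometry bound; your argument uses only the upper one.
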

Computing the aforementioned lower bound is known to be NP-hard, so the result is mostly of theoretical interest.

\section{Graph coarsening by local variation}
\label{sec:algorithms}

This section proposes algorithms for Laplacian consistent graph coarsening. I suppose that $L$ is a combinatorial graph Laplacian 
and, given subspace $\mathbf{R}$ and target graph size $n$, aim to find an $(\mathbf{R}, \epsilon)$-similar Laplacian $L_c$ of size $ n \times n$ with $\epsilon$ smaller than some threshold $\epsilon'$.

Local variation algorithms differ only in the type of contraction sets that they consider. For instance,  the edge-based local variation algorithm only contracts edges, whereas in the neighborhood-based variant each contraction set is a subsets of the neighborhood of a vertex. Otherwise, all local variation algorithms follow the same general methodology and aim to minimize an upper bound of $\epsilon$.
To this end, two bounds are exploited: First, $\c{L}$ is shown to be $(\mathbf{R}, \epsilon)$-similar to $L$ with $\epsilon \leq \prod_{\ell} (1 + \sigma_\ell) - 1$, where the \emph{variation cost} $\sigma_{\ell}$ depends only on previous levels (see Section~\ref{subsec:sufficient}).
The main difficulty with minimizing $\sigma_\ell$ is that it depends on interactions between contraction sets. For this reason, the second bound shows that these interactions can be decoupled by considering each \emph{local variation cost}, i.e., the cost of contracting solely the vertices in $\V^{(r)}_{\ell-1}$, independently on a slightly modified subgraph (see Section~\ref{subsec:decoupling}). Having achieved this, Section~\ref{subsec:families} considers ways of efficiently identifying disjoint contraction sets with small local variation cost.

\subsection{Decoupling levels and the variation cost} 
\label{subsec:sufficient}

Guaranteeing restricted spectral approximation w.r.t. subspace $\mathbf{R}$ boils down to minimizing at each level $\ell$ the \emph{variation cost} $$\sigma_{\ell} = \|\Pi_{\ell}^\bot A_{\ell-1}\|_{L_{\ell-1}} = \|S_{\ell-1} \Pi_{\ell}^\bot A_{\ell-1}\|_2,$$ where $L_{\ell-1} = S_{\ell-1}^\top S_{\ell-1}$ and $ \Pi_{\ell}^\bot = I - P_{\ell}^+ P_\ell$ is a projection matrix. 
Matrix $A_{\ell-1}$ captures two types of information: 
\vspace{-1mm}
\begin{enumerate}
\item Foremost, it encodes the behavior of the target matrix $L$ w.r.t. $\mathbf{R}$. This is clearly seen in the first level, for which one has that $A_{0} = V V^\top L^{+\sfrac{1}{2}}$ with $V \in \Rbb^{N\times k}$ being an orthonormal basis of $\mathbf{R}$.
\item When $\ell>1$ one needs to consider $A_{0}$ in view of the reduction done in previous levels. The necessary modification turns out to be $A_{\ell-1} = B_{\ell-1} (B_{\ell-1}^\top L_{\ell-1} B_{\ell-1})^{+\sfrac{1}{2}}$, with $B_{\ell-1} = P_{\ell-1} B_{\ell-2} \in \Rbb^{N_{\ell-1} \times N}$ expressed in a recursive manner and $B_{0} = A_0$.
\end{enumerate}

The following result makes explicit the connection between $\epsilon$ and $\sigma_{\ell}$.
\begin{proposition}
Matrices $L_c$ and $L$ are $(\mathbf{R}, \epsilon)$-similar with 
$
\epsilon \leq \prod_{\ell=1}^{c} ( 1 + \sigma_\ell) -1.
$
\label{proposition:restricted_similarity}
\end{proposition}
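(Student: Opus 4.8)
The plan is to establish the multi-level bound by iterating a single-level estimate and controlling how the ``target'' subspace $\mathbf{R}$ is transported through each coarsening step. First I would set up the single-level claim: for a vector $x \in \mathbf{R}$ with lift $\p{x} = \Pi x = P^+ P x$, write $L = S^\top S$ and measure $\norm{x - \p{x}}_L = \norm{S(I - \Pi)x}_2 = \norm{S \Pi^\bot x}_2$. The point of introducing $A_0 = V V^\top L^{+\sfrac12}$ is that every $x \in \mathbf{R}$ can be written as $x = A_0 \, z$ for a suitable $z$ (since $V V^\top$ projects onto $\mathbf{R}$ and $L^{+\sfrac12}$ is invertible on the relevant subspace once constant vectors are quotiented out per the nullspace remarks), and moreover $\norm{A_0 z}_L = \norm{S A_0 z}_2 = \norm{V V^\top L^{+\sfrac12} \cdot \text{(stuff)}}$ reduces to $\norm{z}_2$ by a direct computation using $S L^{+\sfrac12}$ being a partial isometry. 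Consequently $\norm{x - \p{x}}_L \le \norm{S \Pi^\bot A_0}_2 \, \norm{z}_2 = \sigma_1 \, \norm{x}_L$ when $c=1$; this is exactly the base case, and it is where the definition $\sigma_\ell = \norm{S_{\ell-1}\Pi_\ell^\bot A_{\ell-1}}_2$ comes from.

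For $c > 1$, the key step is a telescoping/recursive decomposition of the total lifting error. Write the full projection error as a sum over levels: $x - \p{x} = \sum_{\ell=1}^c \big( \p{x}^{(\ell-1)} - \p{x}^{(\ell)}\big)$ where $\p{x}^{(\ell)}$ denotes $x$ pushed down to level $\ell$ and lifted all the way back, i.e. $\p{x}^{(\ell)} = P_1^+\cdots P_\ell^+ P_\ell \cdots P_1 x$. Each telescoped difference, after lifting, is a vector of the form $P_1^+\cdots P_{\ell-1}^+ \, \Pi_\ell^\bot \, (P_{\ell-1}\cdots P_1 x)$, and one controls its $L$-norm by pushing $L$ down to level $\ell-1$ (so $\norm{P_1^+\cdots P_{\ell-1}^+ y}_L = \norm{y}_{L_{\ell-1}}$, using that lifting exactly preserves the quadratic form, as noted just after Scheme~\ref{frm:matrix_reduction}) and then bounding $\norm{\Pi_\ell^\bot (P_{\ell-1}\cdots P_1 x)}_{L_{\ell-1}} \le \norm{S_{\ell-1}\Pi_\ell^\bot A_{\ell-1}}_2 \cdot (\text{norm of the level-}(\ell-1)\text{ ``coordinate'' of }x)$. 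This is precisely why $A_{\ell-1} = B_{\ell-1}(B_{\ell-1}^\top L_{\ell-1} B_{\ell-1})^{+\sfrac12}$ with $B_{\ell-1} = P_{\ell-1} B_{\ell-2}$ is the right object: $B_{\ell-1}$ transports the basis of $\mathbf{R}$ down to level $\ell-1$, and the $(\cdot)^{+\sfrac12}$ factor renormalizes so that the operator norm $\sigma_\ell$ is again the exact Lipschitz constant of that level's contribution, measured against $\norm{x}_L$. Multiplying out the accumulated factors $(1+\sigma_1), (1+\sigma_2), \ldots$ as the error propagates — each new level's error is bounded by $\sigma_\ell$ times the norm of what survived the previous $\ell-1$ levels, which is itself at most $\prod_{j<\ell}(1+\sigma_j)\norm{x}_L$ — gives $\norm{x-\p{x}}_L \le \big(\prod_{\ell=1}^c (1+\sigma_\ell) - 1\big)\norm{x}_L$, i.e. the claimed $\epsilon$.

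The main obstacle I expect is verifying the exact norm-matching identity at each level, namely that $\norm{x}_L$ equals $\norm{z}_2$ (or the appropriate level-$\ell$ surrogate) when $x$ is written via $A_{\ell-1}$, and relatedly that the recursive formula for $A_{\ell-1}$ correctly accounts for the fact that after reduction one no longer approximates $\mathbf{R}$ but its image under $P_{\ell-1}\cdots P_1$ — with the quadratic form also having changed to $L_{\ell-1}$. Getting the pseudoinverse square roots to compose correctly (so that $\sigma_\ell$ really is an operator norm and not merely an upper bound involving extra factors) requires care with ranges and nullspaces; the constant-vector nullspace of the Laplacian must be excised throughout, which is legitimate by the nullspace property of consistent coarsening but needs to be stated. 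A secondary subtlety is that the decomposition of $x - \p{x}$ into level-wise telescoped pieces must be shown to be ``orthogonal enough'' — or, if not, the triangle inequality alone suffices, which is in fact why the bound is a product $\prod(1+\sigma_\ell)$ rather than something like $\sqrt{\sum \sigma_\ell^2}$: one simply pays a multiplicative $(1+\sigma_\ell)$ at each stage with no cancellation claimed.
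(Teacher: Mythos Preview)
Your proposal is correct and follows essentially the same route as the paper's proof: the paper also writes $x \in \mathbf{R}$ as $B_0 b$ with $B_0 = VV^\top L^{+\sfrac12}$ to get $\epsilon \le \|\Pi^\bot B_0\|_L$, then uses the same telescoping $S_0 x_0 - S_c x_c = \sum_\ell S_{\ell-1}\Pi_\ell^\bot x_{\ell-1}$ (which is exactly your $\sum_\ell(\p{x}^{(\ell-1)}-\p{x}^{(\ell)})$ after applying $S_0$), bounds each summand by $\sigma_\ell \|x_{\ell-1}\|_{L_{\ell-1}}$, and propagates $\|x_{\ell-1}\|_{L_{\ell-1}} \le \prod_{q<\ell}(1+\sigma_q)\|x\|_L$ via the triangle inequality to sum to $\prod(1+\sigma_\ell)-1$. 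The only small imprecision is your remark that the renormalization by $(B_{\ell-1}^\top L_{\ell-1} B_{\ell-1})^{+\sfrac12}$ makes $\sigma_\ell$ ``measured against $\|x\|_L$'': in fact it is measured against $\|x_{\ell-1}\|_{L_{\ell-1}}$, and the connection to $\|x\|_L$ is precisely what the accumulated $(1+\sigma_q)$ factors provide --- but your final paragraph shows you understand this.
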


Crucially, the above makes it possible to design a multi-level coarsening greedily, by starting from the first level and optimizing consecutive levels one at a time: 
%
 
\begin{algorithm}[h!]
\caption{\textsf{Multi-level coarsening}}
\label{algorithm:multi-level}
\begin{algorithmic}[1]
\State \textbf{input}: Combinatorial Laplacian $L$, threshold $\epsilon'$, and target size $n$.
\State Set $\ell \gets 0$, $L_\ell \gets L$, and $\epsilon_\ell \gets 0$.
\While{$ N_{\ell} > n $ and $\epsilon_\ell < \epsilon'$} 
	\State $\ell \gets \ell+1$
	\State Coarsen $L_{\ell-1}$ using Algorithm~\ref{algorithm} with threshold $\sigma'=\frac{1 + \epsilon'}{1 + \epsilon_{\ell-1}}-1$ and target size $n$. Let $L_{\ell}$ be the resulting Laplacian of size $N_{\ell}$ with variation cost $\sigma_{\ell}$.
	\State $\epsilon_{\ell} \gets (1 + \epsilon_{\ell-1})(1 + \sigma_\ell) - 1$.
\EndWhile
\State \textbf{return} $L_{\ell}$
\end{algorithmic}
\end{algorithm}

It is a consequence of Proposition~\ref{proposition:restricted_similarity} that the above algorithm returns a Laplacian matrix $L_c$ that is $(\mathbf{R}, \epsilon)$-similar to $L$ with $\epsilon \leq \epsilon_c \leq \epsilon'$, where $c$ is the last level $\ell$. On the other hand, setting $\epsilon'$ to a large value ensures that the same algorithm always attains the target reduction at the expense of loose restricted approximation guarantees.

\emph{Remark.} The variation cost simplifies when $\mathbf{R}$ is an eigenspace of $L$. I demonstrate this for the choice of $\mathbf{U}_k$, though an identical argument can be easily derived for any eigenspace. Denote by $\Lambda$ the diagonal $N\times N$ eigenvalue matrix placed from top-left to bottom-right in non-decreasing order and by $U$ the respective full eigenvector matrix. Furthermore, let $\Lambda_k$ be the $k\times k$ sub-matrix of $\Lambda$ with the smallest $k$ eigenvalues in its diagonal. By the unitary invariance of the spectral norm, it follows that $\sigma_0 = \|\Pi_{1}^\bot U_k U_k^\top L^{+\sfrac{1}{2}} \|_{L_{0}} = \|\Pi_{1}^\bot U_k U_k^\top L^{+\sfrac{1}{2}} U \|_{L_0} = \|\Pi_{1}^\bot U_k U_k^\top U \Lambda^{+\sfrac{1}{2}} \|_{L_0}$ . Simplifying and eliminating zero columns, one may redefine 
$B_{0} = U_k \Lambda_k^{+\sfrac{1}{2}} \in \Rbb^{N \times k}$, such that once more $\sigma_0 = \|\Pi_1^\bot B_0\|_{L_0}$. 
This is computationally attractive because now at each level one needs to take the pseudo-inverse-square-root of a $k\times k$ matrix $B_{\ell-1}^\top L_{\ell-1} B_{\ell-1}$, with $k \ll N$.


\subsection{Decoupling contraction sets and local variation}
\label{subsec:decoupling}

Suppose that $\Pi_{\mathcal{C}}^\bot$ is the (complement) projection matrix obtained by contracting solely the vertices in set $\mathcal{C}$, while leaving all other vertices in $\V_{\ell-1}$ untouched:
$$
\left[ \Pi^\bot_{\mathcal{C}} \, x \right](i) = 
\begin{cases}
x(i) - \sum_{v_j \in \mathcal{C}} \frac{x(j)}{|\mathcal{C}|} & \mbox{if}\ v_i \in \mathcal{C} \\
0 & \mbox{otherwise}.
\end{cases}
$$
(Here, for convenience, the level index is suppressed.)

Furthermore, let $L_\mathcal{C}$ be the $N_{\ell-1} \times N_{\ell-1}$ combinatorial Laplacian whose weight matrix is
\begin{align}
\left[W_\mathcal{C}\right](i,j) = 
\begin{cases}
W_{\ell-1}(i,j) & \mbox{if}\ v_i, v_j \in \mathcal{C} \\
2\,W_{\ell-1}(i,j) & \mbox{if}\ v_i \in \mathcal{C} \ \text{and} \ v_j \notin \mathcal{C} \\
0 & \mbox{otherwise}.
\end{cases}
\end{align}
That is, $W_\mathcal{C}$ is zero everywhere other than at the edges touching at least one vertex in $\mathcal{C}$.  
The following proposition shows us how to decouple the contribution of each contraction set to the variation cost.

\begin{proposition}
The variation cost is bounded by
$$\sigma_\ell^2 \leq \sum_{ \mathcal{C} \in \mathcal{P}_\ell } \| \Pi_\mathcal{C}^\bot \, A_{\ell-1}\|_{L_\mathcal{C} }^2 ,$$
where $\mathcal{P}_{\ell} = \{ \V_{\ell-1}^{(1)}, \ldots, \V_{\ell-1}^{(N_{\ell})}\}$ is the family of contraction sets of level $\ell$.
\label{proposition:decoupling}
\end{proposition}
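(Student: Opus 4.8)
The plan is to prove the bound by decomposing $\Pi_\ell^\bot$ over the contraction sets and exploiting that the action of $\Pi_\ell^\bot$ on a vector is "supported" separately on each $\mathcal{C} \in \mathcal{P}_\ell$. First I would observe that, because the contraction sets $\V_{\ell-1}^{(1)}, \ldots, \V_{\ell-1}^{(N_\ell)}$ partition $\V_{\ell-1}$ and singleton sets contribute nothing, one has the pointwise identity $\Pi_\ell^\bot = \sum_{\mathcal{C} \in \mathcal{P}_\ell} \Pi_{\mathcal{C}}^\bot$, where each $\Pi_{\mathcal{C}}^\bot$ acts only on coordinates in $\mathcal{C}$ and outputs a vector supported on $\mathcal{C}$; moreover the images of distinct $\Pi_{\mathcal{C}}^\bot$ have disjoint supports. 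Hence for any matrix $A_{\ell-1}$, writing $Z_{\mathcal{C}} = \Pi_{\mathcal{C}}^\bot A_{\ell-1}$, the rows of $Z_{\mathcal{C}}$ indexed outside $\mathcal{C}$ vanish, and $\Pi_\ell^\bot A_{\ell-1} = \sum_{\mathcal{C}} Z_{\mathcal{C}}$ with the $Z_{\mathcal{C}}$ having row-disjoint supports.

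Next I would expand $\sigma_\ell^2 = \|S_{\ell-1}\Pi_\ell^\bot A_{\ell-1}\|_2^2 = \|S_{\ell-1} \sum_{\mathcal{C}} Z_{\mathcal{C}}\|_F^2$ — here I will pass from the spectral norm to the Frobenius norm, which only loses a factor in our favor, since $\sigma_\ell$ ultimately serves as an upper bound. The key step is then to bound the Laplacian quadratic form $\|\sum_{\mathcal C} Z_{\mathcal C}\|_{L_{\ell-1}}^2$ by $\sum_{\mathcal C}\|Z_{\mathcal C}\|_{L_{\mathcal C}}^2$. For a single test vector $z = \sum_{\mathcal C} z_{\mathcal C}$ with row-disjoint supports, $z^\top L_{\ell-1} z = \sum_{e_{ij}} w_{ij}(z(i)-z(j))^2$; each edge $e_{ij}$ falls into one of three cases: both endpoints in the same $\mathcal{C}$ (then only $z_{\mathcal C}$ contributes), one endpoint in some $\mathcal C$ and the other in a different set or unaffected (then $(z(i)-z(j))^2$ involves only the one nonzero coordinate), or both endpoints untouched (contributes $0$). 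Charging each cross-edge's term to the unique $\mathcal C$ containing one of its endpoints — and here the factor $2$ in the definition of $W_{\mathcal C}$ is exactly what is needed to absorb the possibility that a cross-edge between two distinct nonsingleton sets gets charged, in the worst case, but more carefully: a cross edge $e_{ij}$ with $v_i\in\mathcal C$, $v_j\notin\mathcal C$ appears in $z^\top L_{\ell-1}z$ as $w_{ij}z_{\mathcal C}(i)^2$ (since $z_{\mathcal C}(j)=0$), and in $z_{\mathcal C}^\top L_{\mathcal C} z_{\mathcal C}$ as $2w_{ij}z_{\mathcal C}(i)^2$; an edge between two distinct nonsingleton sets $\mathcal C, \mathcal C'$ contributes $w_{ij}(z_{\mathcal C}(i) - z_{\mathcal C'}(j))^2 \le 2 w_{ij} z_{\mathcal C}(i)^2 + 2 w_{ij} z_{\mathcal C'}(j)^2$, each piece covered by the factor-$2$ terms in $L_{\mathcal C}$ and $L_{\mathcal C'}$ respectively. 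Summing over edges gives $z^\top L_{\ell-1} z \le \sum_{\mathcal C} z_{\mathcal C}^\top L_{\mathcal C} z_{\mathcal C}$, and applying this columnwise to $Z = \sum Z_{\mathcal C}$ yields $\|\Pi_\ell^\bot A_{\ell-1}\|_{L_{\ell-1}}^2 \le \sum_{\mathcal C}\|\Pi_{\mathcal C}^\bot A_{\ell-1}\|_{L_{\mathcal C}}^2$, which is the claim.

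The main obstacle I anticipate is the bookkeeping in the cross-edge case — verifying that the factor $2$ in $W_{\mathcal C}$ is precisely sufficient and that edges joining two distinct nonsingleton contraction sets are not double-undercounted; the inequality $(a-b)^2 \le 2a^2 + 2b^2$ handles this cleanly, so the only real care needed is to confirm no edge's contribution is omitted and the partition structure is used correctly. A secondary point is making the spectral-to-Frobenius passage rigorous, i.e.\ that replacing $\|\cdot\|_2$ by $\|\cdot\|_F$ is legitimate given that $\sigma_\ell$ is used only as an upper bound in Proposition~\ref{proposition:restricted_similarity}; alternatively one can keep the spectral norm throughout by testing against the top singular vector of $S_{\ell-1}\Pi_\ell^\bot A_{\ell-1}$ and running the same edgewise argument on that single vector, which avoids the issue entirely.
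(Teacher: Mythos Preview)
Your edgewise decomposition with the inequality $(a-b)^2 \le 2a^2 + 2b^2$ on boundary edges is exactly the paper's argument, and your alternative at the end---test against the maximizing unit vector $a$ of $\|S_{\ell-1}\Pi_\ell^\bot A_{\ell-1}a\|_2$, run the vector inequality $\|\Pi_\ell^\bot x\|_{L_{\ell-1}}^2 \le \sum_{\mathcal C}\|\Pi_{\mathcal C}^\bot x\|_{L_{\mathcal C}}^2$ for $x=A_{\ell-1}a$, then bound each summand by its own maximum---is precisely what the paper does.

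One correction: the Frobenius detour does not work as stated. You would obtain $\|S_{\ell-1}\Pi_\ell^\bot A_{\ell-1}\|_F^2 \le \sum_{\mathcal C}\|S_{\mathcal C}\Pi_{\mathcal C}^\bot A_{\ell-1}\|_F^2$, but the right-hand side of the proposition uses the \emph{spectral} seminorm $\|\cdot\|_{L_{\mathcal C}} = \|S_{\mathcal C}\,\cdot\,\|_2$, and $\|\cdot\|_F \ge \|\cdot\|_2$ goes the wrong way there. So drop the Frobenius passage and go directly with your ``test against the top singular vector'' route; that is both correct and identical to the paper's proof.
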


The above argument therefore entails bounding the, difficult to optimize, variation cost as a function of locally computable and independent costs $ \| \Pi_\mathcal{C}^\bot \, A_{\ell-1}\|_{L_\mathcal{C} }^2$. The obtained expression is a relaxation, as it assumes that the interaction between contraction sets will be the worst possible. It might be interesting to notice that the quality of the relaxation depends on the weight of the cut between contraction sets. Taking the limit, the inequality converges to an equality as the weight of the cut shrinks. 
Also of note, the bound becomes tighter the larger the dimensionality reduction requested (the smaller $N_{\ell}= |\mathcal{P}_{\ell}|$ is, the fewer inequalities are involved in the derivation).    

\subsection{Local variation coarsening algorithms}
\label{subsec:families}

Starting from a \emph{candidate family} $\F_{\ell} = \{ \C_{1}, \C_{2}, \C_{3}, \ldots \}$, that is, an appropriately sized family of candidate contraction sets, the strategy will be to search for a small \emph{contraction family} $\mathcal{P}_{\ell} = \{ \V_{\ell-1}^{(1)}, \ldots, \V_{\ell-1}^{(N_{\ell})}\}$ with minimal variation cost $\sigma_\ell$ ($\mathcal{P}_{\ell}$ is valid if it partitions $\V_{\ell-1}$ into $N_{\ell}$ {contraction sets}). 
Every coarse vertex $v_r' \in \V_\ell$ is then formed by contracting the vertices in $\V_{\ell-1}^{(r)}$.

As a thought experiment, suppose that set $\C \in \F_{\ell}$ is chosen to be part of $\mathcal{P}_\ell$. From the decoupling argument, its contribution to $\sigma_\ell^2$ will be at most $\| \Pi_{\C}^\bot \, A_{\ell-1}\|_{L_{\C}}^2$ independently of how the other candidate sets are chosen. Moreover, the selection will yield a reduction of $N_{\ell-1}$ by $|\C|-1$ vertices. 
Thus, one needs to look for the non-singleton candidate sets $\C$ with cost 
\begin{align}
    \text{cost}_{\ell}(\C) \defequal \frac{\| \Pi_{\C}^\bot \, A_{\ell-1}\|_{L_{\C}}^2}{|\C|-1} 
    \label{eq:local_variation_cost}
\end{align}
that is as small as possible. I refer to~\eqref{eq:local_variation_cost} as \emph{local variation cost} because it captures the maximal variation of all signals from an appropriate subspace (implied by $A_{\ell-1}$) with support on $\C$.  
On the other hand, since any permissible contraction family $\mathcal{P}_{\ell}$ should be a partitioning of $\V_{\ell-1}$, choosing $\C$ precludes us from selecting any $\C'$ with which it intersects.    

Based on this intuition, Algorithm~\ref{algorithm} sequentially examines candidate sets from $\F_{\ell}$, starting from those with minimal cost. To decide whether a candidate set $\C$ will be added to $\P_{\ell}$ the algorithm asserts that all vertices in $\C$ are unmarked---essentially enforcing that all contraction sets are disjoint. Accordingly, as soon as $\C$ is added to $\mathcal{P}_{\ell}$, all vertices that are in $\C$ become marked. Candidate sets with marked vertices are pruned ($\C' \gets \C \setminus \textsf{marked}$) and their cost is updated. The algorithm terminates if $\F_{\ell}$ is, if the target reduction is achieved, or if a given error threshold is exceeded. Even though this remains implicit in the discussion, if at termination $\P_{\ell}$ does not cover every vertex of $\V_{\ell-1}$, then I compliment it with singleton sets, featuring one vertex each (and zero cost).

\begin{algorithm}[h!]
\caption{\textsf{Single-level coarsening by local variation}}
\label{algorithm}
\begin{algorithmic}[1]
\State \textbf{input}: Combinatorial Laplacian $L_{\ell-1}$, threshold $\sigma'$, and target size $n$.
\State Form the family of candidate sets $\F_{\ell} = \{ \C_{1}, \C_{2}, \C_{3}, \ldots \}$ (algorithm-specific step).
\State $N_{\ell} \gets N_{\ell-1}$, $\textsf{marked} \gets \varnothing$, $\sigma_{\ell}^2 \gets 0$.
\State Sort $\mathcal{F_{\ell}}$ in terms of increasing $\text{cost}_{\ell}(\C)$.
\While{$|\F_{\ell}|> 0$ and $N_{\ell} > n$ and $\sigma_{\ell} \leq \sigma'$} 
	\State Pop the candidate set $\C$ of minimal cost $s$ from $\F_{\ell}$.
	\If{all vertices of $\C$ are not \textsf{marked} and $\sigma' \geq \sqrt{ \sigma_{\ell}^2 + (|\C| - 1)s}$}
	\State $\textsf{marked} \gets \textsf{marked}\cup \C, \ \mathcal{P}_{\ell} \gets \mathcal{P}_{\ell} \cup \C$, $N_{\ell} \gets N_{\ell} - |\C| + 1$, $\sigma_\ell^2 \gets \sigma_\ell^2 + (|\C| - 1)s$
	\Else{}
	\State $\C' \gets \C \setminus \textsf{marked}$
	    \If{$|\C'| > 1$}
	    \State Compute $\text{cost}_{\ell}(\C')$ and insert $\C'$ into $\mathcal{F_{\ell}}$ while keeping the latter sorted.\label{algorithm:prunning}
	    \EndIf
	\EndIf
\EndWhile
\State Form the $N_{\ell}\times N_{\ell-1}$ coarsening matrix $P_{\ell}$ based on $\mathcal{P}_{\ell}$.
\State \textbf{return} $L_{\ell} \gets P_{\ell}^\mp L_{\ell-1} P_{\ell}^+$ and $\sigma_\ell$
\end{algorithmic}
\end{algorithm}

Undeniably, Algorithm~\ref{algorithm} is only one of the possible ways to select a partitioning of small variation cost. However, this algorithm stands out from other algorithms I experimented with, as it is very efficient when the subspace of interest is an eigenspace (e.g., $V=U_k$), $k$ is small, and the families $\mathcal{F}_{\ell}$ have been selected appropriately.
Denote by $\Phi = \max_{\ell} \sum_{\C \in \F_{\ell}} |\C|$ the maximum number of vertices in all candidate sets and by $\delta = \max_{\ell,\ \C \in \F_{\ell}} |\C|$ the cardinality of the maximum candidate set---I refer to these measures as \emph{family weight} and \emph{width}, respectively.
Choosing $\mathbf{R}=\mathbf{U}_k$, the computational complexity of Algorithm~\ref{algorithm} is $\tilde{O}(ckM + k^2N + ck^3 + c \, \Phi \left( \min \{ k^2 \delta + k \delta^2,\ k\delta^2 + \delta^3\} + \log{\max_{\ell}|\mathcal{F}_{\ell}|} \right) )$, which up to poly-log factors is linear on the number of edges, vertices, and $\Phi$ (see Appendix~\ref{app:complexity} for details).

If computational complexity is of no concern, one may consider the following two more sophisticated algorithms: 
\textit{The optimal algorithm.} Given a candidate family, the algorithm that optimally minimizes the sum of local variation costs constructs a graph with one vertex for each subset of a candidate set and adds an edge between every two vertices whose respective sets have a non-empty intersection. It then selects $\P_{\ell}$ as the maximum independent set of minimal weight (the weight of each vertex is a local variation cost w.r.t a set). Unfortunately, even if the size of this graph is a polynomial on $N$ this problem cannot be solved efficiently, since the minimum-weight independent set problem is NP-hard. Nevertheless, for the specific case where candidate sets correspond to edges the problem simplifies to a minimum-weight matching problem, which can be computed in $O(N_{\ell-1}^3)$ time exactly, whereas a $(2+\delta)$-approximation can be found much faster~\citep{paz20172+}. 
\textit{The quadratic variant.} A second possibility is to proceed as with Algorithm~\ref{algorithm}, but to prune each $\C' \in \F_{\ell}$ after a set $\C$ is added to $\mathcal{P}_{\ell}$. The numerical experiments indicated that this additional step improves slightly the coarsening quality, but it is not recommended for large graphs as it introduces a quadratic dependency of the complexity on $N$.

\paragraph{Candidate contraction families}. To keep coarsening efficient, I focus on families of linear weight and almost constant width. Two possibilities are considered:

\emph{Edge-based.} Here $\F_{\ell}$ contains one candidate set for each edge of $G_{\ell-1}$. This is a natural choice for coarsening---indeed, most coarsening algorithms in the literature use some form of edge contraction. 
It is straightforward to see that in this case $\Phi = 2 M$ and $\delta = 2$, meaning that the expression of the computational complexity simplifies to $\tilde{O}(c k M + ck^3 + k^2N)$. 
The drawback of contracting edges is that at each level the graph size can only reduced by at most a factor of 2, meaning that a large number of levels is necessary to achieve significant reduction\footnote{In practice, depending on the graph in question, the per-level reduction ratio $r_\ell$ is usually between 0.35 and 0.45.}.      

\emph{Neighborhood-based.} A more attractive choice is to construct one candidate set for the neighborhood of each vertex, including the vertex itself. 
Denoting by $\Delta$ the largest combinatorial degree. Since $\Phi = 2 M$, the complexity here is 
$\tilde{O}(c M (k + \min \{ k^2 \Delta + k \Delta^2,\ k\Delta^2 + \Delta^3\}) + ck^3 + k^2N)$. 
Experiments show that the neighborhood-based construction generally achieves better reduction, while being marginally slower than edge-based families.  

As a final remark, when $G$ is dense the dependency on $M$ can be dropped by sparsifying the graph before using Algorithm~\ref{algorithm}.

\section{Numerical results}

The evaluation was performed on four representative graphs, each exhibiting different structural characteristics:
\begin{itemize}
    \item \textit{Yeast}. Protein-to-protein interaction network in budding yeast, analyzed by~\citet{jeong2001lethality}. The network has $N = 1458$ vertices, $M=1948$ edges, diameter of 19, and degree between 1 and 56.
    \item \textit{Airfoil}. Finite-element graph obtained by airflow simulation~\cite{preis1997party}, consisting of $N=4000$ vertices, $M=11490$ edges, diameter of 65, and degree between 1 and 9.
    \item \textit{Minnesota}. Road network with $N=2642$ vertices, $M=3304$ edges, diameter of 99, and degree between 1 and 5~\citep{gleich2008matlabbgl}.
    \item \textit{Bunny}. Point cloud consisting of $N=2503$ vertices, $M=65490$ edges, diameter of 15, and degree between 13 and 97~\citep{turk1994zippered}. The point cloud has been sub-sampled from its original size.
\end{itemize}

I compare to the following methods for multi-level graph reduction:
\begin{itemize}
    \item \textit{Heavy edge matching.} At each level of the scheme, the contraction family is obtained by computing a maximum-weight matching with the weight of each contraction set (i.e., $(v_i,v_j)$) calculated as $w_{ij} / \max\{\deg_i, \deg_j\}$. In this manner, heavier edges connecting vertices that are well separated from the rest of the graph are contracted first. Heavy edge matching was first introduced in the algebraic multigrid literature and, perhaps due its simplicity, it has been repeatedly used for partitioning~\citep{karypis1998fast,dhillon2007weighted} and drawing~\citep{walshaw2000multilevel,hu2005efficient} graphs, as well as more recently in graph convolutional neural networks~\citep{}.    
    
    \item \textit{Algebraic distance.} This method differs from heavy edge matching in that the weight of each contraction set is calculated as $(\sum_{q=1}^Q (x_q(i) - x_q(j))^2)^{1/2}$, where $x_k$ is an $N$-dimensional test vector computed by successive sweeps of Jacobi relaxation. The complete method is described by~\citet{ron2011relaxation}, see also~\citep{chen2011algebraic}. As recommended by the authors, I performed 20 relaxation sweeps. Further, I set the number of test vectors $Q$ to equal the dimension $k$ of the space I aimed to approximate (a simple rank argument shows that $Q\geq k$ for the test vectors to span the space). 
    
    \item \textit{Affinity.} This is a vertex proximity heuristic in the spirit of the algebraic distance that was proposed by~\citet{livne2012lean} in the context of their work on the lean algebraic multigrid. As per author suggestion, the $Q=k$ test vectors are here computed by a single sweep of a Gauss–Seidel iteration.  
    
    \item \textit{Kron reduction.} At each level, the graph size is reduced by selecting a set of vertices of size $N/2$ (corresponding to the positive entries of the last eigenvector of $L$) and applying Kron reduction. The method, which was proposed by~\cite{shuman2016multiscale}, is not strictly a coarsening method as it completely rewires the vertices of the reduced graph, resulting in significantly denser graphs\footnote{As suggested by the authors, the sparsity of reduced graphs can be controlled by spectral sparsification. The sparsification step was not included in the numerical experiments since it often resulted in increased errors.}. Unfortunately, the rewiring step entails finding the Schur complement of a large Laplacian submatrix and thus generally exhibits $O(N^3)$ complexity, rendering it prohibitive for graphs of more than a few thousand vertices. Despite these drawbacks, the method is quite popular because of its elegant theoretical guarantees~\citep{dorfler2013kron}.  
\end{itemize}

Depending on how the edge matching is constructed, different variants of edge contraction methods can be implemented. At the two extremes of the complexity spectrum one finds the maximum matching of minimum weight at a complexity of $O(N^3)$~\citep{galil1986efficient} or greedily constructs a matching by visiting vertices in a random order and inducing $O(M)$ overhead~\citep{dhillon2007weighted}. 

For consistency, I implemented all edge-based methods by combining Algorithm~\ref{algorithm} with an edge-based family and substituting the local variation cost with the (negative) method-specific edge weight. This generally yields matchings of better quality (heavier weight) than visiting vertices in a predefined order, at the expense of the marginally larger $O(M\log{M})$ complexity necessary for sorting the edge weights. The choice is also motivated by the observation that the computational bottleneck of (sophisticated) edge contraction methods lies in the edge weight computation. For all experiments, I set $\epsilon' = \infty$ aiming for a fixed reduction rather than restricted spectral approximation guarantee.
The code reproducing the experiments can be accessed \href{https://www.dropbox.com/s/l8c1r605bdmmvr6/jmlr_code.zip}{\textcolor{blue}{online}}.
 
\subsection{Restricted spectral approximation}

\newcommand{\mys}{0.45\columnwidth}
\begin{figure*}[!h]
\centering
  \subfloat[yeast ($k=10$)]{\includegraphics[width=\mys,trim={0.8cm 1cm 0.8cm 1cm},clip]{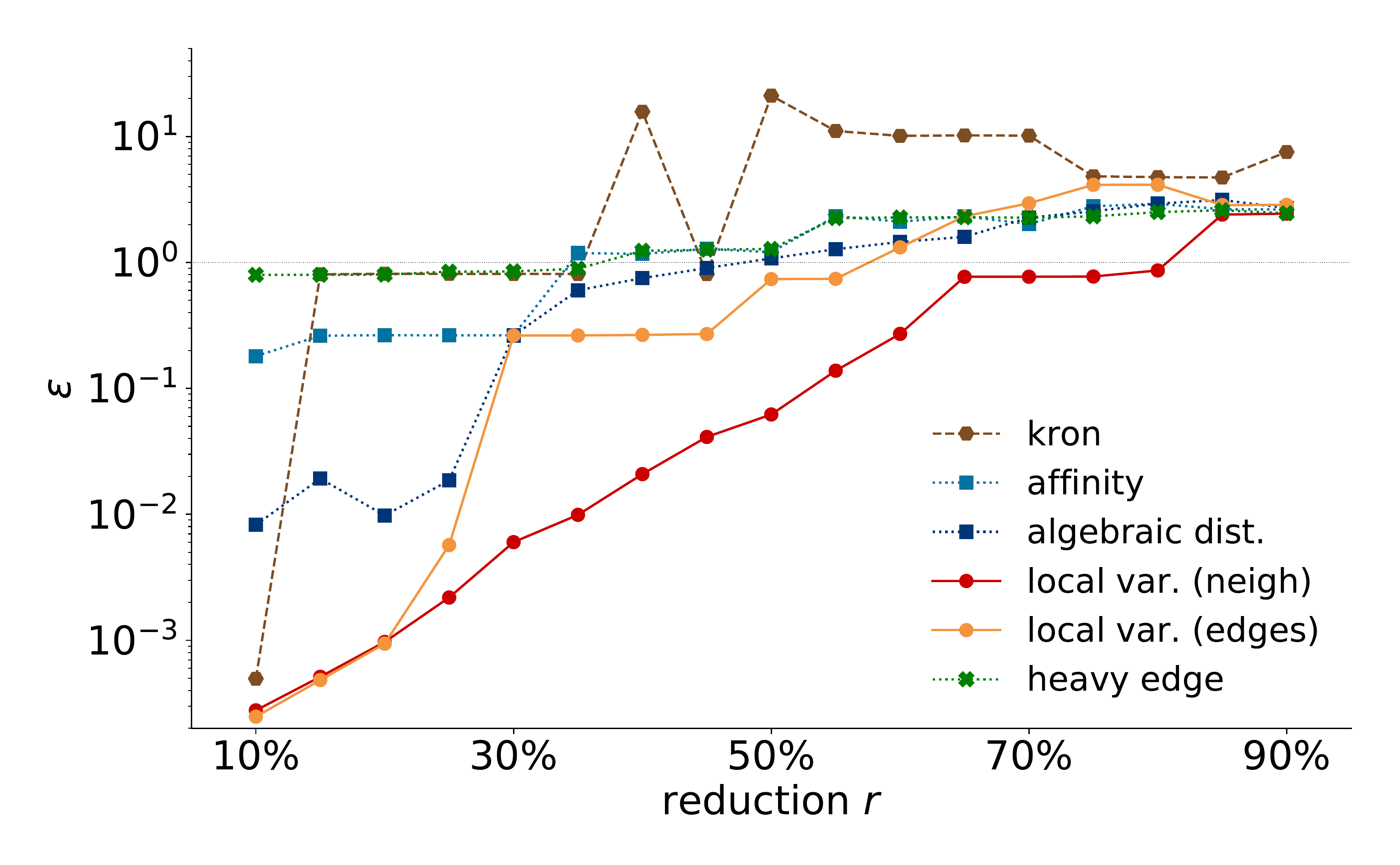}\label{fig:comp_yeast_10}}
  ~
  \subfloat[yeast ($k=40$)]{\includegraphics[width=\mys,trim={1.0cm 1cm 1.0cm 1cm},clip]{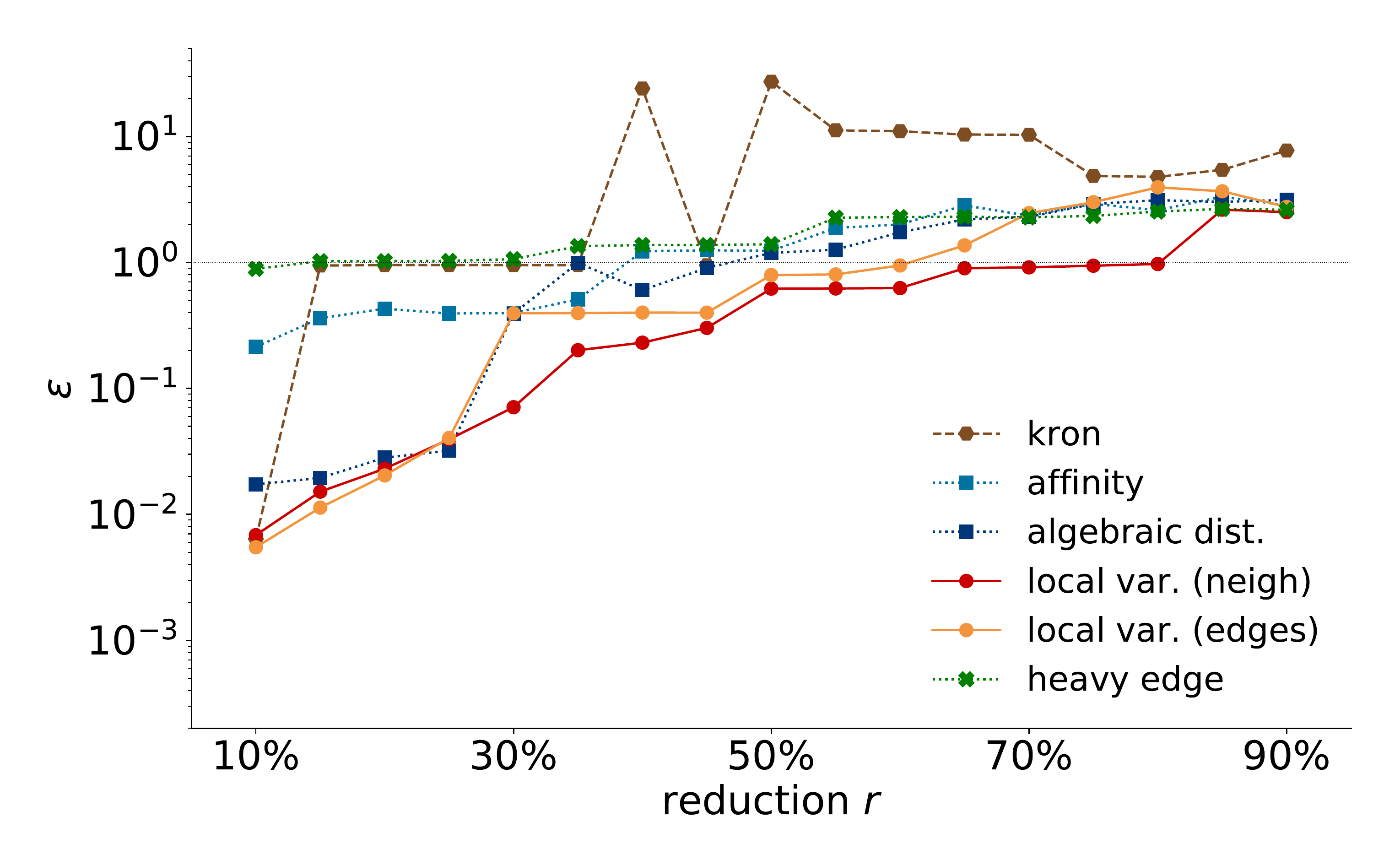}\label{fig:comp_yeast_40}} 
  \vspace{-2mm}\\
  \subfloat[airfoil ($k=10$)]{\includegraphics[width=\mys,trim={0.8cm 1cm 0.8cm 1cm},clip]{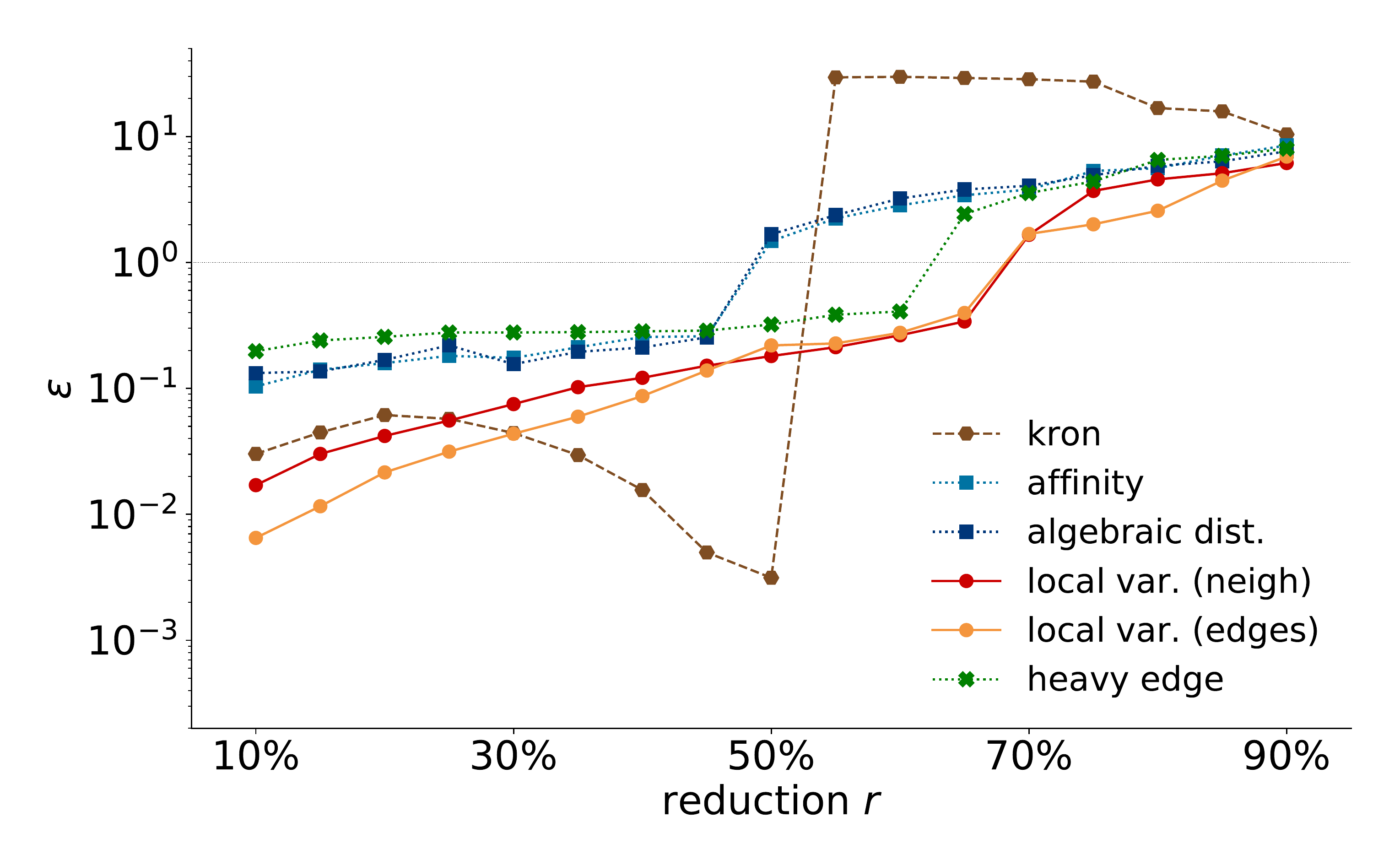}\label{fig:comp_airfoil_10}}
  ~
  \subfloat[airfoil ($k=40$)]{\includegraphics[width=\mys,trim={1.0cm 1cm 1.0cm 1cm},clip]{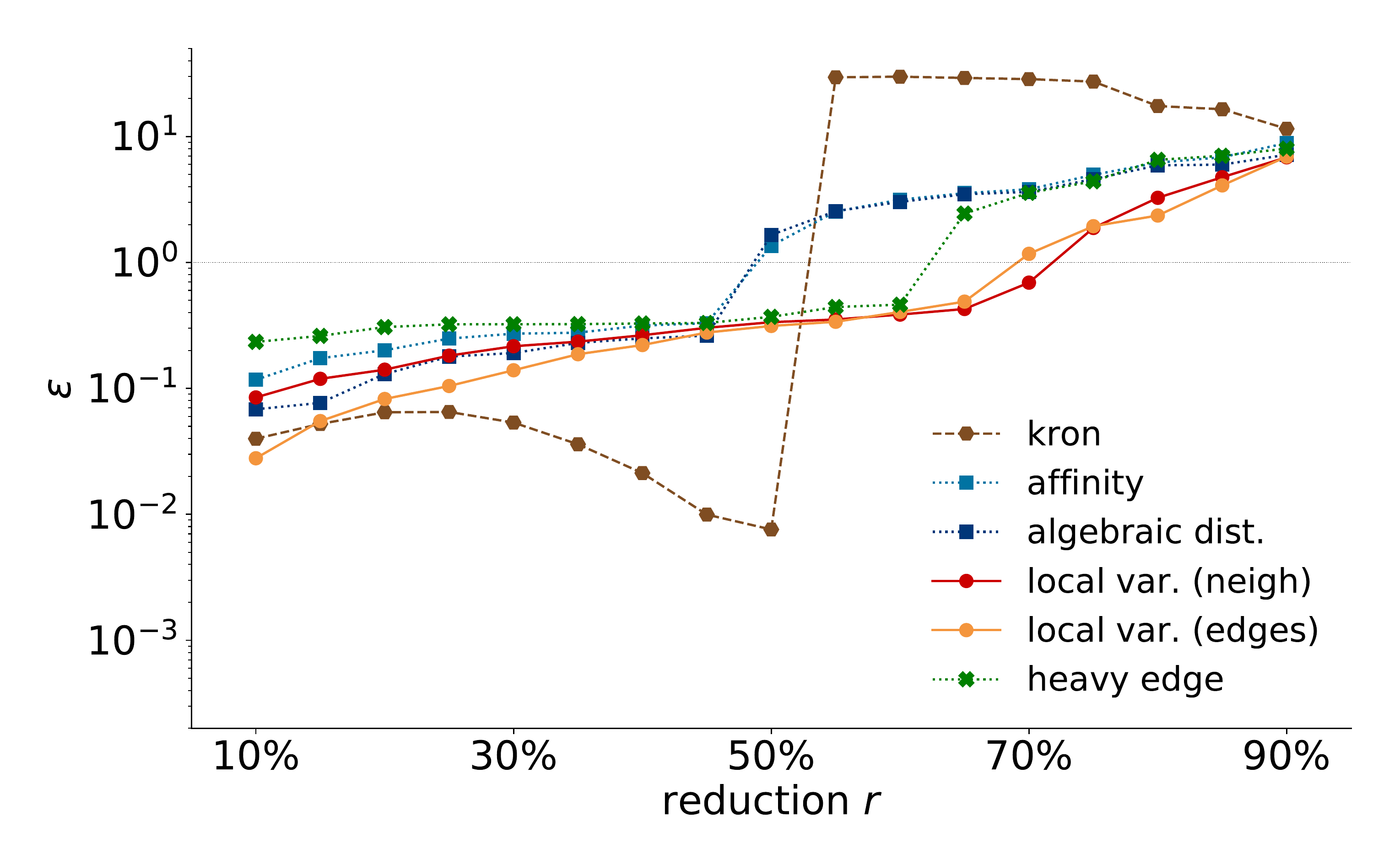}\label{fig:comp_airfoil_40}}
  \vspace{-2mm}\\
  \subfloat[bunny ($k=10$)]{\includegraphics[width=\mys,trim={0.8cm 1cm 0.8cm 1cm},clip]{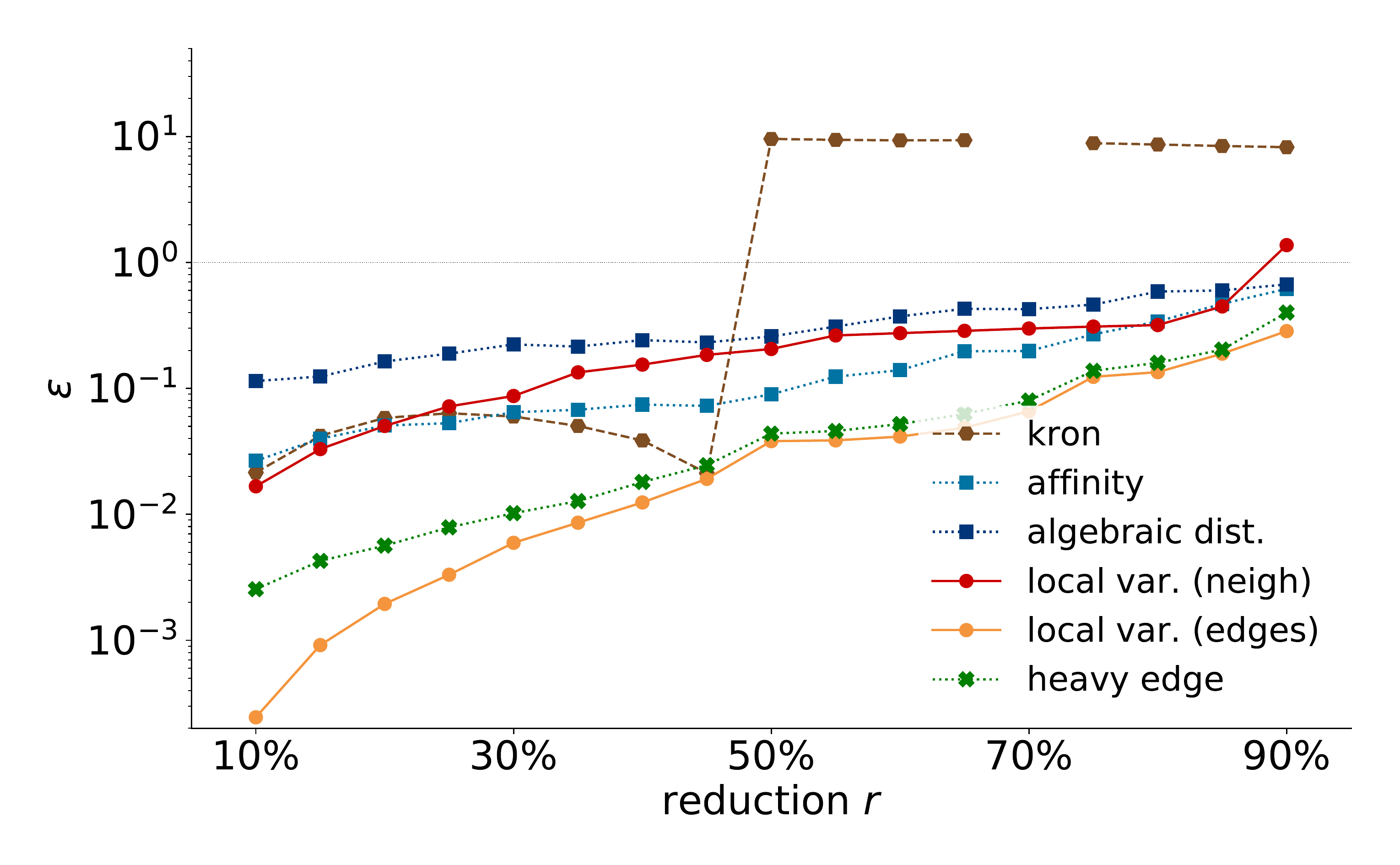}\label{fig:comp_bunny_10}}
  ~
  \subfloat[bunny ($k=40$)]{\includegraphics[width=\mys,trim={1.0cm 1cm 1.0cm 1cm},clip]{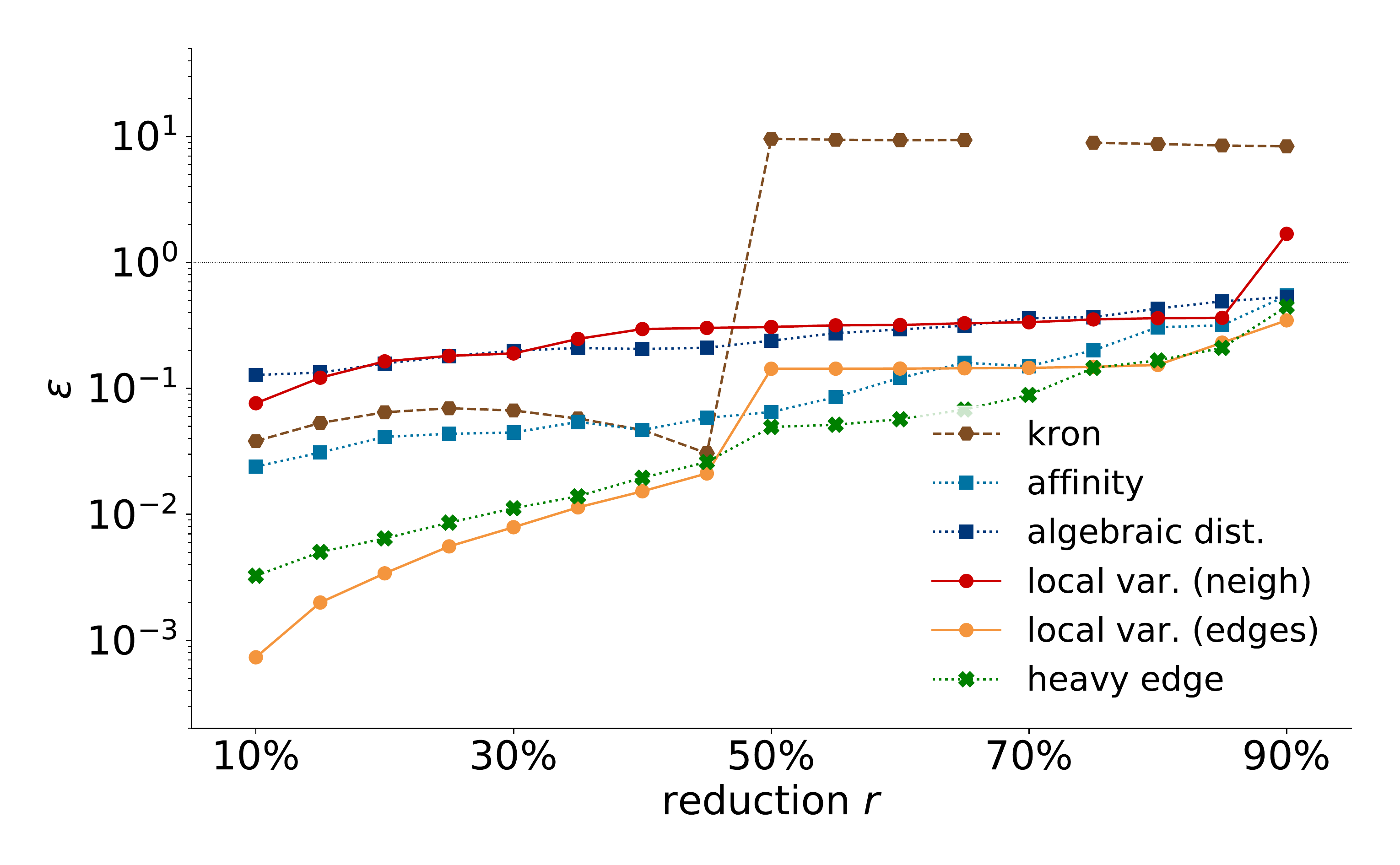}\label{fig:comp_bunny_40}}
  \vspace{-2mm}\\
  \subfloat[minnesota ($k=10$)]{\includegraphics[width=\mys,trim={0.8cm 1cm 0.8cm 1cm},clip]{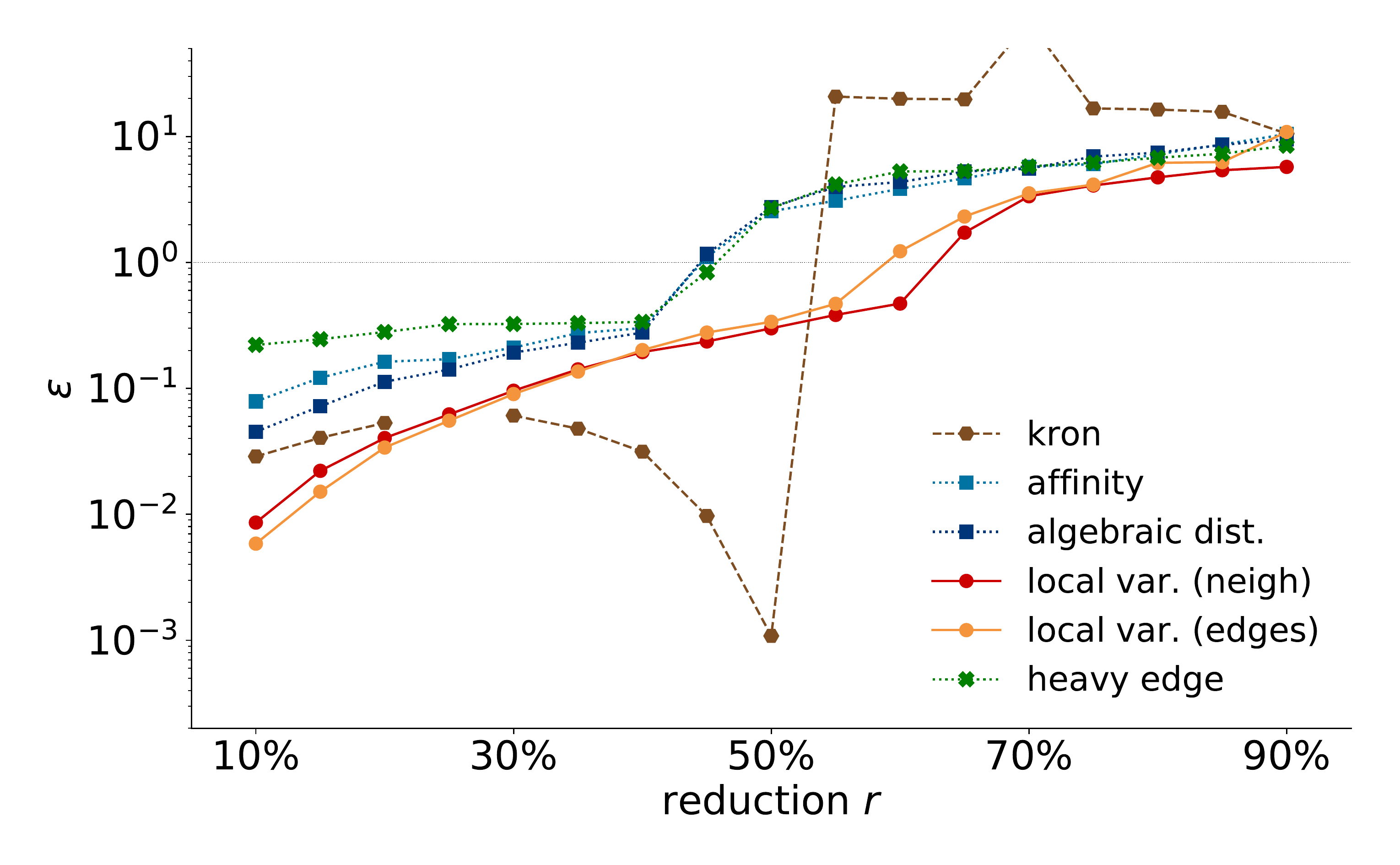}\label{fig:comp_minnesota_10}}
  ~
  \subfloat[minnesota ($k=40$)]{\includegraphics[width=\mys,trim={1.0cm 1cm 1.0cm 1cm},clip]{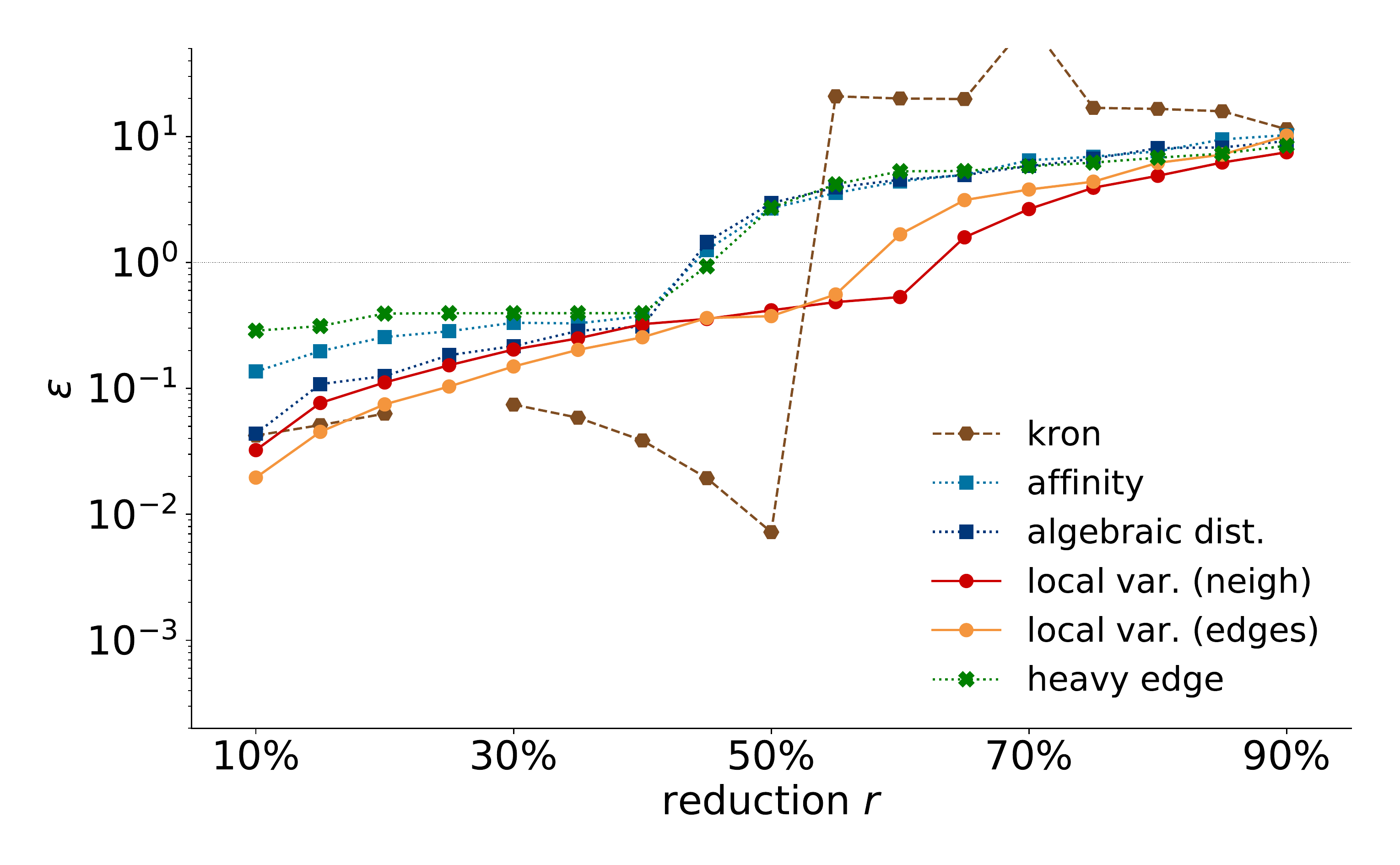}\label{fig:comp_minnesota_40}}
  \vspace{2mm}
  \caption{Quality of approximation comparison for four representative graphs (rows) and two subspace sizes (columns).
  \label{fig:comparison}}
\end{figure*}

The first experiment tests how well $L_c$ approximates the action of $L$ with respect to the subspace $U_k$ of smallest variation. In other words, for each method I plot the smallest $\epsilon$ such that the following equation holds: 
\begin{align}
    \norm{x - \p{x}}_{L} \leq \epsilon \, \norm{x}_L \quad \text{for all} \quad x \in \mathbf{U}_k.
\end{align}
The results are summarized in Figure~\ref{fig:comparison} for two representative subspaces of size $k=10$ and $k=40$. %
With the exception of the Kron reduction that was repeated 10 times, all methods are deterministic and thus were run only once. 

Overall, it can be seen that local variation methods outperform other coarsening methods in almost every problem instance. The gap is particularly prominent for large reductions, where multiple levels are employed. Neighborhood-based contraction yields the best result overall, mainly because it achieves the same reduction in fewer levels. 
Interestingly, local variation (and coarsening) methods in many cases also outperform Kron reduction, even though the latter is more demanding computationally. 

I elaborate further on four points stemming from the results:

\textit{In most instances, a reduction of up to 70\% is feasible while maintaining a decent approximation.} The attained approximation clearly is a function of the graph in question and $k$. Nevertheless, in almost all experiments, the best coarsening method could reduce the graph size by at least 70\% while also ensuring that $\epsilon<1$ (horizontal black line). This is an encouraging sign, illustrating that significant dimensionality reduction is often possible, without sacrificing too much the solution quality.   

\textit{Following intuition, it is generally harder to approximate subspaces of larger dimension $k$, but not excessively so}. Increasing $k$ from 10 to 40 in most cases increases $\epsilon$ only slightly. The only case where the approximation becomes profoundly better with small subspaces is with small reduction ratios $r$. For instance, coarsening the yeast graph results in an impressive approximation for all $r<30\%$ when $k=10$, whereas $\epsilon$ increases almost by an order of magnitude when $k$ becomes 40.  

\textit{Kron reduction is an effective way to half the graph size but can result in poor approximation otherwise.} If one is willing to sacrifice in terms of efficiency, Kron reduction effectively reduces the size of the graph by a factor of two (with the exception of the yeast graph). What might be startling is that the method behaves poorly for different reduction ratios. Three main factors cause this deterioration of performance. First, the sampling set is constructed based on the polarity of $u_N$ and has cardinality close to $N/2$~\citep{shuman2016multiscale}. Therefore, if in any level one tries to reduce the graph size by less than half, the last eigenvector heuristic cannot be used exactly. Second, numerical instability issues sometimes manifest when $r$ exceeds 50\%. Though I attempted to improve the original implementation featured in the PyGSP toolbox, some problem instances could not be solved successfully (hence the missing markers). The third reason is described next.

\textit{Coarse levels should aim to approximate the original graph and not the proceeding levels.} The conventional approach in multi-level schemes is to aim at each level to approximate as closely as possible the graph of the previous level. This can lead to sudden increase of error at consecutive levels (e.g. notice the minnesota error as $r$ approaches 50\%) as decisions early in the scheme can have a large impact later on. On the other hand, by Proposition~\ref{proposition:restricted_similarity} local variation methods modify the cost function minimized at each level, resulting in smoother transitions between levels and tighter approximations at large $r$.   


\subsection{Spectrum approximation}

The second part of the experiments examines the coarsening through the lens of spectral graph theory. 
The premise is that, since the spectrum of the Laplacian distills information about the graph structure, one may interpret the spectral distance as a proxy for the structural similarity of the two graphs. This is by no means a new idea---the Laplacian spectrum is a common ingredient in accessing graph similarity~\citep{wilson2008study}.

Tables~\ref{table:K=10} and~\ref{table:K=40} report the mean relative eigenvalue error defined as $\frac{1}{k} \sum_{i=1}^k \frac{|\p{\lambda_i} - \lambda_i|}{ \lambda_i} $ for two representative 
$k$, respectively 10 and 40. The results for $k=80$ were consistent with those presented here, and are not reported here for reasons of brevity.

\begin{table}[h!]
\footnotesize\centering
\resizebox{0.75\textwidth}{!}{
\begin{tabular}{@{}rccccccc@{}}
\toprule
                            & $r$                  & \begin{tabular}[c]{@{}c@{}}heavy\\ edge\end{tabular} & \begin{tabular}[c]{@{}c@{}}local var.\\ (edges)\end{tabular} & \begin{tabular}[c]{@{}c@{}}local var.\\ (neigh.)\end{tabular} & \begin{tabular}[c]{@{}c@{}}algebraic\\ distance\end{tabular} & affinity             & \begin{tabular}[c]{@{}c@{}}Kron\\ reduction\end{tabular}\\ \midrule
\multirow{3}{*}{yeast}      &                  30\% & 0.343                & 0.123                & \textbf{0.003}       & 0.145                & 0.177                & 0.054               \\
                            &                  50\% & 0.921                & 0.459                & \textbf{0.034}       & 0.605                & 1.020                & 1.321               \\
                            &                  70\% & 3.390                & 3.495                & \textbf{0.406}       & 3.504                & 3.732                & 1.865               \\\cmidrule(l){2-8}
\multirow{3}{*}{airfoil}    &                  30\% & 0.277                & \textbf{0.036}       & 0.065                & 0.213                & 0.245                & 0.352               \\
                            &                  50\% & 0.516                & 0.201                & \textbf{0.197}       & 1.268                & 1.375                & 0.912               \\
                            &                  70\% & 4.744                & 1.045                & \textbf{0.928}       & 5.544                & 5.775                & 1.984               \\\cmidrule(l){2-8}
\multirow{3}{*}{bunny}      &                  30\% & 0.019                & \textbf{0.006}       & 0.061                & 0.277                & 0.068                & 0.335               \\
                            &                  50\% & 0.064                & \textbf{0.046}       & 0.190                & 0.435                & 0.135                & 0.801               \\
                            &                  70\% & 0.126                & \textbf{0.081}       & 0.323                & 0.692                & 0.295                & 1.812               \\\cmidrule(l){2-8}
\multirow{3}{*}{minnesota}  &                  30\% & 0.332                & 0.088                & \textbf{0.078}       & 0.232                & 0.280                & 0.318               \\
                            &                  50\% & 2.018                & 0.432                & \textbf{0.310}       & 2.398                & 2.426                & 0.882               \\
                            &                  70\% & 9.299                & 4.579                & \textbf{1.866}       & 9.938                & 9.177                & 1.951               \\\bottomrule
\end{tabular}
}
\caption{Mean relative error for the first $k=10$ eigenvalues, for different graphs, reduction ratios, and coarsening methods.}
\label{table:K=10}
\end{table}


\begin{table}[h!]
\footnotesize\centering
\resizebox{0.75\textwidth}{!}{
\begin{tabular}{@{}rccccccc@{}}
\toprule
                            & $r$                  & \begin{tabular}[c]{@{}c@{}}heavy\\ edge\end{tabular} & \begin{tabular}[c]{@{}c@{}}local var.\\ (edges)\end{tabular} & \begin{tabular}[c]{@{}c@{}}local var.\\ (neigh.)\end{tabular} & \begin{tabular}[c]{@{}c@{}}algebraic\\ distance\end{tabular} & affinity             & \begin{tabular}[c]{@{}c@{}}Kron\\ reduction\end{tabular}\\ \midrule
\multirow{3}{*}{yeast}      &                  30\% & 0.328                & 0.113                & \textbf{0.023}       & 0.094                & 0.195                & 0.120               \\
                            &                  50\% & 0.879                & 0.430                & \textbf{0.130}       & 0.517                & 0.769                & 1.196               \\
                            &                  70\% & 2.498                & 2.182                & \textbf{0.451}       & 2.560                & 2.229                & 1.946               \\\cmidrule(l){2-8}
\multirow{3}{*}{airfoil}    &                  30\% & 0.277                & \textbf{0.095}       & 0.181                & 0.189                & 0.267                & 0.368               \\
                            &                  50\% & 0.549                & \textbf{0.325}       & 0.349                & 0.698                & 0.862                & 0.960               \\
                            &                  70\% & 2.268                & 0.872                & \textbf{0.839}       & 2.373                & 2.531                & 2.078               \\\cmidrule(l){2-8}
\multirow{3}{*}{bunny}      &                  30\% & 0.023                & \textbf{0.008}       & 0.085                & 0.205                & 0.052                & 0.294               \\
                            &                  50\% & 0.066                & \textbf{0.058}       & 0.181                & 0.346                & 0.089                & 0.660               \\
                            &                  70\% & 0.128                & \textbf{0.098}       & 0.299                & 0.509                & 0.202                & 1.192               \\ \cmidrule(l){2-8}
\multirow{3}{*}{minnesota}  &                  30\% & 0.353                & 0.118                & \textbf{0.115}       & 0.209                & 0.306                & 0.337               \\
                            &                  50\% & 1.259                & 0.468                & \textbf{0.383}       & 1.342                & 1.264                & 0.933               \\
                            &                  70\% & 4.162                & 2.111                & \textbf{1.612}       & 4.145                & 4.185                & 2.090               \\
\bottomrule
\end{tabular}
}
\caption{Mean relative error for the first $k=40$ eigenvalues, for different graphs, reduction ratios, and coarsening methods.}
\label{table:K=40}
\end{table}

As expected, the reduction ratio plays a major role in the closeness of Laplacian spectra. Indeed, for most cases the eigenvalue error jumps by almost an order of magnitude whenever $r$ increases by 20\%. Yet, in most cases acceptable errors can be achieved even when the coarse graph is as small as one third of the size of the original graph (corresponding to $r = 70\%$). 

It might also be interesting to observe that there is a general agreement between the trends reported here and those described in the matrix approximation experiment. In particular, if one sorts the tested methods from best to worse, he/she will obtain an ordering that is generally consistent across the two experiments, with local variation methods giving the best approximation by a significant margin. A case in point: for the maximum ratio, the best local variation method is on average 3.9$\times$ better than the leading state-of-the-art coarsening method. The gain is 2.6$\times$ if the Kron reduction is also included in the comparison.

Overall, it can be deduced from these results that local variation methods coarsen a graph in a manner that preserves its spectrum. This is in accordance with the theoretical results. As it was shown by Theorem~\ref{theorem:eigenvalues}, if $L$ and $L_c$ act similar w.r.t. all vectors in $\textbf{U}_k$, then their eigenvalues cannot be far apart. Therefore, by aiming for restricted approximation, local variation methods implicitly also guarantee spectrum approximation.

\subsection{Efficiency}

The last experiment tests computational efficiency. I adopt a simple approach and aim to coarsen a 10-regular graph of increasing size. I measure the execution time of the six different methods for graph reduction and report the mean over 10 iterations, while capping computation at 100 seconds. 

The results are displayed in Figures~\ref{fig:scalability_10} and~\ref{fig:scalability_40} for subspaces of size $10$ and 40, respectively. As with most such comparisons, the actual numbers are only indicative and depend on the programming language utilized (Python), processing paradigm (no parallelism was employed), and hardware architecture (2.2GHz CPU)\footnote{I expect that a significant speedup can be achieved by compiling the code to native machine instructions as well as by parallelizing the local variation cost computation.}.

Focusing on the trends, with the exception of Kron reduction and affinity, most methods scale quasi-linearly with the number of edges. Interestingly, local variation methods are quite competitive and do not sacrifice much as compared to the straightforward heavy edge matching. One can also observe that constructing $\F_{\ell-1}$ based on neighborhoods results in slightly slower computation that with the edge-based method construction, because in the latter case the local variation cost can be computed much more efficiently.

\begin{figure}[t]
  \centering
  \vspace{0mm}
  \subfloat[$k=10$]{\includegraphics[width=0.48\columnwidth,trim={0.8cm 1cm 0.8cm 1cm},clip]{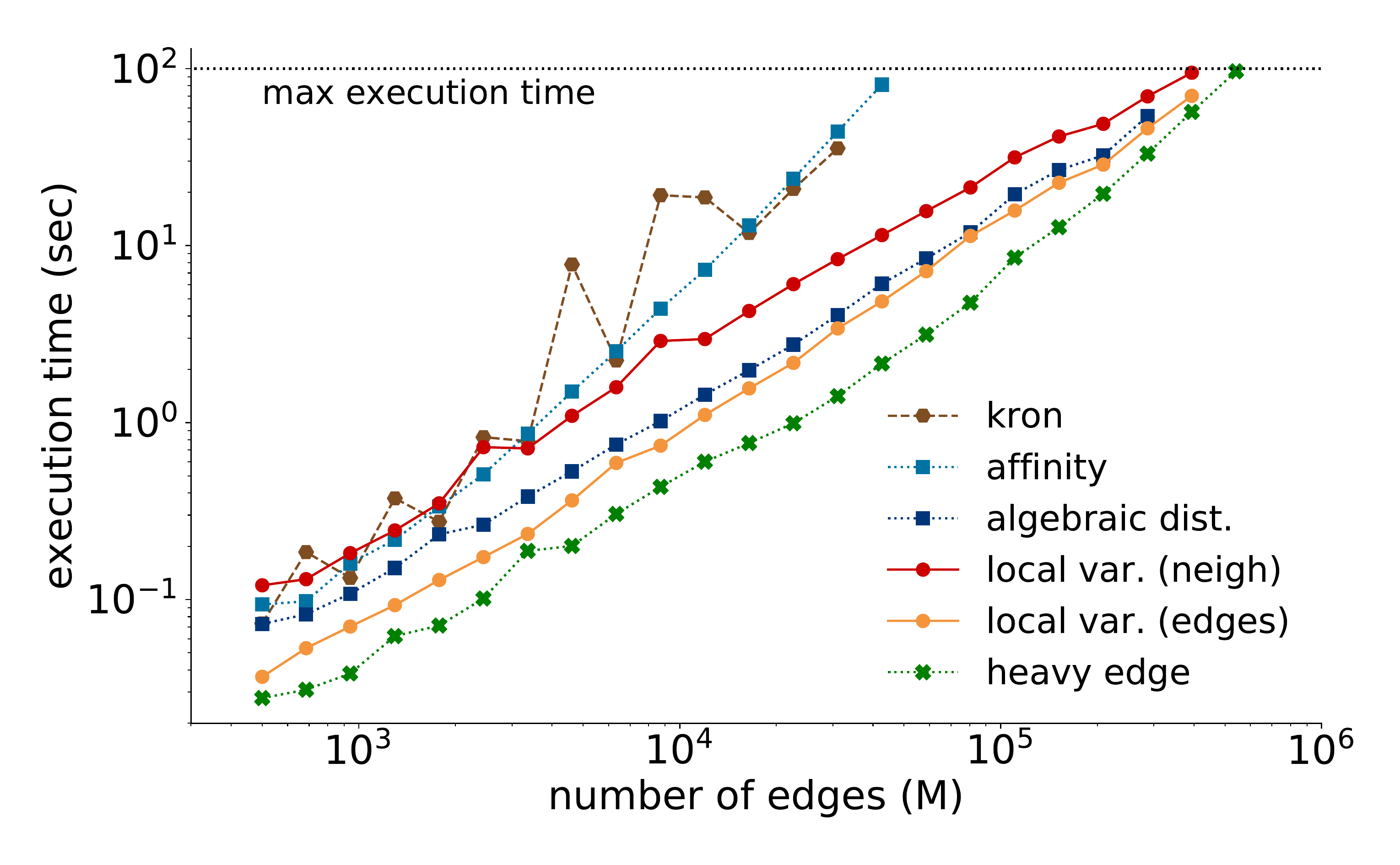}\label{fig:scalability_10}}
  ~
  \subfloat[$k=40$]{\includegraphics[width=0.48\columnwidth,trim={0.8cm 1cm 0.8cm 1cm},clip]{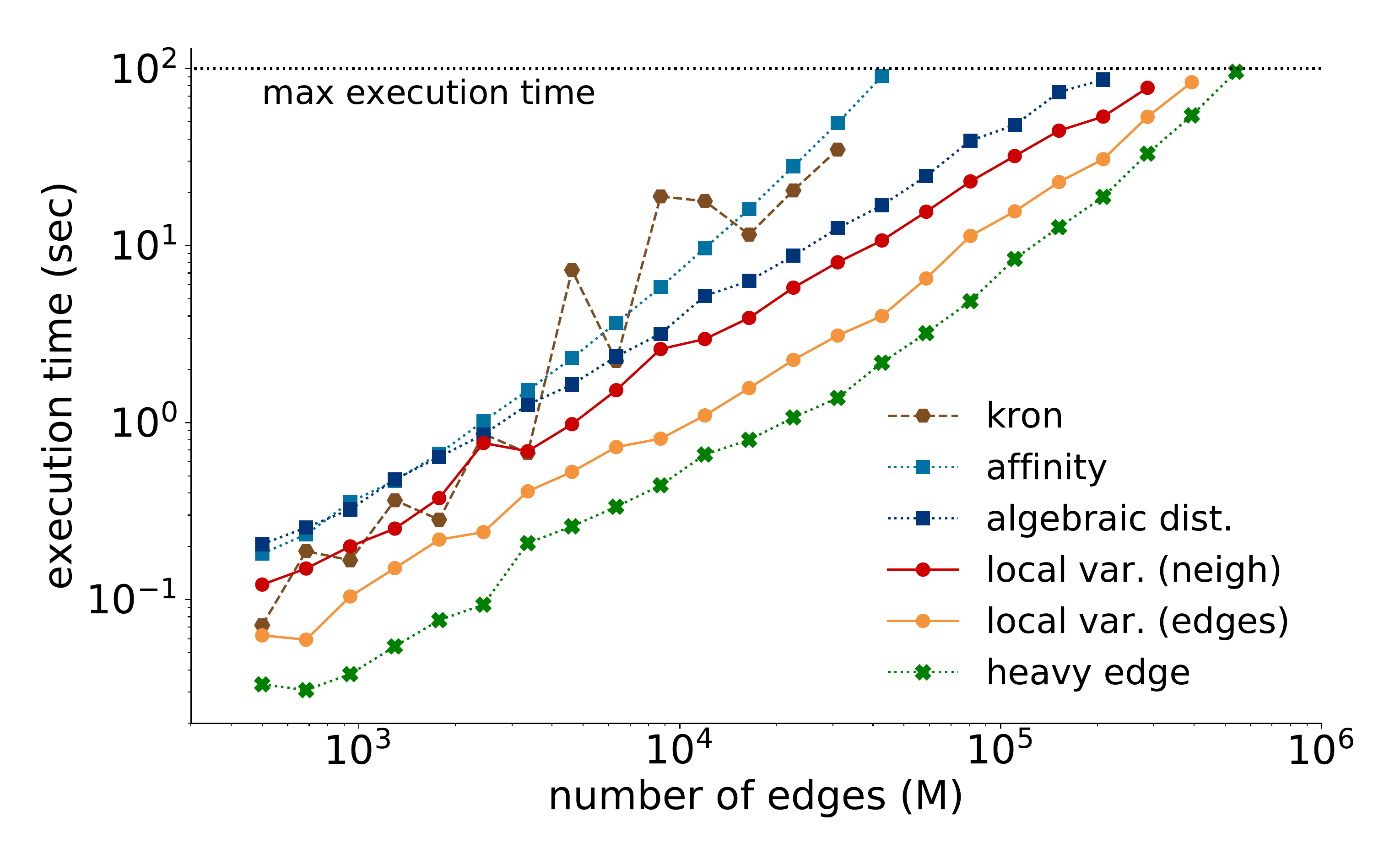}\label{fig:scalability_40}}
 \vspace{1mm}
\caption{Execution time as a function of the number of edges of the graph.\vspace{-3mm}}
\end{figure}

\section{Conclusions}

Graph reduction techniques are commonly used in modern graph processing and machine learning pipelines in order to facilitate the processing and analysis of large graphs. Nevertheless, more often than not these techniques have been designed based on intuition and possess no rigorous justification or provable guarantees.

This work considered the graph reduction problem from the perspective of restricted spectral approximation---a modification of the measure utilized for the construction of graph sparsifiers. This measure is especially relevant when restricted to Laplacian eigenspaces of small variation, as it implies strong spectral and cut guarantees.
The analysis of restricted spectral approximation has lead to new nearly-linear time algorithms for graph coarsening that provably approximate (a portion of) the graph spectrum, as well as the cut structure of a large graph.

A number of important questions remain open at the point of concluding this manuscript. To begin with, I am currently unaware of a rigorous way to determine how much one may benefit from reduction---that is, how small can $\epsilon$ be for a specific subspace and target $n$? In addition, no polynomial-time algorithm for graph coarsening exists that provably approximates the minimal achievable $\epsilon$. Finally, though I lack a formal proof, I also suspect that stronger cut guarantees can be derived from restricted spectral approximation. I argue that the potential of graph reduction cannot be fully realized until these fundamental questions are satisfactorily addressed. 

\paragraph{Acknowledgements.} This work was kindly supported by the Swiss National Science Foundation (SNSF) in the context of the project ``\emph{Deep Learning for Graph-structured Data}'', grant number PZ00P2$\_$179981. I would also like to thank Rodrigo C. G. Pena and Nguyen Q. Tran for their useful comments.

\bibliography{references}

\begin{thebibliography}{61}
\providecommand{\natexlab}[1]{#1}
\providecommand{\url}[1]{\texttt{#1}}
\expandafter\ifx\csname urlstyle\endcsname\relax
  \providecommand{\doi}[1]{doi: #1}\else
  \providecommand{\doi}{doi: \begingroup \urlstyle{rm}\Url}\fi

\bibitem[Alon(1986)]{alon1986eigenvalues}
N.~Alon.
\newblock Eigenvalues and expanders.
\newblock \emph{Combinatorica}, 6\penalty0 (2):\penalty0 83--96, 1986.

\bibitem[Alon and Milman(1985)]{alon1985lambda1}
N.~Alon and V.~D. Milman.
\newblock $\lambda$1, isoperimetric inequalities for graphs, and
  superconcentrators.
\newblock \emph{Journal of Combinatorial Theory, Series B}, 38\penalty0
  (1):\penalty0 73--88, 1985.

\bibitem[Ardizzone et~al.(2018)Ardizzone, Kruse, Wirkert, Rahner, Pellegrini,
  Klessen, Maier-Hein, Rother, and K{\"o}the]{ardizzone2018analyzing}
L.~Ardizzone, J.~Kruse, S.~Wirkert, D.~Rahner, E.~W. Pellegrini, R.~S. Klessen,
  L.~Maier-Hein, C.~Rother, and U.~K{\"o}the.
\newblock Analyzing inverse problems with invertible neural networks.
\newblock \emph{arXiv preprint arXiv:1808.04730}, 2018.

\bibitem[Batson et~al.(2013)Batson, Spielman, Srivastava, and
  Teng]{Batson:2013:SSG:2492007.2492029}
J.~Batson, D.~A. Spielman, N.~Srivastava, and S.-H. Teng.
\newblock Spectral sparsification of graphs: Theory and algorithms.
\newblock \emph{Commun. ACM}, 56\penalty0 (8):\penalty0 87--94, Aug. 2013.

\bibitem[Belkin and Niyogi(2003)]{belkin2003laplacian}
M.~Belkin and P.~Niyogi.
\newblock Laplacian eigenmaps for dimensionality reduction and data
  representation.
\newblock \emph{Neural computation}, 15\penalty0 (6):\penalty0 1373--1396,
  2003.

\bibitem[Boutsidis et~al.(2015)Boutsidis, Kambadur, and
  Gittens]{boutsidis2015spectral}
C.~Boutsidis, P.~Kambadur, and A.~Gittens.
\newblock Spectral clustering via the power method-provably.
\newblock In \emph{International Conference on Machine Learning}, pages 40--48,
  2015.

\bibitem[Bronstein et~al.(2017)Bronstein, Bruna, LeCun, Szlam, and
  Vandergheynst]{7974879}
M.~M. Bronstein, J.~Bruna, Y.~LeCun, A.~Szlam, and P.~Vandergheynst.
\newblock Geometric deep learning: Going beyond euclidean data.
\newblock \emph{IEEE Signal Processing Magazine}, 34\penalty0 (4):\penalty0
  18--42, July 2017.
\newblock ISSN 1053-5888.
\newblock \doi{10.1109/MSP.2017.2693418}.

\bibitem[Bruna et~al.(2014)Bruna, Zaremba, Szlam, and Lecun]{bruna2014spectral}
J.~Bruna, W.~Zaremba, A.~Szlam, and Y.~Lecun.
\newblock Spectral networks and locally connected networks on graphs.
\newblock In \emph{International Conference on Learning Representations
  (ICLR2014), CBLS, April 2014}, 2014.

\bibitem[Calandriello et~al.(2018)Calandriello, Koutis, Lazaric, and
  Valko]{calandriello2018improved}
D.~Calandriello, I.~Koutis, A.~Lazaric, and M.~Valko.
\newblock Improved large-scale graph learning through ridge spectral
  sparsification.
\newblock In \emph{International Conference on Machine Learning}, 2018.

\bibitem[Chen et~al.(2004)Chen, Davis, Hall, Li, Patel, and
  Stewart]{chen2004interlacing}
G.~Chen, G.~Davis, F.~Hall, Z.~Li, K.~Patel, and M.~Stewart.
\newblock An interlacing result on normalized laplacians.
\newblock \emph{SIAM Journal on Discrete Mathematics}, 18\penalty0
  (2):\penalty0 353--361, 2004.

\bibitem[Chen and Safro(2011)]{chen2011algebraic}
J.~Chen and I.~Safro.
\newblock Algebraic distance on graphs.
\newblock \emph{SIAM Journal on Scientific Computing}, 33\penalty0
  (6):\penalty0 3468--3490, 2011.

\bibitem[Chung(1997)]{chung1997spectral}
F.~R. Chung.
\newblock \emph{Spectral graph theory}.
\newblock Number~92. American Mathematical Soc., 1997.

\bibitem[Colley et~al.(2017)Colley, Lin, Hu, and Aeron]{colley2017algebraic}
C.~Colley, J.~Lin, X.~Hu, and S.~Aeron.
\newblock Algebraic multigrid for least squares problems on graphs with
  applications to hodgerank.
\newblock In \emph{Parallel and Distributed Processing Symposium Workshops
  (IPDPSW), 2017 IEEE International}, pages 627--636. IEEE, 2017.

\bibitem[Davis and Kahan(1970)]{davis1970rotation}
C.~Davis and W.~M. Kahan.
\newblock The rotation of eigenvectors by a perturbation. iii.
\newblock \emph{SIAM Journal on Numerical Analysis}, 7\penalty0 (1):\penalty0
  1--46, 1970.

\bibitem[Defferrard et~al.(2016)Defferrard, Bresson, and
  Vandergheynst]{defferrard2016convolutional}
M.~Defferrard, X.~Bresson, and P.~Vandergheynst.
\newblock Convolutional neural networks on graphs with fast localized spectral
  filtering.
\newblock In \emph{Advances in Neural Information Processing Systems}, pages
  3844--3852, 2016.

\bibitem[Dhillon et~al.(2007)Dhillon, Guan, and Kulis]{dhillon2007weighted}
I.~S. Dhillon, Y.~Guan, and B.~Kulis.
\newblock Weighted graph cuts without eigenvectors a multilevel approach.
\newblock \emph{IEEE transactions on pattern analysis and machine
  intelligence}, 29\penalty0 (11), 2007.

\bibitem[D{\"o}rfler and Bullo(2013)]{dorfler2013kron}
F.~D{\"o}rfler and F.~Bullo.
\newblock Kron reduction of graphs with applications to electrical networks.
\newblock \emph{IEEE Trans. on Circuits and Systems}, 60\penalty0 (1):\penalty0
  150--163, 2013.

\bibitem[Fang et~al.(1994)Fang, Loparo, and Feng]{fang1994inequalities}
Y.~Fang, K.~A. Loparo, and X.~Feng.
\newblock Inequalities for the trace of matrix product.
\newblock \emph{IEEE Transactions on Automatic Control}, 39\penalty0
  (12):\penalty0 2489--2490, 1994.

\bibitem[Galil(1986)]{galil1986efficient}
Z.~Galil.
\newblock Efficient algorithms for finding maximum matching in graphs.
\newblock \emph{ACM Computing Surveys (CSUR)}, 18\penalty0 (1):\penalty0
  23--38, 1986.

\bibitem[Gandhi(2016)]{gandhi2016improvement}
S.~Gandhi.
\newblock Improvement of the cascadic multigrid algorithm with a gauss seidel
  smoother to efficiently compute the fiedler vector of a graph laplacian.
\newblock \emph{arXiv preprint arXiv:1602.04386}, 2016.

\bibitem[Gavish et~al.(2010)Gavish, Nadler, and Coifman]{gavish2010multiscale}
M.~Gavish, B.~Nadler, and R.~R. Coifman.
\newblock Multiscale wavelets on trees, graphs and high dimensional data:
  Theory and applications to semi supervised learning.
\newblock In \emph{ICML}, pages 367--374, 2010.

\bibitem[Gleich(2008)]{gleich2008matlabbgl}
D.~Gleich.
\newblock Matlabbgl. a matlab graph library.
\newblock \emph{Institute for Computational and Mathematical Engineering,
  Stanford University}, 2008.

\bibitem[Halko et~al.(2011)Halko, Martinsson, and Tropp]{halko2011finding}
N.~Halko, P.-G. Martinsson, and J.~A. Tropp.
\newblock Finding structure with randomness: Probabilistic algorithms for
  constructing approximate matrix decompositions.
\newblock \emph{SIAM review}, 53\penalty0 (2):\penalty0 217--288, 2011.

\bibitem[Hendrickson and Leland(1995)]{hendrickson1995multi}
B.~Hendrickson and R.~W. Leland.
\newblock A multi-level algorithm for partitioning graphs.
\newblock \emph{SC}, 95\penalty0 (28):\penalty0 1--14, 1995.

\bibitem[Hirani et~al.(2015)Hirani, Kalyanaraman, and Watts]{hirani2015graph}
A.~N. Hirani, K.~Kalyanaraman, and S.~Watts.
\newblock Graph laplacians and least squares on graphs.
\newblock In \emph{Parallel and Distributed Processing Symposium Workshop
  (IPDPSW), 2015 IEEE International}, pages 812--821. IEEE, 2015.

\bibitem[Hu(2005)]{hu2005efficient}
Y.~Hu.
\newblock Efficient, high-quality force-directed graph drawing.
\newblock \emph{Mathematica Journal}, 10\penalty0 (1):\penalty0 37--71, 2005.

\bibitem[Jeong et~al.(2001)Jeong, Mason, Barabasi, and
  Oltvai]{jeong2001lethality}
H.~Jeong, S.~Mason, A.~Barabasi, and Z.~Oltvai.
\newblock Lethality and centrality in protein networks.
\newblock \emph{arXiv preprint cond-mat/0105306}, 2001.

\bibitem[Karger(1999)]{karger1999random}
D.~R. Karger.
\newblock Random sampling in cut, flow, and network design problems.
\newblock \emph{Mathematics of Operations Research}, 24\penalty0 (2):\penalty0
  383--413, 1999.

\bibitem[Karypis and Kumar(1998)]{karypis1998fast}
G.~Karypis and V.~Kumar.
\newblock A fast and high quality multilevel scheme for partitioning irregular
  graphs.
\newblock \emph{SIAM Journal on scientific Computing}, 20\penalty0
  (1):\penalty0 359--392, 1998.

\bibitem[Koren(2002)]{koren2002fast}
D.~H.~Y. Koren.
\newblock A fast multi-scale method for drawing large graphs.
\newblock \emph{Journal of graph algorithms and applications}, 6\penalty0
  (3):\penalty0 179--202, 2002.

\bibitem[Koutis et~al.(2010)Koutis, Miller, and Peng]{5671167}
I.~Koutis, G.~L. Miller, and R.~Peng.
\newblock Approaching optimality for solving sdd linear systems.
\newblock In \emph{2010 IEEE 51st Annual Symposium on Foundations of Computer
  Science}, pages 235--244, Oct 2010.
\newblock \doi{10.1109/FOCS.2010.29}.

\bibitem[Koutis et~al.(2011)Koutis, Miller, and
  Tolliver]{koutis2011combinatorial}
I.~Koutis, G.~L. Miller, and D.~Tolliver.
\newblock Combinatorial preconditioners and multilevel solvers for problems in
  computer vision and image processing.
\newblock \emph{Computer Vision and Image Understanding}, 115\penalty0
  (12):\penalty0 1638--1646, 2011.

\bibitem[Kushnir et~al.(2006)Kushnir, Galun, and Brandt]{kushnir2006fast}
D.~Kushnir, M.~Galun, and A.~Brandt.
\newblock Fast multiscale clustering and manifold identification.
\newblock \emph{Pattern Recognition}, 39\penalty0 (10):\penalty0 1876--1891,
  2006.

\bibitem[Lafon and Lee(2006)]{lafon2006diffusion}
S.~Lafon and A.~B. Lee.
\newblock Diffusion maps and coarse-graining: A unified framework for
  dimensionality reduction, graph partitioning, and data set parameterization.
\newblock \emph{IEEE transactions on pattern analysis and machine
  intelligence}, 28\penalty0 (9):\penalty0 1393--1403, 2006.

\bibitem[Lee et~al.(2014)Lee, Gharan, and Trevisan]{lee2014multiway}
J.~R. Lee, S.~O. Gharan, and L.~Trevisan.
\newblock Multiway spectral partitioning and higher-order cheeger inequalities.
\newblock \emph{Journal of the ACM (JACM)}, 61\penalty0 (6):\penalty0 37, 2014.

\bibitem[Li(1994)]{li1994relative}
R.-C. Li.
\newblock Relative perturbation theory:(ii) eigenspace variations.
\newblock Technical report, 1994.

\bibitem[Livne and Brandt(2012)]{livne2012lean}
O.~E. Livne and A.~Brandt.
\newblock Lean algebraic multigrid (lamg): Fast graph laplacian linear solver.
\newblock \emph{SIAM Journal on Scientific Computing}, 34\penalty0
  (4):\penalty0 B499--B522, 2012.

\bibitem[Loukas and Vandergheynst(2018)]{loukas2018spectrally}
A.~Loukas and P.~Vandergheynst.
\newblock Spectrally approximating large graphs with smaller graphs.
\newblock In \emph{Interenational Conference on Machine Learning (ICML)}, 2018.

\bibitem[Martin et~al.(2018)Martin, Loukas, and Vandergheynst]{martin2017fast}
L.~Martin, A.~Loukas, and P.~Vandergheynst.
\newblock Fast approximate spectral clustering for dynamic networks.
\newblock In \emph{Interenational Conference on Machine Learning (ICML)}, 2018.

\bibitem[Moitra(2011)]{moitra2011vertex}
A.~Moitra.
\newblock \emph{Vertex sparsification and universal rounding algorithms}.
\newblock PhD thesis, Massachusetts Institute of Technology, 2011.

\bibitem[Paz and Schwartzman(2017)]{paz20172+}
A.~Paz and G.~Schwartzman.
\newblock A (2+$\epsilon$)-approximation for maximum weight matching in the
  semi-streaming model.
\newblock In \emph{Proceedings of the Twenty-Eighth Annual ACM-SIAM Symposium
  on Discrete Algorithms}, pages 2153--2161. SIAM, 2017.

\bibitem[Peleg and Sch{\"a}ffer(1989)]{peleg1989graph}
D.~Peleg and A.~A. Sch{\"a}ffer.
\newblock Graph spanners.
\newblock \emph{Journal of graph theory}, 13\penalty0 (1):\penalty0 99--116,
  1989.

\bibitem[Preis and Diekmann(1997)]{preis1997party}
R.~Preis and R.~Diekmann.
\newblock Party-a software library for graph partitioning.
\newblock \emph{Advances in Computational Mechanics with Parallel and
  Distributed Processing}, pages 63--71, 1997.

\bibitem[Ron et~al.(2011)Ron, Safro, and Brandt]{ron2011relaxation}
D.~Ron, I.~Safro, and A.~Brandt.
\newblock Relaxation-based coarsening and multiscale graph organization.
\newblock \emph{Multiscale Modeling \& Simulation}, 9\penalty0 (1):\penalty0
  407--423, 2011.

\bibitem[Safro et~al.(2015)Safro, Sanders, and Schulz]{safro2015advanced}
I.~Safro, P.~Sanders, and C.~Schulz.
\newblock Advanced coarsening schemes for graph partitioning.
\newblock \emph{Journal of Experimental Algorithmics (JEA)}, 19:\penalty0 2--2,
  2015.

\bibitem[Shuman et~al.(2016)Shuman, Faraji, and
  Vandergheynst]{shuman2016multiscale}
D.~I. Shuman, M.~J. Faraji, and P.~Vandergheynst.
\newblock A multiscale pyramid transform for graph signals.
\newblock \emph{IEEE Transactions on Signal Processing}, 64\penalty0
  (8):\penalty0 2119--2134, 2016.

\bibitem[Simonovsky and Komodakis(2017)]{simonovsky2017dynamic}
M.~Simonovsky and N.~Komodakis.
\newblock Dynamic edge-conditioned filters in convolutional neural networks on
  graphs.
\newblock In \emph{IEEE Conference on Computer Vision and Pattern Recognition
  (CVPR)}, 2017.

\bibitem[Spielman and Srivastava(2011)]{spielman2011graph}
D.~A. Spielman and N.~Srivastava.
\newblock Graph sparsification by effective resistances.
\newblock \emph{SIAM Journal on Computing}, 40\penalty0 (6):\penalty0
  1913--1926, 2011.

\bibitem[Spielman and Teng(2011)]{spielman2011spectral}
D.~A. Spielman and S.-H. Teng.
\newblock Spectral sparsification of graphs.
\newblock \emph{SIAM Journal on Computing}, 40\penalty0 (4):\penalty0
  981--1025, 2011.

\bibitem[Stewart(1990)]{stewart1990matrix}
G.~W. Stewart.
\newblock Matrix perturbation theory.
\newblock 1990.

\bibitem[Tremblay et~al.(2016)Tremblay, Puy, Gribonval, and
  Vandergheynst]{tremblay2016compressive}
N.~Tremblay, G.~Puy, R.~Gribonval, and P.~Vandergheynst.
\newblock Compressive spectral clustering.
\newblock In \emph{International Conference on Machine Learning}, pages
  1002--1011, 2016.

\bibitem[Turk and Levoy(1994)]{turk1994zippered}
G.~Turk and M.~Levoy.
\newblock Zippered polygon meshes from range images.
\newblock In \emph{Proceedings of the 21st annual conference on Computer
  graphics and interactive techniques}, pages 311--318. ACM, 1994.

\bibitem[Urschel et~al.(2014)Urschel, Hu, Xu, and
  Zikatanov]{urschel2014cascadic}
J.~C. Urschel, X.~Hu, J.~Xu, and L.~T. Zikatanov.
\newblock A cascadic multigrid algorithm for computing the fiedler vector of
  graph laplacians.
\newblock \emph{arXiv preprint arXiv:1412.0565}, 2014.

\bibitem[Vishnoi et~al.(2013)]{vishnoi2013lx}
N.~K. Vishnoi et~al.
\newblock Lx= b.
\newblock \emph{Foundations and Trends{\textregistered} in Theoretical Computer
  Science}, 8\penalty0 (1--2):\penalty0 1--141, 2013.

\bibitem[Von~Luxburg(2007)]{von2007tutorial}
U.~Von~Luxburg.
\newblock A tutorial on spectral clustering.
\newblock \emph{Statistics and computing}, 17\penalty0 (4):\penalty0 395--416,
  2007.

\bibitem[Walshaw(2000)]{walshaw2000multilevel}
C.~Walshaw.
\newblock A multilevel algorithm for force-directed graph drawing.
\newblock In \emph{International Symposium on Graph Drawing}, pages 171--182.
  Springer, 2000.

\bibitem[Walshaw(2006)]{walshaw2006multilevel}
C.~Walshaw.
\newblock A multilevel algorithm for force-directed graph-drawing.
\newblock \emph{Journal of Graph Algorithms and Applications}, 7\penalty0
  (3):\penalty0 253--285, 2006.

\bibitem[Wang et~al.(2014)Wang, Xiao, Shao, and Wang]{wang2014partition}
L.~Wang, Y.~Xiao, B.~Shao, and H.~Wang.
\newblock How to partition a billion-node graph.
\newblock In \emph{Data Engineering (ICDE), 2014 IEEE 30th International
  Conference on}, pages 568--579. IEEE, 2014.

\bibitem[Wang and Zhang(2013)]{wang2013improving}
S.~Wang and Z.~Zhang.
\newblock Improving cur matrix decomposition and the nystr{\"o}m approximation
  via adaptive sampling.
\newblock \emph{The Journal of Machine Learning Research}, 14\penalty0
  (1):\penalty0 2729--2769, 2013.

\bibitem[Wilson and Zhu(2008)]{wilson2008study}
R.~C. Wilson and P.~Zhu.
\newblock A study of graph spectra for comparing graphs and trees.
\newblock \emph{Pattern Recognition}, 41\penalty0 (9):\penalty0 2833--2841,
  2008.

\bibitem[Yu et~al.(2014)Yu, Wang, and Samworth]{yu2014useful}
Y.~Yu, T.~Wang, and R.~J. Samworth.
\newblock A useful variant of the davis--kahan theorem for statisticians.
\newblock \emph{Biometrika}, 102\penalty0 (2):\penalty0 315--323, 2014.

\end{thebibliography}

\appendix
\section{Deferred proofs}

\subsection{Proof of Property~\ref{property:Pi_projection}}

%
I draw up an inductive argument demonstrating that $\Pi$ is a projection matrix.
The \textit{base case}, i.e., that $A_{c} = P_{c}^+ P_{c}$ is a projection matrix follows by the definition of the pseudo-inverse: $A_{c} A_{c} = P_{c}^+ P_{c} P_{c}^+ P_{c} = P_{c}^+ P_{c} = A_c$, where one uses the property $P_c = P_{c} P_{c}^+ P_{c}$. 

For the \textit{inductive step}, I argue that if 
$ A_{\ell+1} $ 
is a projection matrix the same holds for 
$$ A_{\ell} = P_{\ell}^+ \,  A_{\ell+1} \, P_{\ell}. $$
To this end, let $P_{\ell} = U \Sigma V^\top$ be the singular-value decomposition with $\Sigma = (D; 0) \in \Rbb^{N_{\ell+1} \times N_{\ell}}$ decomposed into the $N_{\ell+1} \times N_{\ell+1}$ diagonal matrix $D$ and the all zero matrix $0$. Then 
$$ A_{\ell} = V \Sigma^+ U^\top  A_{\ell+1} \, U \Sigma V^\top. $$
Recalling that a projection matrix remains projective if it undergoes a similarity transformation, we deduce that $U^\top  A_{\ell+1} \, U$ is also projective and, moreover, if $\Sigma^+ U^\top  A_{\ell+1} \, U \Sigma$ is a projection matrix, so is $A_{\ell}$.
However, one may write   
$$ 
\Sigma^+ U^\top  A_{\ell+1} \, U \Sigma = 
\left(\begin{array}{@{}cc@{}}
  D^{-1} & 0 \\
  0 & 0
\end{array}\right)
U^\top  A_{\ell+1} \, U
\left(\begin{array}{@{}cc@{}}
  D & 0 \\
  0 & 0
\end{array}\right)
= \left(\begin{array}{@{}cc@{}}
  D^{-1} U^\top  A_{\ell+1} \, U D & 0 \\
  0 & 0
\end{array}\right).
$$
As a block diagonal matrix whose blocks are projective (again $D^{-1} U^\top  A_{\ell+1} \, U D$ is a similarity transformation), $\Sigma^+ U^\top  A_{\ell+1} \, U \Sigma$ is also a projection matrix.
The proof that $\Pi$ is a projection matrix concludes by letting the induction unfold backwards from $c$ to 1 and setting $\Pi = A_1$.

\subsection{Proof of Proposition~\ref{proposition:pseudo-inverse}}

For notational convenience, I drop the level index supposing that $c=1$ and thus $P$ is an $n\times N$ coarsening matrix. As we will see in the following, $P$ has rank $n$ and thus to prove that $P^+ = {P}^\top D^{-2}$, it is sufficient to show that matrix $\Pi = {P}^\top D^{-2} P$ is a projection matrix of rank $n$ (and thus equal to $P^+ P$). 
Matrix $\tilde{P} = D^{-1} P$ has the same sparsity structure as $P$ and is thus also a coarsening matrix. 
W.l.o.g. let the rows of $P$ be sorted based on their support, such that for any two rows $r<r'$ and $P(r,i), P(r',i') \neq 0$ we necessarily have $i<i'$. 
Furthermore, denote by $p_r$ the vector containing all non-zero entries of $P(r,:)$ such that $\norm{p_r}_2 = \norm{P(r,:)}_2 = D(r,r)$. 
Due to the disjoint support of the rows of $P$ and under this particular sorting, matrix $\Pi$ is block-diagonal.
Moreover, each block $B_r$ in its diagonal is a rank 1 projection matrix as $B_r^2 = B_r B_r = \left(p_r D(r,r)^{-2} p_r^\top\right) \left(p_r D(r,r)^{-2} p_r^\top\right) = p_r D(r,r)^{-2} p_r^\top \frac{p_r^\top p_r}{\norm{p_r}_2^2} = B_r$. We have thus arrived to the relation $\Pi^2 = \Pi$, which constitutes a necessary and sufficient condition for $\Pi$ to be a projection matrix. The block-diagonal structure of $\Pi$ also implies that its rank (as well as that of $P$) is $n$.
%
%

\subsection{Proof of Proposition~\ref{proposition:proper_laplacian}}
%
Let us first remark that, by Proposition~\ref{proposition:pseudo-inverse}, $A = P^\mp$ is also a coarsening matrix with the same sparsity structure as $P$. 

\textit{Necessity.} I start by considering the nullspace of $\c{L} = A L A^\top$ and aim to ensure that it is equal to the span of the constant vector $1$, which is a necessity for all combinatorial Laplacian matrices. Since matrix $A$ is full row-rank and $L$ has rank $N-1$, the nullspace of $L_c$ is one dimensional. Therefore, the nullspace is correct as long as $ 1^\top A L A^\top 1 = 0$, which happens if either $A^\top 1 = \alpha 1$ for a constant $\alpha$ or $A^\top 1 = 0$. In both cases, $(A^\top 1)(r) = \alpha 1$ for every $r$. By the definition of $A$ however, we know that its rows have disjoint support and, as such, vector $A^\top 1$ exactly contains the non-zero entries of $A$. In other words, for the nullspace of $\c{L}$ to be properly formed, the non-zero entries of $A$ should either all be equal to $\alpha$ (such that $A^\top 1 = \alpha 1$) or zero (in which case $A^\top 1 = 0$). The latter case can clearly be discarded as it would disconnect the graph. 
We have thus discovered that $A L A^\top$ has a properly formed nullspace if and only if the non-zero entries of $A$ are equal, rendering the latter condition necessary. 

\textit{Sufficiency.} Considering that every Laplacian of $M$ edges can be re-written as $L = S^\top S$, where $S$ is the $M\times N$ incidence matrix, one may confirm that the condition is also sufficient by showing that, for every $A$ with equal non-zero entries, the matrix $\c{S} = S A^\top $ is an incidence matrix of $\c{L}$ such that $\c{L} = \c{S}^\top \c{S}$. W.l.o.g., suppose that $\alpha=1$ ($\alpha^2 L$ is a valid Laplacian for all $\alpha$). By construction, each row of $\c{S}$ is $\c{S}(q,:)^\top = A S(q,:)^\top$. Name as $e_{ij}$ the corresponding edge, such that $S(q,:)^\top = \delta_i - \delta_j$, where $\delta_i$ is a dirac centered at vertex $v_i$. It follows that $\c{S}(q,:)^\top = A \delta_i - A \delta_j $. Obviously, if none of $v_i,v_j$ are coarsened then $\c{S}(q,:)^\top = \delta_i - \delta_j$, which is a valid row. Moreover, by construction of $A$, if either of $v_i,v_j$ is coarsened (but not both) or if $v_i,v_j$ are coarsened into different vertices then both relations $A \delta_i = \delta_i$ and $A \delta_j = \delta_j$ hold and thus once more $\c{S}(q,:)^\top = \delta_i - \delta_j$ is a valid row. Lastly, if $v_i,v_j$ are coarsened into the same vertex then for some $r$ it must be that $A(r,i) = A(r,j)$, whereas $A(r',i) = A(r',j) = 0$ for all $r' \neq r$ and thus $\c{S}(r,:)^\top = 0$, signifying that the edge is not present.
Summarizing, in every case $\c{S}$ is a valid incidence matrix, rendering the condition also sufficient.

\subsection{Proof of Property~\ref{claim:cuts}}

For any two disjoint subsets $\S_1, \S_2$ of $\V$ denote by $w(\S_1, \S_2) = \sum_{v_i \in \S_1} \sum_{v_j \in \S_j} w_{ij}$ the cut weight in $G$. 

The claim is proven by induction on the number of levels. For the base case set $\ell = 1$ and define $C = P_1^+$ such that $c_r = C(:,r)$ is the indicator vector of the contraction set $\V_{0}^{(r)}$. 
It is a consequence of the Laplacian form of $L$ that, for any $v_r, v_q \in \V_1$ with $r\neq q$, we have 
\begin{align}
W_1(r,q) = -L_1(r,q) = -c_r^\top L c_q 
	&= \sum_{v_i \neq v_j}  w_{ij} c_r(i) c_q(j) + \sum_{v_i} d_i c_r(i) c_q(i) \notag \\
	&= \hspace{-4mm} \sum_{v_i \in \V_{0}^{(r)}, v_j \in \V_{0}^{(q)}} \hspace{-7mm} w_{ij} = w(\mathcal{S}_1^{(r)},  \mathcal{S}_1^{(q)}), \notag 
\end{align}
where the penultimate step uses $c_r(i)c_q(i) = 0$ since contraction steps are disjoint, and the last step exploits the equivalence $\V_{0}^{(q)} = \mathcal{S}_{1}^{(q)}$.
For the inductive step, consider level $\ell> 1$. Since $L_{\ell-1}$ is a Laplacian matrix, one may employ an identical argument as when $\ell=1$ to find that the weight between vertices $v_r,v_q \in \V_{\ell}$ with $r\neq q$ is  
$$ W_\ell(r,q) = \hspace{-4mm} \sum_{v_i \in \V_{\ell-1}^{(r)}, v_j \in \V_{\ell-1}^{(q)}} \hspace{-7mm} W_{\ell-1}(i,j).$$
By the induction hypothesis however, it must be $W_{\ell-1}(i,j) = w(\mathcal{S}_{\ell-1}^{(i)}, \mathcal{S}_{\ell-1}^{(j)})$, implying
$$
W_\ell(r,q) = \hspace{-4mm} \sum_{v_i \in \V_{\ell-1}^{(r)}, v_j \in \V_{\ell-1}^{(q)}} \hspace{-7mm} w(\mathcal{S}_{\ell-1}^{(i)}, \mathcal{S}_{\ell-1}^{(j)}) = w(\mathcal{S}_{\ell}^{(r)}, \mathcal{S}_{\ell}^{(q)}),$$
with the final equality being true due to the recursive definition of sets $\mathcal{S}_{\ell}^{(r)}$ and $\mathcal{S}_{\ell}^{(q)}$, as well as the following property of cuts:  for any two sets (call them large sets) and any partition of each into an arbitrary number of subsets, the cut between the large sets is equal to the sum of all cuts between pairs of subsets belonging to different large sets. 
This completes the proof.

\subsection{Proof of Theorem~\ref{theorem:interlacing}}

The Courant-Fischer min-max theorem for a Hermitian matrix $L$ reads  
\begin{align}
	\lambda_k = \min\limits_{\dimension{\mathbf{U}} = k}  \max\limits_{x \in \mathbf{U}} \left\{  \frac{x^\top L x}{x^\top x}   \, | \, x \neq 0  \right\}, \label{eq:courant_lambda_k}
\end{align}
whereas the same theorem for $\c{L}$ gives
\begin{align}
	\p{\lambda}_k = \min\limits_{\dimension{\c{\mathbf{U}}} = k}   \max\limits_{\c{x} \in \c{\mathbf{U}}} \ \left\{ \frac{\c{x}^\top \c{L} \c{x}}{\c{x}^\top \c{x}}  \, | \, \c{x} \neq 0 \right\} 
	&= \min\limits_{\dimension{\c{\mathbf{U}}} = k} \max\limits_{P x \in \c{\mathbf{U}}} \left\{ \frac{ x^\top \Pi L \Pi x}{x^\top P^\top P x}  \, | \, x \neq 0 \right\} , \notag 
\end{align}
where in the second equality I substitute $ \c{L} = P^\mp L P^+$ and $\c{x} = P x $.

We will need the following result: 
\begin{lemma}
	For any $P$ with full-row rank, the following holds: 
	$$ \lambda_{1}(P P^\top) \, x^\top \Pi x \leq x^\top P^\top P x \leq \lambda_{n}(P P^\top) \, x^\top \Pi x,$$
	with $\lambda_1(P P^\top)$ and $\lambda_n(P P^\top)$, respectively the smallest and largest eigenvalues of $P P^\top$.
	\label{lemma:projection}
\end{lemma}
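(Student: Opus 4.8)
The plan is to exploit the structure $\Pi = P^+ P$ together with the easy identity $P \Pi = P$, and to reduce the two-sided bound to a statement about the Rayleigh quotient of $P P^\top$ on the range of $P$. First I would decompose an arbitrary $x \in \Rbb^N$ as $x = \Pi x + (I - \Pi) x$. Since $\Pi$ is a projection (Property~\ref{property:Pi_projection}) and $P(I-\Pi) = P - P P^+ P = 0$, we get $P x = P \Pi x$, hence
\begin{align}
x^\top P^\top P x = (\Pi x)^\top P^\top P (\Pi x). \notag
\end{align}
So it suffices to prove the inequality for vectors already in $\image{\Pi} = \image{P^+}$, i.e. to show for all $y$ in the range of $\Pi$ that
\begin{align}
\lambda_1(P P^\top)\, y^\top y \leq y^\top P^\top P y \leq \lambda_n(P P^\top)\, y^\top y, \notag
\end{align}
because $\Pi$ being a projection gives $(\Pi x)^\top \Pi x = x^\top \Pi x$ for the left-hand sides and $y = \Pi x$.

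Next I would pass to coordinates via $P^+$. Write $y = P^+ z$ for some $z \in \Rbb^n$ (possible since $y \in \image{P^+}$). Using $P P^+ = I_n$ (valid because $P$ has full row rank), we have $P y = P P^+ z = z$, so $y^\top P^\top P y = z^\top z = \norm{z}_2^2$. On the other hand, $y^\top y = z^\top (P^+)^\top P^+ z = z^\top (P P^\top)^{-1} z$, where I use Proposition-type algebra: for full-row-rank $P$, $P^+ = P^\top (P P^\top)^{-1}$, hence $(P^+)^\top P^+ = (P P^\top)^{-1} P P^\top (P P^\top)^{-1} = (P P^\top)^{-1}$. Therefore the desired inequality becomes
\begin{align}
\lambda_1(P P^\top)\, z^\top (P P^\top)^{-1} z \leq z^\top z \leq \lambda_n(P P^\top)\, z^\top (P P^\top)^{-1} z, \notag
\end{align}
which is exactly the standard Rayleigh-quotient sandwich for the positive definite matrix $(P P^\top)^{-1}$ — its eigenvalues are $1/\lambda_n(P P^\top) \le \cdots \le 1/\lambda_1(P P^\top)$, so $\tfrac{1}{\lambda_n(P P^\top)} z^\top z \le z^\top (P P^\top)^{-1} z \le \tfrac{1}{\lambda_1(P P^\top)} z^\top z$, and rearranging gives the claim.

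I do not expect a genuine obstacle here; the lemma is essentially bookkeeping. The one point requiring minor care is the bijection between $\{y : y = \Pi x \text{ for some } x\}$ and $\{P^+ z : z \in \Rbb^n\}$ and the consistent use of $P P^+ = I_n$, $\Pi^2 = \Pi$, and $P^+ = P^\top (P P^\top)^{-1}$; once these are in place the two inequalities drop out of the spectral bounds on $(P P^\top)^{-1}$. If one prefers to avoid invoking $P^+ = P^\top(PP^\top)^{-1}$ explicitly, an alternative is to note directly that $\image{\Pi}$ is an $n$-dimensional subspace on which the quadratic form $x \mapsto x^\top P^\top P x$ and the form $x \mapsto x^\top \Pi x = \norm{\Pi x}_2^2$ are related by the invertible linear map $P$ restricted to $\image{\Pi}$, whose singular values squared are precisely the eigenvalues of $P P^\top$; this gives the same two-sided bound.
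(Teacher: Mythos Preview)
Your proof is correct and is genuinely different from the paper's argument. The paper writes $P^\top P = P^+ (P P^\top) P$ (equivalently $P^+ D^{-1} P$ with $D = (P P^\top)^{+}$), then uses the cyclic property of the trace to rewrite $x^\top P^+ D^{-1} P x = \trace{D^{-1}\, P x x^\top P^+}$ and applies the trace inequality $\lambda_{\min}(A)\trace{B} \le \trace{AB} \le \lambda_{\max}(A)\trace{B}$ with $A = D^{-1}$ and the rank-one PSD matrix $B = P x x^\top P^+$, whose trace is $x^\top \Pi x$. Your route is more elementary: you first observe $P x = P \Pi x$ to reduce to $y \in \image{\Pi}$, then parameterize $y = P^+ z$ so that $y^\top P^\top P y = \norm{z}_2^2$ and $y^\top y = z^\top (P P^\top)^{-1} z$, after which the inequality is just the Rayleigh-quotient bound for the positive-definite matrix $(P P^\top)^{-1}$. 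Both proofs rely on $P^+ = P^\top (P P^\top)^{-1}$ for full-row-rank $P$; the paper's version has the slight advantage of never explicitly restricting to $\image{\Pi}$ or choosing a parameterization, while yours avoids the trace-inequality machinery entirely and makes the role of the singular values of $P$ more transparent.
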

\begin{proof}
Set $D = (P P^\top)^{+}$, which is an $n\times n$ PSD matrix. By the properties of the Moore–Penrose inverse $P^+ = P^\top  (P P^\top )^{+} = P^\top D $ and therefore 
$ P^\top P = P^\top D D^{-1} P = P^+ D^{-1} P $. Supposing that the eigenvalues of $D$ lie in $[\alpha, \beta]$
and that $P$ is full row-rank such that $\alpha> 0$, one may write
$$ \frac{1}{\beta} \, x^\top \Pi x \leq x^\top P^\top P x = x^\top P^+ D^{-1} P x \leq \frac{1}{\alpha} \, x^\top \Pi x.$$
To grasp why the aforementioned inequality holds, first use the cyclic property of the trace to obtain $$x^\top P^+ D^{-1} P x = \trace{x^\top P^+ D^{-1} P x} = \trace{D^{-1} P xx^\top P^+ },$$ and further recall that for any symmetric $A$ and PSD matrix $B$ one has $\lambda_{\text{min}}(A) \trace{B} \leq \trace{A B} \leq \lambda_{\text{max}}(B) \trace{B}$, where $\lambda_{\text{min}}(A)$ denotes the smallest eigenvalue of $A$ (and $\lambda_{\text{max}}(A)$ is the largest)~\citep{fang1994inequalities}. Set $A = D^{-1}$, which is by assumption symmetric, and $B = P xx^\top P^+$, which is PSD since its rank is (at most) one with the only (potentially) non-zero eigenvalue exactly $\trace{P xx^\top P^+} = x^\top P^+ P x = x^\top \Pi x = x^\top \Pi^2 x= \|\Pi x\|_2^2 \geq 0$. The desired inequality then follows since $ \lambda_{\text{min}}(D^{-1}) = 1/\lambda_{\text{max}}(D) = \frac{1}{\beta}$, $ \lambda_{\text{max}}(D^{-1}) = 1/\lambda_{\text{min}}(D) = \frac{1}{\alpha}$ and once more $\trace{P xx^\top P^+} = x^\top \Pi x$. 

Finally, since $P$ is full row-rank, $D$ is invertible meaning $\alpha = \lambda_{\text{min}}(D) = 1/ \lambda_{n}(P P^\top)$  and $\beta = \lambda_{\text{max}}(D) = 1/ \lambda_{1}(P P^\top)$.
\end{proof}

From the above, it is deduced that
\begin{align}	
	\p{\lambda}_k 
	&\geq \min\limits_{\dimension{\c{\mathbf{U}}} = k} \max\limits_{P x \in \c{\mathbf{U}}} \left\{ \frac{ x^\top \Pi L \Pi x}{ \lambda_{n}(P P^\top) \, x^\top \Pi x}  \, | \, x \neq 0 \right\}  \notag \\
	&\hspace{-4mm}= \frac{1}{\lambda_{n}(P P^\top)} \min_{\dimension{\mathbf{U}} = k, \mathbf{U} \subseteq \text{im}(\Pi) }  \max_{x \in \mathbf{U}} \left\{  \frac{x^\top L x}{x^\top x} \, | \, x \neq 0 \right\} , \notag 
\end{align}
where the equality holds since $\Pi$ is a projection matrix (see Property~\ref{property:Pi_projection}). Notice how, with the exception of the constraint $x = \Pi x$ and the multiplicative term, the final optimization problem is identical to the one for $\lambda_k$, given in~\eqref{eq:courant_lambda_k}. As such, the former's solution must be strictly larger (since it is a more constrained problem) and $\p{\lambda}_k \geq \frac{\lambda_k}{\lambda_{n}(P P^\top)} $.

Analogously, one obtains the lower inequality
$ \p{\lambda}_{k-(N-n)} \leq \frac{\lambda_k}{\lambda_1(PP^\top)}$
by applying the same argument on matrices $-L$ and $-\c{L}$ and exploiting that the $k$-th largest eigenvalue of any matrix $M$ is also the $k$-th smallest eigenvalue of $-M$.

\subsection{Proof of Theorem~\ref{theorem:eigenvalues}}

The lower bound is given by Theorem~\ref{theorem:interlacing}. For the upper bound, I reason similarly to the proof of the latter to find: 
\begin{align}
	\p{\lambda}_k &= \min\limits_{\dimension{\c{\mathbf{U}}} = k}   \max\limits_{\c{x} \in \c{\mathbf{U}}} \ \left\{ \frac{\c{x}^\top \c{L} \c{x}}{\c{x}^\top \c{x}}  \, | \, \c{x} \neq 0 \right\} \notag 
	&\leq \gamma_2 \min\limits_{\dimension{\mathbf{U}} = k, \textbf{U} \subseteq \image{\Pi}}  \max\limits_{x \in \mathbf{U}} \left\{  \frac{x^\top \Pi L \Pi x}{x^\top \Pi x} | \, x \neq 0  \right\}. \notag  
\end{align}
Above, the inequality is due to Lemma~\ref{lemma:projection} with $\gamma_2 = 1 / \lambda_{1}(PP^\top)$.
Thus, for any matrix $V$ the following inequality holds
\begin{align}
\p{\lambda}_k \leq \gamma_2 \max\limits_{x \in \spanning{V}} \left\{  \frac{x^\top \Pi  L \Pi x}{x^\top \Pi x}  \, | \, x \neq 0 \right\}, \notag 
\end{align}
as long as the image of $V$ is of dimension $k$ and does not intersect the nullspace of $\Pi$. 
Write $U_{k}$ to denote the $N \times k$ matrix with the $k$ first eigenvectors of $L$, whose image is of dimension $k$ as needed. Assume for now that the nullspace requirement is also met:
\begin{align}
\p{\lambda}_k 
&\leq \gamma_2 \max\limits_{x \in \spanning{U_k}} \left\{  \frac{x^\top \Pi L \Pi x}{x^\top \Pi x}  \, | \, x \neq 0 \right\} \notag 
= \gamma_2 \max\limits_{x \in \spanning{U_k}} \left\{  \frac{\|S \Pi x\|_2^2}{\|\Pi x\|_2^2 }  \, | \, x \neq 0 \right\}.
\end{align}
It will be convenient to manipulate the square-root of this quantity:
\begin{align}
\sqrt{\frac{\p{\lambda}_k}{\gamma_2}}
\leq \max\limits_{a \in \Rbb^{k}} \frac{\|S \Pi U_k a\|_2}{\| \Pi U_k a\|_2} 
= \frac{\|S \Pi U_k\|_2}{\|\Pi U_k\|_2} 
\leq \frac{\|S U_k\|_2 + \|S \Pi^\bot U_k\|_2}{\| \Pi U_k\|_2} 
&=\frac{\sqrt{\lambda_k} + \|S \Pi^\bot U_k\|_2}{\| \Pi U_k\|_2}, 
\label{eq:eigenvalue_intermediate_2}
\end{align}
with $S$ defined such that $L = S^\top S.$
The norm in the numerator is upper bounded by
\begin{align}
    \|S \Pi^\bot U_k\|_2 
    = \|S \Pi^\bot U_k \Lambda_k^{-\sfrac{1}{2}} \Lambda_k^{\sfrac{1}{2}}\|_2 
    &\leq \|S \Pi^\bot U_k \Lambda_k^{+\sfrac{1}{2}} \|_2 \|\Lambda_k^{\sfrac{1}{2}}\|_2 \notag \\
    &= \sqrt{\lambda_k} \, \|S \Pi^\bot U_k \Lambda_k^{+\sfrac{1}{2}} \|_2 = \sqrt{\lambda_k} \,\epsilon_k. \notag  
\end{align}
If the last step is not immediately obvious, one can be convinced by first exploiting the unitary-invariance of the spectral norm to write
$\|S \Pi^\bot U_k \Lambda_k^{+\sfrac{1}{2}} \|_2 = \|S \Pi^\bot U_k U_k^\top L^{+\sfrac{1}{2}} \|_2$,
and then confirming in the proof of Proposition~\ref{proposition:restricted_similarity} that the latter quantity is exactly $\epsilon_k$ when $V = U_k$.

The preceding analysis assumed that the image of $U_k$ and the nullspace of $\Pi$ did not intersect. Since $\Pi^\bot = I - \Pi$ is a complement projection matrix , the previous holds when $ \| \Pi^\bot U_k\|_2^2 < 1$. 
Since $\| \Pi^\bot u_1\| = 0$, one may w.l.o.g. exclude $u_1$ from the space of interest. For the remainder of $\image{U_k}$ the following holds:
$$\| \Pi^\bot U_k\|_2^2 = \max_{x \in \mathbf{U}_k \text{ and } x \bot u_1}  \frac{\|\Pi^\bot x\|_2^2 }{\|x\|_2^2} \leq \frac{1}{\lambda_2} \max_{x \in \mathbf{U}_k \text{ and } x \bot u_1} \frac{\| \Pi^\bot x \|_L^2}{\|x\|_2^2} = \epsilon_k^2 \frac{\lambda_k}{\lambda_2}.
$$
Therefore, when $\epsilon_k^2 < \frac{\lambda_2}{\lambda_k}$, the nullspace condition is met.
The proof is then concluded by substituting the bound $ \|\Pi U_k\|_2^2 = { 1- \|\Pi^\bot U_k\|_2^2} \geq { 1- \epsilon_k^2 \frac{\lambda_k}{\lambda_2}} $ in the denominator of~\eqref{eq:eigenvalue_intermediate_2}.

\subsection{Proof of Theorem~\ref{theorem:sintheta}}

\citet{li1994relative} showed that we can express the sin$\Theta$ as a sum of squared inner products: 
\begin{align}
	\norm{ \sintheta{U_k}{P^\top \p{U}_k} }_F^2 
	&= \norm{\p{U}_{k^\bot}^\top P U_k}_F^2 
	= \sum_{i \leq k} \sum_{j > k} (\p{u}_j^\top P u_i)^2
\label{eq:sintheta_0}
\end{align}
If $L_c$ and $L$ are $(\mathbf{U}_i, \epsilon_i)$-similar it follows from Corollary~\ref{corollary:restricted_isometry} that 
$$u_i^\top P^\top \c{L} P u_i \leq (1+\epsilon_i)^2 \lambda_i. $$
%
Summing these inequalities for all $i \leq k$ amounts to
\begin{align}
	\sum_{i \leq k} (1 + \epsilon_i)^2\lambda_i 
	\geq 
	&\sum_{i \leq k} \sum_{j = 1}^n \p{\lambda}_j (\p{u}_j^\top P u_i)^2 \notag \\
	&\geq \gamma_1 \sum_{i \leq k} \sum_{j = 1}^n \lambda_j (\p{u}_j^\top P u_i)^2 \notag \\
	&= \gamma_1 \sum_{j \leq k} \lambda \sum_{i \leq k} (\p{u}_j^\top P u_i)^2 + \gamma_1 \sum_{j > k} \lambda_j \sum_{i \leq k} (\p{u}_j^\top P u_i)^2.
\label{eq:sintheta_1}
\end{align}
where following Theorem~\ref{theorem:interlacing} I set $\gamma_1 = 1/\lambda_n(P P^\top)$, such that $\tilde{\lambda}_i \geq \gamma_1 \lambda_i $. Perform the following manipulation: 
\begin{align}
	\sum_{j \leq k} \lambda_j \sum_{i \leq k} (\p{u}_j^\top P u_i)^2 
	\geq \sum_{j \leq k} \lambda_j \sum_{i \leq k} (\p{u}_j^\top P u_i)^2  
	&= \sum_{j \leq k} \lambda_j \left( 1 - \sum_{i > k} (\p{u}_j^\top P u_i)^2 \right) \notag \\
	&\hspace{-10mm}\geq \sum_{j \leq k} \lambda_j - \lambda_k  \sum_{i \leq k} \left( \| \Pi^\bot u_i\|_2^2 + \sum_{j \geq k} (\p{u}_j^\top P u_i)^2 \right),
	\notag 
\end{align}
which together with~\eqref{eq:sintheta_0} and~\eqref{eq:sintheta_1} yields
\begin{align}
	\norm{ \sintheta{U_k}{P^\top \p{U}_k} }_F^2 
	&\leq \sum\limits_{i \leq k}  \frac{  (1 + \epsilon_i)^2\lambda_i/\gamma_1  - \lambda_i  + \lambda_k \| \Pi^\bot u_i \|_2^2 }{\lambda_{k+1} - \lambda_{k}}  
\end{align}
%
%
To proceed, I note the following useful inequality:
\begin{lemma}
	If $L$ and $L_c$ are $(\mathbf{U}_i,\epsilon_i)$-similar, then $\| \Pi^\bot u_i\|_2^2 \leq \epsilon_i$.
	\label{lemma:pinorm}
\end{lemma}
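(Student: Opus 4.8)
The plan is to apply the definition of restricted spectral approximation directly to the single eigenvector $x=u_i$, and then to convert the resulting bound on $\norm{\Pi^\bot u_i}_L$ into a bound on $\norm{\Pi^\bot u_i}_2$ by exploiting two facts: that $\Pi=P^+P$ is an \emph{orthogonal} projection, and that $L u_i=\lambda_i u_i$. The delicate part is purely quantitative: the bound has to collapse to exactly $\epsilon_i$ rather than $\epsilon_i+O(\epsilon_i^2)$, which forces one to keep an exact identity rather than to estimate a cross term by Cauchy--Schwarz.

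First I would observe that $u_i\in\mathbf{U}_i=\spanning{U_i}$, so $(\mathbf{U}_i,\epsilon_i)$-similarity with $\p{x}=P^+Px=\Pi x$ yields $\norm{\Pi^\bot u_i}_L=\norm{u_i-\Pi u_i}_L\leq \epsilon_i\norm{u_i}_L=\epsilon_i\sqrt{\lambda_i}$, using $\norm{u_i}_L^2=u_i^\top L u_i=\lambda_i$. The case $\lambda_i=0$ (the constant-vector direction) is then immediate since the display forces $\Pi^\bot u_i=0$; and $\epsilon_i\geq 1$ is trivial because $\norm{\Pi^\bot u_i}_2^2\leq\norm{u_i}_2^2=1\leq\epsilon_i$. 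So I may assume $\lambda_i>0$ and $\epsilon_i<1$.

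Next I would split $u_i=p+q$ with $p=\Pi^\bot u_i$ and $q=\Pi u_i$. Since $\Pi$ is symmetric and idempotent (a Penrose identity; in the coarsening case this is also explicit as $\Pi=P^\top D^{-2}P$), the vectors $p$ and $q$ are orthogonal, so $\norm{p}_2^2+\norm{q}_2^2=\norm{u_i}_2^2=1$ and $u_i^\top p=(p+q)^\top p=\norm{p}_2^2$. Expanding $\norm{q}_L^2=(u_i-p)^\top L(u_i-p)=\lambda_i-2\lambda_i\,u_i^\top p+\norm{p}_L^2$ and substituting $u_i^\top p=\norm{p}_2^2$ gives the exact identity $2\lambda_i\norm{p}_2^2=\norm{p}_L^2+\lambda_i-\norm{q}_L^2$. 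The remaining ingredient is a lower bound on $\norm{q}_L$: by the reverse triangle inequality $\norm{q}_L=\norm{u_i-p}_L\geq\norm{u_i}_L-\norm{p}_L\geq(1-\epsilon_i)\sqrt{\lambda_i}$, hence $\norm{q}_L^2\geq(1-\epsilon_i)^2\lambda_i$. Substituting $\norm{p}_L^2\leq\epsilon_i^2\lambda_i$ and this lower bound into the identity, and dividing by $2\lambda_i$, gives $\norm{p}_2^2\leq\tfrac12\big(\epsilon_i^2+1-(1-\epsilon_i)^2\big)=\epsilon_i$, which is exactly the claim $\norm{\Pi^\bot u_i}_2^2\leq\epsilon_i$.

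The main obstacle is the cancellation of constants in the last step: a looser treatment of the interaction between $p$ and $q$ (for instance bounding $p^\top L q$ by $\norm{p}_L\norm{q}_L$) would only deliver $\norm{p}_2^2\leq\epsilon_i+2\epsilon_i^2$, which is too weak for the sum $\lambda_k\sum_{i\leq k}\norm{\Pi^\bot u_i}_2^2\leq\lambda_k\sum_{i\leq k}\epsilon_i$ needed to finish Theorem~\ref{theorem:sintheta}. Keeping the exact quadratic identity for $\norm{p}_2^2$ and squaring the reverse-triangle estimate before substituting is what makes the two $\epsilon_i^2$ terms cancel.
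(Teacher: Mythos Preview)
Your proof is correct and is essentially the same argument as the paper's: the identity $2\lambda_i\norm{\Pi^\bot u_i}_2^2=\norm{\Pi^\bot u_i}_L^2+\lambda_i-\norm{\Pi u_i}_L^2$ you derive is exactly the one the paper obtains by expanding $u_i^\top\Pi L\Pi u_i$, and your reverse-triangle bound $\norm{\Pi u_i}_L\geq(1-\epsilon_i)\sqrt{\lambda_i}$ is precisely the content of Corollary~\ref{corollary:restricted_isometry} that the paper invokes at the same step. Your explicit treatment of the degenerate cases $\lambda_i=0$ and $\epsilon_i\geq 1$ is a minor addition; the paper silently divides by $\lambda_i$.
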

\begin{proof}
For every $u_i$ one sees that
\begin{align}
	u_i^\top P^\top \c{L} P u_i = u_i^\top \Pi L  \Pi u_i &= u_i^\top (I - \Pi^\bot)  L (I - \Pi^\bot) u_i \notag \\
	&= u_i^\top L u_i - 2 u_i^\top L \Pi^\bot u_i + u_i^\top \Pi^\bot L \Pi^\bot u_i \notag \\
	&= \lambda_i - 2 \lambda_i u_i^\top \Pi^\bot u_i + u_i^\top \Pi^\bot L \Pi^\bot u_i \notag 	
\end{align}
meaning that 
\begin{align*}
	\| \Pi^\bot u_i\|_2^2 = \frac{1}{2} \left( 1 + \frac{u_i^\top \Pi^\bot L \Pi^\bot u_i}{\lambda_i} - \frac{u_i^\top P^\top \c{L} P u_i}{\lambda_i}\right) \leq \frac{1}{2} \left( 1 + \epsilon_i^2 - (1 - \epsilon_i)^2  \right) = \epsilon_i.
\end{align*}
The last inequality is because, by restricted spectral approximation, we have $u_i^\top \Pi^\bot L \Pi^\bot u_i = \|\Pi^\bot u_i\|_L^2 \leq \epsilon_i^2 \|u_i\|_L^2 = \epsilon_i^2 \lambda_i$ and from Corollary~\ref{corollary:restricted_isometry}: $$u_i^\top P^\top \c{L} P u_i = \|P u_i\|_{L_c}^2 \geq (1 - \epsilon_i)^2 \|u_i\|_L^2 = (1 - \epsilon_i)^2 \lambda_i.$$
\end{proof}
As a consequence, it follows that
\begin{align}
	\norm{ \sintheta{U_k}{P^\top \p{U}_k} }_F^2 
	&\leq \sum\limits_{i \leq k}  \frac{  (1 + \epsilon_i)^2\lambda_i/\gamma_1  - \lambda_i  + \lambda_k \epsilon_i }{\lambda_{k+1} - \lambda_{k}}, \notag  
\end{align}
which after manipulation gives the desired inequality.

\subsection{Proof of Theorem~\ref{theorem:cheeger}}

The lower bound is a direct consequence of consistent coarsening and holds independently of restricted spectral approximation: for any set $\mathcal{S}_c \subset \mathcal{V}_c$ define $\mathcal{S} \subset \mathcal{V}$ such that $v_i \in \mathcal{S}$ if and only if $\varphi_c \circ \varphi_{c-1} \circ \cdots \varphi_1(v_i) \in \mathcal{S}_c$. Clearly, $w(\mathcal{S}) \geq w_c(\mathcal{S}_c)$, where the subscript $c$ implies that the latter volume is w.r.t. $G_c$. In addition, by the definition of Laplacian consistent coarsening and since every contraction set belongs either in $\mathcal{S}$ or $\bar{\mathcal{S}}$ (but not in both), it follows that $w(\mathcal{S}, \bar{\mathcal{S}}) = w_c(\mathcal{S}_c, \bar{\mathcal{S}}_c)$. In other words, for every $\mathcal{S}_c$ there exists a set $\mathcal{S}$ such that $$\phi(\mathcal{S}) = \frac{ w(\mathcal{S}, \bar{\mathcal{S}}) }{ \min\{w(\mathcal{S}), w(\bar{\mathcal{S}})\} } \leq \frac{ w_c(\mathcal{S}_c, \bar{\mathcal{S}_c}) }{ \min\{w_c(\mathcal{S}_c), w_c(\bar{\mathcal{S}}_c)\} } = \phi_c(\mathcal{S}_c),$$ implying also that the $k$-conductance of $G$ and $G_c$ are related by $\phi_k(G) \leq \phi_k(G_c)$. 

For the upper bound, I exploit the following multi-way Cheeger inequality:
\begin{theorem}[Restatement of Theorem 1.2 by~\cite{lee2014multiway}]
For every graph $G$ and every $k \in \mathbb{N}$, we have 
$$
\frac{\mu_{k}}{2} \leq \phi_{k}(G) = O(\sqrt{\mu_{2k} \, \xi_k(G)}), 
$$
with $\xi_k(G) = \log{k}$. If $G$ is planar then $\xi_{k}(G) = 1.$ More generally, if $G$ excludes $K_h$ as a minor, then $\xi_{k}(G) = h^4.$
\end{theorem}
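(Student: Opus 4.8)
The displayed inequality is exactly Theorem~1.2 of \citet{lee2014multiway}, so the honest plan is to cite the upper bound $\phi_k(G)=O(\sqrt{\mu_{2k}\,\xi_k(G)})$ from that work rather than reprove it; its proof is long and self-contained. Still, a self-contained argument would split into the elementary lower bound and the deep upper bound, and I sketch both below.

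\emph{Easy direction ($\mu_k/2\le\phi_k(G)$).} The plan is the standard test-function argument. Let $S_1,\dots,S_k$ be disjoint subsets of $\V$ attaining $\phi_k(G)$, so $\phi(S_i)\le\phi_k(G)$ for each $i$, and put $f_i=\mathbf 1_{S_i}$ and $g_i=D^{1/2}f_i$. The $g_i$ have pairwise disjoint supports, hence are linearly independent, so $\mathbf R'=\spanning{g_1,\dots,g_k}$ is $k$-dimensional and the Courant--Fischer characterization of $L^n=D^{-1/2}LD^{-1/2}$ gives $\mu_k\le\max_{0\ne f=\sum_i a_if_i}\, f^\top L f/(f^\top D f)$. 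By disjointness $f^\top D f=\sum_i a_i^2\,w(S_i)$, while $f^\top L f=\sum_{\{u,v\}\in\E}w_{uv}(f(u)-f(v))^2$, where an edge contributes $0$ if both endpoints lie in a single $S_i$, contributes $w_{uv}a_i^2$ if exactly one endpoint lies in $S_i$, and at most $2w_{uv}(a_i^2+a_j^2)$ if its endpoints lie in distinct $S_i,S_j$ (using $(a_i-a_j)^2\le 2a_i^2+2a_j^2$). Charging each contribution to the relevant set(s) gives $f^\top L f\le 2\sum_i a_i^2\,w(S_i,\bar S_i)$, hence $f^\top L f/(f^\top D f)\le 2\max_i w(S_i,\bar S_i)/w(S_i)\le 2\max_i\phi(S_i)\le 2\phi_k(G)$, since $w(S_i)\ge\min\{w(S_i),w(\bar S_i)\}$. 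Therefore $\mu_k\le 2\phi_k(G)$.

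\emph{Hard direction ($\phi_k(G)=O(\sqrt{\mu_{2k}\,\xi_k(G)})$).} Here the plan would follow \citet{lee2014multiway}: build the spectral embedding $F:\V\to\Rbb^{2k}$ from the bottom $2k$ eigenfunctions of $L^n$, so that its total mass and Dirichlet energy are controlled by $\mu_{2k}$; partition the embedded point set into $k$ well-separated regions of comparable mass; truncate $F$ to each region to obtain $k$ disjointly supported test functions; and finally apply a Cheeger-type sweep (threshold rounding) to turn each small Rayleigh quotient into a set of conductance $O(\sqrt{\mu_{2k}\,\xi_k(G)})$. The factor $\xi_k(G)$ is precisely the loss of the partitioning step: an arbitrary metric admits a random/greedy partition with $\xi_k(G)=O(\log k)$; a planar $G$ instead uses the planar separator theorem, removing the logarithmic loss and giving $\xi_k(G)=1$; and a $K_h$-minor-free $G$ uses the corresponding minor-separator bound, giving $\xi_k(G)=h^4$.

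\emph{Main obstacle.} The genuine difficulty lies entirely in the hard direction --- the localization step that converts $2k$ globally supported eigenfunctions into $k$ disjointly supported bumps while keeping the energy blow-up at the claimed $\xi_k(G)$. This is the technical heart of \citet{lee2014multiway}, and I would invoke it directly; everything else (the easy direction above and the Cheeger rounding) is routine.
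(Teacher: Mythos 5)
Your proposal matches the paper's treatment exactly: the statement is a restatement of Theorem 1.2 of \citet{lee2014multiway}, and the paper offers no proof of its own, simply invoking the cited result (together with the standard Cheeger bound for $k=2$) inside the proof of Theorem~\ref{theorem:cheeger} --- which is precisely what you propose to do. Your sketch of the easy direction $\mu_k \le 2\phi_k(G)$ via disjointly supported test functions is correct, and deferring the hard direction to the original work is appropriate since its localization argument is the technical heart of that paper and is not reproduced here.
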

Further, in the standard Cheeger inequality~\citep{alon1985lambda1,alon1986eigenvalues} ($k=2$) the upper bound is $\sqrt{2 \mu_2}$. Note that the eigenvalues mentioned here are those of the normalized Laplacian matrix $L^n = D^{-\sfrac{1}{2}} L D^{-\sfrac{1}{2}}$. To this end, suppose that $V_{2k}$ contains the first ${2k}$ eigenvectors of $L^n$ and fix $\mathbf{R} = \spanning{D^{-\sfrac{1}{2}} V_{2k}}$. Perform consistent coarsening w.r.t. to the combinatorial Laplacian $L$ (not $L^n$). Then, by definition, if $L_c$ and $L$ are $(\epsilon_{2k}, \mathbf{R})$-similar then for every $x \in \mathbf{R}$ one gets
\begin{align*}
	\|\Pi^\bot x\|_{L} \leq \epsilon_{2k} \|x\|_L. 
\end{align*} 
The substitution $y = D^{\sfrac{1}{2}} x$, such that $y \in \spanning{V_{2k}}$, allows us to transform the semi-norms above into semi-norms concerning $L^n$ as follows:
$$ 
\|x\|_L^2 = x^\top L x = y^\top D^{-\sfrac{1}{2}} L D^{-\sfrac{1}{2}} y = \|y\|_{L^n}^2
$$
and
$$ 
\|\Pi^\bot x\|_{L} = \|D^{-\sfrac{1}{2}} D^{\sfrac{1}{2}} \Pi^\bot D^{-\sfrac{1}{2}} D^{\sfrac{1}{2}} x\|_{L} = \|D^{\sfrac{1}{2}} \Pi^\bot D^{-\sfrac{1}{2}} y\|_{L^{n}} = \|(I - \Pi^{n}) y\|_{L^{n}}. 
$$
Above, $\Pi^n = D^{\sfrac{1}{2}} \Pi D^{-\sfrac{1}{2}} $ is the projection matrix (the set of projection matrices is closed under similarity transformations) corresponding to the coarsening matrix $P^n = P D^{-\sfrac{1}{2}}$, and now $y_c = \Pi^n y$.
It follows that, for every $y \in \mathbf{V}_{2k} = \spanning{V_{2k}}$, we have
\begin{align*}
	\|y - \Pi^n y\|_{L^n} \leq \epsilon_{2k} \|y\|_{L^n}
\end{align*} 
and thus $L_c^n$ and $L^n$ are $(\mathbf{V}_{2k}, \epsilon_{2k})$-similar.

Combining the multi-way Cheeger inequality with Theorem~\ref{theorem:eigenvalues} for $L_c^n$ and $L^n$ one obtains
\begin{align*}
\phi_{k}^2(G_c) &= O \left( \tilde{\mu}_{2k} \, \xi_k(G)\right)  \\
	&= O\left( \gamma_2 (1 + \epsilon_{2k})^2 \frac{ \mu_2 }{\mu_2 - \epsilon_{2k} \mu_{2k} } \mu_{2k} \, \xi_{k}(G) \right) \\
	&= O\left( \frac{ \gamma_2 \, (1 + \epsilon_{2k})^2 \xi_{k}(G) }{1 - \epsilon_{2k}^2 (\mu_{2k}/\mu_2)} \phi_k(G) \right),
\end{align*}
where the eigenvalues above are those of $L^n$ and the preceeding holds whenever $\epsilon_{2k}^2 < \mu_{2}/\mu_{2k}$. Further, when $k=2$ the upper bound simplifies to
$
\phi_{2}^2(G_c) \leq \frac{ 4 \, \gamma_2 \, (1 + \epsilon_{2})^2 }{1 - \epsilon_{2}^2} \phi_2(G).
$

\subsection{Proof of Proposition~\ref{proposition:lower_bound}}

Write $\mathcal{P} = \{\S_1, \S_2, \ldots, \S_n\}$ to denote a partitioning of $\V$ into $n$ disjoint clusters. In matrix form, the $n$-means problem performed on $U_k$ corresponds to finding a map $\varphi : \V \rightarrow \mathcal{P}$ between points $x_i = U_k(i,:)$ and clusters $\S_1, \S_2, \ldots, \S_n$ that minimizes the following cost function:
$$ \kmeans{n}{U_k}{\mathcal{P}} = \sum_{i = 1}^N \left( x_i - \hspace{-2mm}\sum_{v_j \in \varphi(v_i)} \frac{x_j}{|\varphi(v_i)|} \right)^2 = \| U_k - C^\top C U_k\|_F^2,$$
where the cluster indicator matrix $C \in \Rbb^{n \times N}$ has as entries 
$$
	C(r,j) = 
	\begin{cases}
		1/\sqrt{|\S_r|} & \text{if } v_i \in \S_r \\
		0 			 & \text{otherwise.}
	\end{cases}
$$
It may be confirmed that $C$ is a proper coarsening matrix corresponding to the case that clusters are exactly contraction sets. In addition, since $C^\top = C^+$ the matrix $C^\top C$ is the familiar projection matrix associated with coarsening. In fact, the latter is exactly equivalent to the projection matrix $\Pi = P^+ P$ of a single level Laplacian consistent coarsening. To see this, construct the diagonal matrix $Q$ with $Q(r,r) = \sqrt{|\S_r|}$ and write $C^\top C = C^\top Q Q^{-1} C = P^+ P = \Pi$.
With this in place, we can re-write the $n$-means problem as 
$$
	\kmeans{n}{U_k}{\mathcal{P}} = \| U_k - \Pi U_k\|_F^2 = \| \Pi^\bot U_k\|_F^2 = \sum_{i\leq k} \| \Pi^\bot u_i\|_2^2 \leq \sum_{i\leq k} \epsilon_i, 
$$
where $\epsilon_i$ is the smallest constant such that $L_c = P^\pm L P^+$ and $L$ are $(\mathbf{U}_i, \epsilon_i)$-similar and the inequality follows from Lemma~\ref{lemma:pinorm}. 
The final lower bound is then achieved by minimizing over all maps $\varphi$.

\subsection{Proof of Proposition~\ref{proposition:restricted_similarity}}

%
The following analysis is slightly more general than what is claimed in the statement of Proposition~\ref{proposition:restricted_similarity}: it holds for arbitrary PSD $L$ and $L_c$ (i.e., not necessarily Laplacian matrices) as long as the image $\image{\Pi}$ of the projection matrix $\Pi$ encloses the nullspace of $L$. The former trivially holds for Laplacian consistent coarsening, as, by design, one has $\Pi 1 = 1$ (see Section~\ref{subsec:consistent_coarsening}).

Let $V \in \Rbb^{N\times k}$ be a basis of $\mathbf{R}$. I start by proving that, for any integer $k\leq n$ and for all $x \in \spanning{V}$ the inequality
\begin{align}
    \|x - \Pi x\|_{L} \leq \epsilon \, \norm{x}_L \notag
\end{align}
holds for all $\epsilon \geq \|\Pi^\bot B_0\|_L$, where $B_0 = V V^\top L^{+\sfrac{1}{2}}$. I remind the reader that $\|x\|_L = \|S x \|_2 = \|L^{\sfrac{1}{2}} x\|_2$ and $\Pi^\bot = I - P^+P$. 
Furthermore, since $\image{\Pi}$ necessarily encloses the nullspace $\mathbf{N}$ of $L$, w.l.o.g., one may assume that $ \forall x \in \mathbf{R}$ matrix $L$ is invertible. To see why, note that if $x\in \mathbf{N}$ then $\|x\|_L = 0$ and $\| x - \Pi x \|_L = 0$, meaning that the inequality above is trivially satisfied.
I then derive 
\begin{align}
 \max\limits_{x \in \mathbf{R}} \frac{\|x - \Pi x\|_L}{\|x\|_L } 
 	&= \max\limits_{x \in \mathbf{R}} \frac{\| S \Pi^\bot x\|_{2}}{\|L^{\sfrac{1}{2}} x\|_2 }  \notag \\
 	&= \max\limits_{x \in \mathbf{R}} \frac{\| S \Pi^\bot V V^\top x\|_{2}}{\|L^{\sfrac{1}{2}} x\|_2 } \label{eq:aa1} \\
 	&= \max\limits_{x \in \image{L V}} \frac{\| S \Pi^\bot V V^\top L^{+\sfrac{1}{2}} x\|_{2}}{\|x\|_2 } \label{eq:aa2} \\
	&\leq \|S \Pi^\bot V V^\top L^{+\sfrac{1}{2}}\|_2 = \|\Pi^\bot B_0\|_L, \notag  
\end{align}
where equality~\eqref{eq:aa1} holds because $VV^\top$ is a projection onto $\mathbf{R}$, whereas equality~\eqref{eq:aa2} is true since $L$ is invertible within $\mathbf{R}$.

One should also note that, for the specific case where $V$ is an eigenspace of $L$, $\image{LV} = \textbf{R}$ and as such $\epsilon = \|\Pi^\bot x\|_{L} / \norm{x}_L = \|S \Pi^\bot V V^\top L^{+\sfrac{1}{2}}\|_2$ (once more w.l.o.g. the nullspace of $L$ can be ignored).

In addition, as the following technical lemma claims, in a multi-level scheme, any $\|\Pi^\bot x\|_L$ can be broken down into the contributions of each level:

\begin{lemma}
Define projection matrices $\Pi_\ell = P_{\ell}^+ P_\ell$ and $\Pi_\ell^\bot = I - \Pi_\ell$. If 
\begin{align}
    \| \Pi^\bot_\ell x_{\ell-1} \|_{L_{\ell-1}} \leq \sigma_\ell \, \| x_{\ell-1} \|_{L_{\ell-1}} 
    \quad \text{at each level } \ell \leq c, \notag
\end{align}
then the multi-level error is bounded by
\begin{align}
    \|\Pi^\bot x\|_L \leq \left( \sum_{\ell= 1}^c \sigma_\ell \prod_{q = 1}^{\ell-1} ( 1 + \sigma_q) \right) \|x\|_L = \left(\prod_{\ell=1}^{c} ( 1 + \sigma_\ell) - 1\right) \|x\|_L.\notag
\end{align}    
\label{lemma:leveling}
\end{lemma}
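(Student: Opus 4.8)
The plan is to prove the bound by a telescoping decomposition over levels that reduces the multi-level error $\|\Pi^\bot x\|_L$ to a sum of single-level errors, and then to control the growth of the relevant semi-norms level by level. Write $x_{\ell-1} = P_{\ell-1}\cdots P_1 x$ for the level-$(\ell-1)$ coarsened vector (with $x_0 = x$), and introduce the partially-lifted vectors $r_\ell = P_1^+\cdots P_\ell^+\, P_\ell \cdots P_1\, x$ for $\ell = 0,1,\dots,c$, so that $r_0 = x$ and $r_c = P^+ P x = \Pi x$. A direct computation gives $r_\ell = P_1^+\cdots P_{\ell-1}^+\, \Pi_\ell\, x_{\ell-1}$, hence $r_{\ell-1} - r_\ell = P_1^+\cdots P_{\ell-1}^+\, \Pi_\ell^\bot\, x_{\ell-1}$. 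Summing telescopically, $x - \Pi x = \sum_{\ell=1}^c (r_{\ell-1} - r_\ell)$, so the triangle inequality yields $\|\Pi^\bot x\|_L \le \sum_{\ell=1}^c \| P_1^+\cdots P_{\ell-1}^+\, \Pi_\ell^\bot\, x_{\ell-1}\|_L$.

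Next I would simplify each summand. Choosing $S$ with $L = S^\top S$, the PSD-closure identity $L_\ell = P_\ell^\mp L_{\ell-1} P_\ell^+$ gives $S_\ell = S_{\ell-1} P_\ell^+$ with $L_\ell = S_\ell^\top S_\ell$, and therefore $S P_1^+\cdots P_{\ell-1}^+ = S_{\ell-1}$. Consequently each summand equals $\|S_{\ell-1}\,\Pi_\ell^\bot\, x_{\ell-1}\|_2 = \|\Pi_\ell^\bot x_{\ell-1}\|_{L_{\ell-1}} \le \sigma_\ell\,\|x_{\ell-1}\|_{L_{\ell-1}}$ by hypothesis. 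It remains to bound $\|x_{\ell-1}\|_{L_{\ell-1}}$ in terms of $\|x\|_L$. For this I would show inductively that $\|x_\ell\|_{L_\ell} \le (1 + \sigma_\ell)\,\|x_{\ell-1}\|_{L_{\ell-1}}$: since $x_\ell^\top L_\ell x_\ell = (P_\ell^+ x_\ell)^\top L_{\ell-1} (P_\ell^+ x_\ell)$ and $P_\ell^+ x_\ell = P_\ell^+ P_\ell\, x_{\ell-1} = \Pi_\ell\, x_{\ell-1}$, we get $\|x_\ell\|_{L_\ell} = \|\Pi_\ell x_{\ell-1}\|_{L_{\ell-1}} = \|x_{\ell-1} - \Pi_\ell^\bot x_{\ell-1}\|_{L_{\ell-1}} \le (1 + \sigma_\ell)\|x_{\ell-1}\|_{L_{\ell-1}}$, again by the triangle inequality and the hypothesis. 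Unrolling this recursion gives $\|x_{\ell-1}\|_{L_{\ell-1}} \le \prod_{q=1}^{\ell-1}(1 + \sigma_q)\,\|x\|_L$.

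Combining the two bounds yields $\|\Pi^\bot x\|_L \le \big(\sum_{\ell=1}^c \sigma_\ell \prod_{q=1}^{\ell-1}(1+\sigma_q)\big)\,\|x\|_L$, and the elementary telescoping identity $\sigma_\ell \prod_{q=1}^{\ell-1}(1+\sigma_q) = \prod_{q=1}^{\ell}(1+\sigma_q) - \prod_{q=1}^{\ell-1}(1+\sigma_q)$ collapses the sum to $\prod_{\ell=1}^c(1+\sigma_\ell) - 1$, which is the claimed inequality. I expect the only mildly delicate point to be the pseudoinverse bookkeeping — in particular the fact that $\Pi_\ell$ is an orthogonal projection in the Euclidean norm but not in $\|\cdot\|_{L_{\ell-1}}$, which is precisely why the triangle inequality must be invoked at each level (producing the $(1+\sigma_\ell)$ factors rather than plain contractions) and why the error compounds multiplicatively rather than additively.
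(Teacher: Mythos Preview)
Your proof is correct and follows essentially the same approach as the paper: both arguments telescope $S_0 x_0 - S_c x_c$ (equivalently, your $S r_0 - S r_c$) into per-level terms $S_{\ell-1}\Pi_\ell^\bot x_{\ell-1}$, and both control $\|x_{\ell-1}\|_{L_{\ell-1}}$ by the recursion $\|x_\ell\|_{L_\ell} = \|\Pi_\ell x_{\ell-1}\|_{L_{\ell-1}} \le (1+\sigma_\ell)\|x_{\ell-1}\|_{L_{\ell-1}}$ via the triangle inequality. Your write-up is in fact slightly more explicit than the paper's, spelling out the identity $S P_1^+\cdots P_{\ell-1}^+ = S_{\ell-1}$ and the final telescoping collapse of the sum into the product form.
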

\begin{proof}
Recursively apply the following inequality
\begin{align}
    \norm{S_{\ell-1} \Pi^\bot_\ell x_{\ell-1}}_2 
    &\leq \sigma_\ell \norm{S_{\ell-1} x_{\ell-1}}_2 \notag \\
    &= \sigma_\ell \norm{S_{\ell-2} \Pi_{\ell-1} x_{\ell-2}}_2 \notag \\
    &\leq \sigma_\ell \left( \norm{S_{\ell-2} x_{\ell-2}}_2 + \norm{S_{\ell-2} \Pi_{\ell-1}^\bot x_{\ell-2}}_2\right) \notag \\
    &\leq \sigma_\ell \left( \norm{S_{\ell-2} x_{\ell-2}}_2 + \sigma_{\ell-1} \norm{S_{\ell-2} x_{\ell-2}}_2\right) = \sigma_\ell \, ( 1 + \sigma_{\ell-1}) \norm{S_{\ell-2} x_{\ell-2}}_2 \notag
\end{align}
to deduce that
\begin{align}
    \norm{S_{\ell-1} \Pi^\bot_{\ell} x_{\ell-1}}_2 
    &\leq \sigma_\ell \prod_{q = 1}^{\ell-1} ( 1 + \sigma_q) \| S_0 x_0\|_2 = \sigma_\ell \prod_{q = 1}^{\ell-1} ( 1 + \sigma_q) \|x\|_L. \notag
\end{align}

The end-to-end error $\| S \Pi^\bot x\|_2$ is controlled with a simple telescopic series argument. 
\begin{align}
    \| \Pi^\bot x\|_L =  \| S_0 \Pi^\bot x_0\|_2 
    &= \| S_0 x_0 - S_{c} x_{c} \|_2 \notag \\
    &\leq \| S_0 x_0 - S_{1} x_{1} \|_2 + \| S_1 x_1 - S_{2} x_{2} \|_2 + \ldots + \| S_{c-1} x_{c-1} - S_{c} x_{c} \|_2 \notag \\
    &=  \| S_0 \Pi_1^\bot x_{1} \|_2 + \| S_1 \Pi_1^\bot x_1 \|_2 + \ldots + \| S_{c-1} \Pi_c^\bot  x_{c-1} \|_2 \notag
\end{align}

Together, the above two results imply the desired bound.
%
\end{proof}

Therefore, to guarantee that in a multi-level scheme
\begin{align}
    \|\Pi^\bot B_0\|_L = \max_{b \in \Rbb^N} \frac{\|S \Pi^\bot B_0 \, b \|_2}{\|b\|_2}  \leq \epsilon, \notag
\end{align}
one needs to make sure that, for each level $\ell = 1, \ldots, c$, the following holds:
\begin{align}
    \frac{\|S_{\ell-1} \Pi_{\ell}^\bot x_{\ell-1}\|_2}{\|S_{\ell-1} x_{\ell-1}\|_2}  \leq \sigma_\ell, \quad \text{for all} \quad x_{\ell-1} = P_{\ell-1} \cdots P_{1} B_0\, b \notag
\end{align}
By the same argument used for the multi-level error, when $\ell=1$, we have that $\sigma_1 
= \| \Pi_1^\bot B_0\|_{L_{0}}$. For all other $\ell$, set $B_{\ell-1} = P_{\ell-1} \cdots P_{1} B_0$ and further let $(B_{\ell-1}^\top L_{\ell-1} B_{\ell-1})^{+\sfrac{1}{2}}$ be the pseudo-inverse of the matrix square-root of the $N\times N$ matrix $B_{\ell-1}^\top L_{\ell-1} B_{\ell-1}$. By the substitution $b = S_{\ell-1} B_{\ell-1} a$, the above can be rewritten as
\begin{align}
    \max_{b \in \Rbb^N} \frac{\|S_{\ell-1} \Pi_{\ell}^\bot B_{\ell-1} b\|_2}{\|S_{\ell-1} B_{\ell-1} b \|_2}
    = \max_{b \in \Rbb^N} \frac{\|S_{\ell-1} \Pi_{\ell}^\bot B_{\ell-1} (B_{\ell-1}^\top L_{\ell-1} B_{\ell-1})^{+\sfrac{1}{2}} b \|_2}{\|b \|_2}. \notag
\end{align} 
For $\ell>1$, therefore $\sigma_{\ell} = \|\Pi_{\ell}^\bot A_{\ell-1} \|_{L_{\ell-1}} $ with $A_{\ell-1} = B_{\ell-1} (B_{\ell-1}^\top L_{\ell-1} B_{\ell-1})^{+\sfrac{1}{2}}$.

\subsection{Proof of Proposition~\ref{proposition:decoupling}}


For notational simplicity in the context of this proof I drop level indices and assume that only a single coarsening level is used. Nevertheless, it should be stressed that this without loss of generality, as an identical argument holds for every level of the scheme.

Consider any $x$ and set $y = \Pi^\bot x$. Furthermore, define for each contraction set the (\textit{i}) \emph{internal} edge set $\E^{(r)} = \{ e_{ij} | v_i,v_j \in \V^{(r)}$ and $e_{ij} \in \E_{\ell-1} \}$ , and (\textit{ii}) the \emph{boundary} edge set $\partial\E^{(r)}$, such that if $e_{ij} \in \partial\E^{(r)}$ then $v_i \in \V^{(r)}$ and $v_j \notin \V^{(r)}$. It is true that 
\begin{align}
	\|\Pi^\bot x\|_{L}^2 
	&= \sum_{e_{ij} \in \E} w_{ij} (y(i) - y(j))^2 \notag \\ 
	&= \sum_{r = 1}^{n} \Bigg( \underbrace{\sum_{e_{ij} \in \E^{(r)}} w_{ij} (y(i) - y(j))^2}_{a_r}  + \frac{1}{2} \underbrace{\sum_{e_{ij} \in \partial\E^{(r)}} w_{ij} (y(i) - y(j))^2}_{b_r} \Bigg). \notag 
\end{align}
In the following, I will express $a_r$ and $b_r$ as a function of the vector $y_r = \Pi^\bot_{\V^{(r)}} x$.
Term $a_r$ is luckily independent of any other contraction set: 
$$
a_r = \sum_{e_{ij} \in \E^{(r)}} w_{ij} (y(i) - y(j) )^2 = \sum_{e_{ij} \in \E^{(r)}} w_{ij} (y_r(i) - y_r(j) )^2. 
$$ 
On the other hand, $b_r$ is smaller than 
\begin{align}
	b_r 
	&= \sum_{e_{ij} \in \partial\E^{(r)}} w_{ij} (y(i) - y(j))^2 
	\leq 2 \hspace{-3mm}\sum_{e_{ij} \in \partial\E^{(r)}} \hspace{-2mm} w_{ij} (y(i)-0)^2  + 2 \hspace{-3mm}\sum_{e_{ij} \in \partial\E^{(r)}} \hspace{-2mm} w_{ij} (0-y(j))^2. \notag
\end{align}
Distributing the second quantities, respectively, amongst the contraction sets that include said vertices, one gets
\begin{align}
	\|\Pi^\bot x\|_{L}^2 
	&\leq \sum_{r = 1}^{n} \Bigg( \sum_{e_{ij} \in \E^{(r)}} w_{ij} (y_r(i) - y_r(j) )^2 + 2 \sum_{e_{ij} \in \partial\E^{(r)}} w_{ij} (y(i)-0)^2 \Bigg) \notag \\
	&\hspace{0mm}= \sum_{r = 1}^{n} \Bigg( \sum_{e_{ij} \in \E^{(r)}} w_{ij} (y_r(i) - y_r(j) )^2 + \sum_{e_{ij} \in \partial\E^{(r)}} (2w_{ij}) (y_r(i)-y_r(j))^2 \Bigg) \notag \\
	&= \sum_{r = 1}^{n} \|y_r\|_{L_{\V^{(r)}}}^2
	= \sum_{\mathcal{C} \in \mathcal{P}} \|y_r\|_{L_{\mathcal{C}}}^2 . \notag
\end{align}
The second step above used the fact that $[\Pi^\bot](i) = 0$ for all $v_i \notin \mathcal{C}$.
A decoupled bound can then be obtained as follows:
\begin{align*}
\| \Pi^\bot A \|_{L}^2 
    = \max_{a \in \Rbb^{k-1}} \frac{\| S \Pi^\bot A \, a\|_2^2}{\|a\|_2^2} 
    &\leq \sum_{\mathcal{C} \in \mathcal{P}} \max_{a \in \Rbb^{k-1}} \frac{\| \Pi^\bot_{\mathcal{C}}\, A \, a\|_{L_{\mathcal{C}}}^2 }{\|a\|_2^2} = \sum_{\mathcal{C} \in \mathcal{P}} \| \Pi^\bot_{\mathcal{C}}\, A\|_{L_{\mathcal{C}}}^2
\end{align*}
The final inequality is derived by taking the square-root of the last equation.

\section{Complexity analysis}
\label{app:complexity}
%
The computational complexity of Algorithm~\ref{algorithm:multi-level} depends on the number of nodes $N$ and edges $M$ of $G$, the number of levels $c$, the subspace size $k$, as well as on how the families of candidate sets are formed. To derive worst-case bounds, I denote by $\Phi_\ell = \sum_{\C \in \mathcal{F}_{\ell}} |\C|$ the number of vertices in all candidate sets and by $\delta = \max_{\ell, \C \in \mathcal{F}_{\ell}} |\C|$ the cardinality of the maximum candidate set over all levels. Furthermore, I suppose that the per-level reduction ratio $r_\ell$ is a constant.

I start with some basic observations:
\begin{itemize}
    \item \emph{Computing $A_{0}, \ldots, A_{c-1}$ is possible in $\tilde{O}(ckM + k^2 N + ck^3)$ operations when $V=U_k$.}
    Each $A_{\ell-1}$ is computed once for each level. For $\ell=1$, one needs to approximate the first $k$ eigenpairs of $L$, which can be achieved in $\tilde{O}(kM)$ operations using inverse iteration as described by~\citet{vishnoi2013lx}. 
    For consecutive levels, forming matrix $B_{\ell-1}^\top L_{\ell-1} B_{\ell-1}$ takes $O(M_{\ell-1} k + N_{\ell-1} k^2)$ operations, whereas computing the pseudo-square-root $(B_{\ell-1}^\top L_{\ell-1} B_{\ell-1})^{+\sfrac{1}{2}}$ is possible in $O(k^3)$ operations. Summed up, the costs for all levels amount to $O( k \sum_{\ell=1}^{c} M_{\ell-1} + k^2 \sum_{\ell=2}^{c} N_{\ell-1} + c k^3) = O(ckM + k^2 N + ck^3) $, where I used the observation that $\sum_{\ell=2}^{c} N_{\ell-1} = O(N)$. 
    \item \emph{At each level, the cost function is evaluated at most $ \Phi_{\ell}$ times.}
    One starts by computing the cost of each candidate set in $\mathcal{F}_{\ell}$. Moreover, every $\C$ added to $\mathcal{P}_{\ell}$ causes the pruning of at most $\sum_{v_i \in \C} (\phi_i-1)$ other sets, where $\phi_i$ is the number of candidate sets that include $v_i$. Since $\mathcal{P}_{\ell}$ is a partitioning of $\mathcal{V}_{\ell-1}$, at most $\sum_{\C \in \mathcal{P}_{\ell}}\sum_{v_i \in \C} (\phi_i-1) \leq \sum_{v_i \in \mathcal{V}_{\ell-1}} \phi_i - |\mathcal{F}_{\ell}| = \Phi_\ell - |\mathcal{F}_{\ell}|$ cost re-evaluation are needed.

    \item \emph{Given $A_{\ell-1}$, each call of $\text{cost}_{\ell}(\C)$ requires $O(\min\{ k^2 \delta + k \delta^2,\ k\delta^2 + \delta^3\})$ operations.}
    The involved matrices themselves can be easily formed since, excluding all-zero rows and columns, both $L_{\C}$ and $\Pi_{\C}^\bot$ are $ |\C| \times |\C|$ matrices and one can safely restrict $A_{\ell-1}$ to be of size $|\C| \times k$ by deleting all rows that would have been multiplied by zero.
    Now, by definition, the incidence matrix $S_{\C}$ of $L_{\C}$ has at most $\delta$ columns and $2\delta$ rows (since one can bundle all boundary weights of a vertex in $\C$ in a single row). Depending on the relative size of $ k$  and $\delta$ the computation can be performed in two ways: 
    \begin{itemize}
    \item Either one forms the $k\times k$ matrix $A_{\ell-1}^\top \Pi_{\C}^\bot L_{\C} \Pi_{\C}^\bot \, A_{\ell-1}$ and approximate its spectral norm paying a total of $O(k^2 \delta + k \delta^2)$.      
    
    \item Otherwise, the $2d\times 2d$ matrix $S_{\C} \Pi_{\C}^\bot \, A_{\ell-1} A_{\ell-1}^\top \Pi_{\C}^\bot S_{\C}^\top $ is formed and its norm is computed at a combined cost of $O(\delta^2 k + \delta^3)$.
    \end{itemize}
    
    \item \emph{Maintaining $\mathcal{F}_{\ell}$ sorted incurs $O(\Phi_\ell \log{|\mathcal{F}_{\ell}|})$ cost.} Sorting $\mathcal{F}_{\ell}$ during initialization entails $O(|\mathcal{F}_{\ell}|\log{|\mathcal{F}_{\ell}|})$ operations. Inserting each $\C'$ into $\mathcal{F}_{\ell}$ (see step~\ref{algorithm:prunning}) can be done in $O(\log{|\mathcal{F}_{\ell}|})$ and, moreover, by the same argument used to bound the number of cost evaluations, at most $\Phi_\ell - |\mathcal{F}_{\ell}|$ such insertions can happen.
    
    \item \emph{Other operations carry negligible cost.} In particular, by implementing \textsf{marked} as a binary array, checking if a vertex is marked or not can be done in constant time.
\end{itemize}

Overall, using Algorithm~\ref{algorithm} and for $\mathbf{R}=\mathbf{U}_k$ one can coarsen a graph in $\tilde{O}(ckM + k^2N + ck^3 + \sum_{\ell = 1}^c  \Phi_{\ell} ( \min \{ k^2 \delta + k \delta^2,\ k\delta^2 + \delta^3\} + \log{|\mathcal{F}_{\ell}|}) )$ time, where the asymptotic notation hides poly-logarithmic factors.


\end{document}